\newcommand{\citetodo}[1]{\textcolor{red}{\cite{??}}\todo[color=orange]{\tiny Missing reference}}
\newtheorem{theorem}{Theorem}[section]
\newtheorem{lemma}{Lemma}[section]
\newtheorem{corollary}{Corollary}[section]
\newtheorem{proposition}{Proposition}[section]
\newtheorem{claim}{Claim}[section]
\newcommand{\toto}{xxx}
\newcommand{\m}[1]{\ensuremath{\mathcal{#1}}}
\algnewcommand\algorithmiccase{\textbf{case}}
\algnewcommand\StateCase[1]{\State\hphantom{x}\ #1 \algorithmicthen} 
\algnewcommand{\IfL}[1]{\State\algorithmicif\ #1\ \algorithmicthen}
\algnewcommand{\EndIfL}{\unskip\ \algorithmicend\ \algorithmicif}
\newcommand*\patchAmsMathEnvironmentForLineno[1]{%
  \expandafter\let\csname old#1\expandafter\endcsname\csname #1\endcsname
  \expandafter\let\csname oldend#1\expandafter\endcsname\csname end#1\endcsname
  \renewenvironment{#1}%
     {\linenomath\csname old#1\endcsname}%
     {\csname oldend#1\endcsname\endlinenomath}}%
\newcommand*\patchBothAmsMathEnvironmentsForLineno[1]{%
  \patchAmsMathEnvironmentForLineno{#1}%
  \patchAmsMathEnvironmentForLineno{#1*}}%
\title{The Computational Power of Distributed Shared-Memory Models with Bounded-Size Registers\thanks{All authors are supported by the ANR Project DUCAT (ANR-20-CE48-0006)}}
\author{
Carole Delporte\\
{\small IRIF}\\
{\small Université Paris Cité and CNRS}\\
{\small France}
\and
Hugues Fauconnier\\
{\small IRIF}\\
{\small Université Paris Cité and CNRS}\\
{\small France}\and
Pierre Fraigniaud\\
{\small IRIF}\\
{\small Université Paris Cité and CNRS}\\
{\small France}\and
Sergio Rajsbaum\\
{\small Instituto de Matematicas}\\
{\small UNAM}\\
{\small Mexico}
\and
Corentin Travers\\
{\small LIS}\\
{\small Aix-Marseille Univ. and CNRS}\\
{\small France}
}
\begin{document}
\date{}
\maketitle

\begin{abstract}
The celebrated  \emph{Asynchronous Computability Theorem} of Herlihy and Shavit (STOC~1993 \& STOC~1994) provided a topological characterization of the
tasks that are solvable in a distributed  system where processes are communicating by writing and reading shared registers, and where any number of processes can fail by crashing. 
However, this characterization  assumes the use of \emph{full-information} protocols,  that is, protocols in which each time any of the  processes writes in the shared memory, it communicates everything it learned  since the beginning of the execution. 
Thus, the characterization  implicitly assumes that each register in the shared memory is of \emph{unbounded} size. 
Whether unbounded size registers
are unavoidable for the model of computation to be \emph{universal} is the central question studied in this paper. Specifically, is any task that is solvable using unbounded registers solvable using registers of bounded size?
More generally, when at most $t$ processes can crash, is the model with bounded
size registers universal? These are the questions answered in this paper. 

We show that the two extreme models, i.e., the 1-\emph{resilient} model (at most one process can crash), and the \emph{wait-free} model (any number of processes can crash) behave quite differently. Indeed, \emph{constant} size registers are universal for 1-resilient computation, whereas constant size registers are not universal for wait-free computation (with more than two processes). In fact, wait-free computing with bounded registers is not universal even if the size of the registers  may depend on the number $n$ of processes (for $n>2$). 
The more refined picture is as follows.  If the number $t$ of failures is assumed to be less than half of the number of processes, then registers of size $O(t)$ bits are sufficient to solve any task that is solvable by a full-information protocol. 
 The situation is radically different when more than half of the processes may crash,
 which includes the wait-free case, 
 as we show that bounded registers are \emph{not} universal. Specifically, we show that, for any bound $f:\mathbb{N}\to \mathbb{N}$, and for any $n>2$, there exists a task that cannot be solved by $n$ processes communicating via registers of size $f(n)$ bits,
 but is solvable by $n$ processes using unbounded registers.
Finally, we consider the \emph{iterated} shared memory model, which has played a crucial role since Herlihy and Shavit's characterization.
We show that 1-bit registers (per iteration)  are sufficient for solving wait-free any task that is solvable by a full-information wait-free protocol. Therefore, while the wait-free read/write shared-memory model  is known to be computationally equivalent to the wait-free iterated immediate snapshot (IIS) model, this equivalence only holds if registers are unbounded. 
 \end{abstract}

\thispagestyle{empty}
\newpage
\setcounter{page}{1}


\section{Introduction}
\label{sec:intro}

\subsection{Objective}

In sequential computing, universal models such as the Turing machine were identified from the 
early days of modern computer science. 
In \emph{distributed} computing, the power of a model may however depend
on the reliability of its components, and on other assumptions. Even for the most
basic standard asynchronous read/write \emph{shared memory} model, where processes may crash,
the problem is not fully resolved. In this model, the system is composed of $n$ processes that are assumed to 
be \emph{asynchronous}, meaning that 
each one runs at its own  speed, which may evolve with time. A process is a deterministic state machine
that may fail by \emph{crashing}, i.e. stopping its execution at any moment.
Each process has a  \emph{register} in the shared memory, in which it can write information that 
can be read by all other processes. Such a register is thus called single-writer/multiple-reader (SWMR).

The SWMR registers forming the shared memory, one for each process, are supposed to be of unbounded size, i.e., there are no a priori limit on the amount of information that can be stored in a register~\cite{AttiyaLS94,HerlihyS99}.
This  is appropriate for studying the limitations of distributed computing. For example, the
\emph{consensus} task is not {wait-free} solvable~\cite{Herlihy91,LA87}, where 
an algorithm is said to be \emph{wait-free} if it tolerates crash failures by  any  set of at most $n-1$ processes. 
The terminology ``wait-free''  comes from the fact that a (correct) process must not ``wait'' for other processes, merely because any of them may be slow, but may also have crashed. It is actually known 
that consensus is impossible even if at most one process can crash~\cite{FischerLP85}.
This illustrates that distributed computing experiences inherent coordination limitations, as consensus is impossible even 
if each process is an infinite state (deterministic) machine, and even if each shared register is of unbounded size.

The celebrated \emph{asynchronous computability theorem}~\cite{HerlihyS99} provided a \emph{topological} characterization of the
tasks that are wait-free solvable in a distributed  system.
This result assumes \emph{full-information} protocols, that is, protocols in which each time   any of the $n$  processes writes in its   register, it writes everything it learned  since the beginning of the execution, and each process remembers all the values it has read from the registers of the other processes. 
 The characterization naturally assumes that each register in the shared memory  is of unbounded size
because  there are tasks that require arbitrary many read and write operations to be solved, and it is desirable that information previously written by a process in its register is not erased by later writes, as this information may not have been read yet by other processes. 
We consider the wait-free shared-memory model 
because it is at the center of the distributed computability theory. In particular, for every number $n>2$ of processes, it is undecidable whether a task is wait-free solvable or not~\cite{GafniK99,HerlihyR97}.
The  topological properties of the wait-free model can often be generalized to
more powerful models (e.g., $t$-resilient models), in more tasks can be solved.

The question addressed in this paper is the following: 
Are there distributed tasks that can be solved in the wait-free shared-memory
model assuming unbounded registers, but that cannot be solved in 
the wait-free shared-memory model whenever the size of the registers is bounded?
That is, we question the existence of a \emph{universal} model for 
wait-free shared-memory computing,  with bounded-size registers. 
We answer this question for the wait-free model, and more generally, for 
the \emph{$t$-resilient} version of the model,
where the number of processes that can crash during any execution is at 
most~$t$, for a fixed $t\in\{1,\dots,n-1\}$. 
In other words, the purpose of this article is to answer the following question:

\begin{center}
\begin{minipage}{14cm}
For which degree of crash resiliency the asynchronous read/write shared-memory model with bounded-size registers remains universal? 
\end{minipage}
\end{center}

\subsection{Our Results}

In a nutshell, we  show that bounded size registers yield a  universal model for read/write shared-memory computing only when less than half of the processes may fail. More specifically,  
 we first show that wait-free computing with bounded registers is not universal, for $n>2$ processes. 
  This actually holds even for the $t$-resilient model whenever $t>\frac{n}{2}$. Moreover, the result holds in a strong sense, that is, even if one allows the size of the registers to depend on the number $n$ of processes in the system. 

\begin{theorem}\label{main:wait-free-not-universal}
For any function $f:\mathbb{N}\to\mathbb{N}$, for any $n>2$, and for any $t$ with $\frac{n}{2} < t < n$, there exists a task  that is solvable by $n$ processes in the $t$-resilient shared-memory model with unbounded-size registers, but that cannot be solved by $n$ processes in the $t$-resilient shared-memory model with registers of at most $f(n)$ bits. 
\end{theorem}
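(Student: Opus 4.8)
The plan is to exhibit a task whose solvability with bounded registers would force an unbounded amount of information to flow through a register that is only read finitely often, and then to derive a contradiction via a pigeonhole/indistinguishability argument. Fix $f:\mathbb N\to\mathbb N$ and $n>2$, and fix $t$ with $n/2<t<n$. The key structural fact I would exploit is that when $t>n/2$, the system can partition into two groups $A$ and $B$ with $|A|,|B|\le t$, so that in some execution the processes of $A$ run to completion seeing nothing of $B$, and symmetrically for $B$; moreover there are executions in which $A$ and $B$ each take arbitrarily many steps ``in isolation'' before any cross-group communication occurs. The task I would design is one where each process must eventually output a value that certifies agreement about a long input string held by the other side — for instance, a variant of a ``retrospective'' or ``long-lived approximate agreement''-style task in which correct outputs encode, up to the moment of first contact, the entire transcript produced by the opposing group. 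Because $|A|\le t$ and $|B|\le t$, both solo runs are legal $t$-resilient executions, so a correct protocol must handle them; but a correct protocol must also handle the execution in which, after long solo phases, a process of $A$ reads $B$'s registers exactly once and must thereafter behave consistently with $B$'s entire solo transcript.

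The heart of the argument is a counting step. Consider the finite set $S_k$ of distinct ``solo transcripts'' that group $B$ can produce in its first $k$ rounds, parametrized by $k$-bit inputs; the number of relevant equivalence classes grows without bound in $k$ (here I would make ``equivalence class'' precise: two executions of $B$ are distinguished if a correct protocol must eventually force distinct outputs on some process of $A$ after contact). With registers of at most $f(n)$ bits, the joint contents of $B$'s registers at the moment of contact lie in a set of size at most $2^{f(n)\cdot|B|}\le 2^{f(n)\cdot t}$, a constant independent of $k$. Choosing $k$ large enough that $|S_k| > 2^{f(n)\cdot t}$, two solo transcripts of $B$ that a correct protocol must treat differently produce identical register contents at contact. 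Now splice: run $A$ in isolation, run $B$ in isolation along either of the two colliding transcripts, then let $A$ read $B$'s registers and continue; the two resulting global executions are indistinguishable to every process of $A$ from the moment of contact onward, and also indistinguishable to every process of $B$ (they never hear from $A$ again, or if they do it is also spliced identically). Hence all processes output the same values in both executions, contradicting that the two transcripts require different outputs. That contradiction shows no $f(n)$-bit protocol solves the task.

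Finally I would verify the positive half: the task is solvable with unbounded registers. This is the routine direction — a full-information protocol lets each process write its entire history, so upon contact a process of $A$ literally reads $B$'s whole transcript and can compute whatever certificate the task demands; termination follows because the task only requires finitely many operations once contact (or the decision that no contact will come, forced by $t$-resilience and the solo-runnability of each group) has occurred. One must be slightly careful specifying the task so that it is genuinely solvable wait-free/$t$-resiliently with unbounded registers despite the possibility that contact never happens; the standard device is to let the output in the ``no contact'' case be any default consistent value, and to use the $t$-resilience bound ($|A|,|B|\le t$) to guarantee each group may legitimately decide alone.

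The main obstacle I anticipate is \emph{defining the task cleanly} so that (i) it is provably $t$-resiliently solvable with unbounded registers, (ii) correctness genuinely forces distinct post-contact outputs for the super-polynomially many solo transcripts, so that the pigeonhole bite is unavoidable, and (iii) the indistinguishability splicing is airtight for \emph{both} groups simultaneously, not just for the reading group $A$ — ensuring that the processes of $B$, and any later interleaving, cannot detect which transcript was run. Getting the quantifiers right (the task must be a single fixed task depending only on $n$ and $t$, not on $k$, with $k$ chosen inside the proof as a function of $f(n)$ and $t$) is the delicate bookkeeping that makes the theorem as strong as stated.
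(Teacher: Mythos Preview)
Your high-level architecture --- split the processes into two groups each of size at most $t$, run one group to decision, pigeonhole on the $\le 2^{f(n)\cdot t}$ possible register contents, then wake the other group and argue indistinguishability --- is exactly the skeleton of the paper's proof. Where you diverge, and where the gap lies, is in the choice of task and the mechanism that forces ``many distinguishable situations.''

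You propose to manufacture the many situations via \emph{many inputs}: give group $B$ a $k$-bit input and design a task whose output must certify that input. This is where your own caveats bite. First, you never actually define the task, and your sketch (``outputs encode the entire transcript of the opposing group'') is hard to make simultaneously (i) $t$-resiliently solvable when $B$ crashes before taking a step and (ii) output-forcing when $B$ runs to completion. Second, even granting a definition, with large inputs you must worry about the input-register convention: if the model grants write-once input registers (as the paper does for its positive results), then $A$ can simply read $B$'s input directly and your pigeonhole collapses; if it doesn't, you must argue carefully that $B$ cannot stream its input through bounded registers over many rounds, which is not obvious since $A$ may read continuously.

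The paper sidesteps all of this by taking the task to be binary $\epsilon$-agreement for $\epsilon=1/k$ with $k=2\cdot 2^{f(n)(n-t+1)}+1$. The positive side is then a one-line citation. The ``many situations'' are not many inputs but many \emph{output pairs}: the paper shows (Claim~\ref{clm:Ol} in the paper) that for every $\ell\in\{0,\dots,k-1\}$ there is an execution of the small group $\{1,\dots,n-t+1\}$ alone whose output set is exactly $\{\ell/k,(\ell+1)/k\}$. The proof of that claim is the idea you are missing: if some output pair were unreachable, the small group could solve $1$-resilient binary consensus by thresholding at $\ell/k$, contradicting FLP. Once you have $\Theta(k)$ pairwise-incompatible output sets realized by the small group, your pigeonhole and splicing go through verbatim, and the inputs stay binary throughout --- so no input-register issues arise. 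In short: keep your skeleton, replace the bespoke transcript task by $\epsilon$-agreement, and replace ``many inputs'' by ``many forced outputs via the consensus-impossibility reduction.''
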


We stress that the task whose existence is established in Theorem~\ref{main:wait-free-not-universal}  has binary inputs, i.e.,  each process starts with an input values in $\{0,1\}$. The non solvability of the task is therefore not due to large-size inputs that the processes may need to write to small registers. Theorem~\ref{main:wait-free-not-universal} holds for every $n>2$. The case $n=2$, which is the smallest distributed system one can conceive, and where 1-resilient computing and wait-free computing coincide, actually displays a radically different behavior, as we show the following.

\begin{theorem}\label{theo:case-of-two-proc}
Any task solvable by 2 processes in the wait-free shared-memory model with unbounded-size registers can be solved  by 2 processes  in the wait-free shared-memory model with 1-bit registers. 
\end{theorem}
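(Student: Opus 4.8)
The plan is to reduce to the \emph{iterated immediate snapshot} (IIS) model, where constant-size views suffice, and then to implement IIS using a constant number of $1$-bit registers. Fix a task $T=(I,O,\Delta)$ with finite input and output complexes that is wait-free solvable with unbounded registers. By the asynchronous computability theorem \cite{HerlihyS99}, in its iterated/chromatic form (using finiteness of $I$ and $O$), there is a constant $R$ and a chromatic simplicial map $\phi:\mathrm{Ch}^R(I)\to O$ carried by $\Delta$, where $\mathrm{Ch}$ denotes the standard chromatic subdivision; equivalently, running $R$ asynchronous rounds of IIS and then applying $\phi$ solves $T$. The key gain is that $R$ depends only on $T$, so in each round $k$ a (two-process) IIS view ranges over a \emph{fixed finite} set $\mathcal V_k$.

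Next I exhibit a concrete read/write implementation of the $R$-round IIS protocol for two processes whose protocol complex, restricted to every input edge $e$, is exactly the path $\mathrm{Ch}^R(e)$. In round $k$ each process writes its current round-$k$ view into its own fresh register $A_k[\mathrm{id}]$ and then reads $A_k[\overline{\mathrm{id}}]$; its round-$(k{+}1)$ view is the round-$k$ view together with what it read (a view of the other, or $\bot$). A one-line argument on the order of the write and the read inside a round shows that the only outcomes of round $k$ on an edge are: $p_0$ sees only itself while $p_1$ sees both; both see both; or $p_1$ sees only itself while $p_0$ sees both --- it can never happen that both processes see only themselves --- which is precisely the $3$-fold subdivision defining $\mathrm{Ch}$. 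Hence the complex after $R$ rounds is $\mathrm{Ch}^R$, $\phi$ yields a correct decision, and the protocol is wait-free since each process takes $O(R)=O(1)$ steps and never waits.

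The crucial step is to replace each $O(1)$-bit register $A_k[i]$ by $1$-bit registers. Spreading the bits of a view over several $1$-bit registers is unsafe, since a reader can catch the writer mid-update and read a \emph{torn}, meaningless value. I avoid this with a one-hot, write-once encoding: for each round $k$ and process $i$ allocate $|\mathcal V_k|$ one-bit registers, one per possible view, and when $i$ enters round $k$ with view $v$ it sets --- once and for all --- the single register indexed by $v$. A reader reads all $|\mathcal V_k|$ of them: if none is set it returns $\bot$, otherwise the unique set register names the writer's view. Since the writer sets at most one of these bits, a read returns either $\bot$ --- which legitimately models the observation that the other process was not seen in round $k$, already a valid IIS outcome --- or the correct view, never garbage, so the analysis above is unchanged. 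The total number of $1$-bit registers used is $\sum_{k\le R}2\,|\mathcal V_k|=O(1)$.

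I expect the last step to be the main obstacle, precisely because with genuinely single-bit storage one cannot fall back on sequence numbers or a growing validity tag, and one must be certain that the replacement introduces no execution whose collection of views is impossible in true IIS; the write-once one-hot encoding is what makes this transparent, since a single write-once bit is trivially atomic and $\bot$ is already a legal observation. Two further points need routine care: justifying that $R$ may be taken finite (finiteness of $I,O$ together with the standard form of the computability theorem), and checking that the plain write-then-read round realizes $\mathrm{Ch}$ exactly rather than some coarser or incomparable subdivision. An alternative route avoids IIS altogether: use the two-process combinatorial characterization of solvability --- for each input edge, a $\Delta$-carried walk in $O$ joining the two solo decisions --- and implement an \emph{approximate agreement along the walk} protocol over one-hot $1$-bit registers; but then one must manage the asynchronous stopping rule and a parity argument guaranteeing that the two outputs land on genuinely adjacent vertices of the walk, complications the IIS route sidesteps.
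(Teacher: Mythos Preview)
Your argument is correct and establishes that every wait-free solvable two-process task can be solved using a constant, task-dependent number of write-once $1$-bit SWMR registers per process (equivalently, one register of $O_T(1)$ bits per process in the paper's one-register convention). The one-hot encoding is sound: each writer touches exactly one bit per round, so a reader sees either that bit or nothing, and the write-before-read discipline within a round still forbids both processes from returning~$\bot$ simultaneously, so the simulated complex is exactly $\mathrm{Ch}^R$ and $\phi$ applies.

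The paper, however, takes precisely the route you sketch only to dismiss in your final paragraph. It first designs a direct $\epsilon$-agreement protocol that uses a \emph{single} $1$-bit communication register per process --- each process alternately writes $0$ and $1$ and exits when the other's bit fails to flip --- and then plugs this into the Biran--Moran--Zaks universal construction, which reduces an arbitrary solvable task to $\epsilon$-agreement along a path in the output graph. The payoff is that the register size is a small universal constant (one bit for $\epsilon$-agreement itself, three bits once the subroutine's binary input is folded in), independent of the task, whereas your bound $\sum_{k\le R}|\mathcal V_k|$ grows at least like $3^R$ with the number~$R$ of IIS rounds required by~$T$. Your route is cleaner and close in spirit to the paper's separate treatment of the iterated model (Theorem~\ref{theo:iterated-immediate-snapshot}), which it effectively specialises to $n=2$ by exploiting that $R$ is finite; the paper's route is what supports the literal ``$1$-bit'' in the statement, at the cost of the alternating-bit insight and the stopping/parity analysis you were hoping to avoid.
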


Theorem~\ref{theo:case-of-two-proc} holds  for arbitrary tasks, even those with arbitrarily large inputs, by assuming that each of the two processes has access to a special write-once register in which it can write its input, which can  be read by the other process. 
These registers are solely used for exchanging inputs, and cannot be used for coordination between the processes, which is the core of  the  distributed computation. Using $1$-bit registers however comes at the price of an exponential slowdown compared to using unbounded registers. Nevertheless, we also show that this overhead can actually be eliminated when using constant-size registers, just that the constant needs to be slightly larger than~1.

We show that the threshold in Theorem~\ref{main:wait-free-not-universal} regarding the maximum number~$t$ of failures is essentially the best one can expect. Indeed, for $t<\frac{n}{2}$, registers of bounded size are universal, as shown by the following theorem.
 
 \begin{theorem}\label{theo:small-values-of-t}
 There exists $\beta\geq 1$ such that, for any $n\geq 2$, and any $t$ with $1 \leq t < \frac{n}{2}$, every task solvable by $n$ processes in the $t$-resilient shared-memory model with unbounded-size registers can be solved  by $n$ processes  in the $t$-resilient shared-memory model with registers of $\beta\, t$ bits.  
\end{theorem}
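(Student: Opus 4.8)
The plan is to show that, when $t<\frac{n}{2}$, an atomic single-writer/multiple-reader register of \emph{arbitrary} size can be emulated, in a $t$-resilient way, out of SWMR registers of only $O(t)$ bits, and then to run the given full-information protocol on top of this emulation, register by register. The reason $t<\frac n2$ helps is the quorum intersection property: every process can always collect values from a \emph{quorum} of $n-t$ processes (since at most $t$ crash), and, as $t<\frac n2$, any two such quorums intersect because $2(n-t)>n$. The emulation then works in the spirit of the Attiya--Bar-Noy--Dolev quorum technique combined with bounded-tag register constructions: to ``write'' a large value $v$, the writer cuts $v$ into $O(t)$-bit chunks and publishes them one at a time in its $O(t)$-bit register, each chunk carrying a short flow-control header and a \emph{recyclable} version tag; readers relay the chunks they have seen, so every published value eventually reaches a quorum; the writer moves to the next chunk, and is allowed to recycle a version tag, only after a quorum has echoed the current one; and a reader adopts $v$ only once it has assembled a full, consistent set of chunks reported by such a quorum. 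Quorum intersection is exactly what makes the bounded tags unambiguous (hence gives \emph{atomic}, not merely regular, semantics), and the bound $t$ on the number of crashes is exactly what guarantees that every ``wait for a quorum'' terminates, so the emulation is $t$-resilient and non-blocking.

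Running the full-information protocol $P$ (which, by the hypothesis of the theorem, solves the task $T$ with unbounded registers) on top of these emulated registers yields a protocol $P'$ using only registers of $\beta t$ bits. One then argues that $P'$ is correct: a process that is merely slow in $P'$ is merely slow in the emulated execution of $P$, and a process that crashes in $P'$ crashes in the emulated execution; hence every $t$-resilient execution of $P'$ simulates a legal $t$-resilient execution of $P$, and the decisions are valid outputs for $T$. Note that the unbounded number of communication rounds of a full-information protocol is not an obstacle, because a process' round counter and accumulated view are \emph{local} state, which is unrestricted; only the shared registers are bounded, and each round publishes a view of finite (though growing) size, which the emulation pipelines through $O(t)$-bit chunks. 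Arbitrarily large input domains are handled exactly as in Theorem~\ref{theo:case-of-two-proc}, via per-process write-once input registers (equivalently, by streaming each input through the same chunking mechanism in the first round). One may, if one prefers, first put $P$ into a bounded-round ``write state / collect from $n-t$ processes / decide'' normal form using the topological characterization of $t$-resilient solvable tasks, but this is not needed for the argument.

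The step I expect to be the main obstacle is keeping \emph{all} the emulation's bookkeeping --- the flow-control headers, the acknowledgement/echo information, the version tags, and any addressing of processes --- within $O(t)$ bits, i.e.\ independent both of $n$ and of the protocol $P$, rather than the naive $\Theta(\log n)$ or more. This is what forces the use of a constant (or $O(t)$) number of recyclable tags with a quorum-echo recycling rule, and a time-multiplexed, schedule-based acknowledgement scheme in place of explicit per-process acknowledgement vectors; the correctness of tag recycling, and the linearizability (as opposed to mere regularity) of the emulated register needed because $P$ relies on atomic snapshots, are precisely the places where the inequality $2(n-t)>n$ is used in full strength. As a consistency check, the whole construction must break down exactly at $t=\lceil n/2\rceil$: there, quorums of $n-t$ processes need not intersect, a writer can never learn that its value has been installed, no bounded set of tags can be recycled safely, and no bounded handshake can work --- which is consistent with the impossibility established in Theorem~\ref{main:wait-free-not-universal}.
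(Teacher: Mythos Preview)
Your approach is genuinely different from the paper's, and the step you flag as ``the main obstacle'' is in fact a gap that the proposal does not close.

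\textbf{What the paper does.} The paper never tries to emulate large registers directly with quorum echoes. Instead it routes through message passing: (1)~since $t<n/2$, the ABD simulation turns the shared-memory algorithm $\m{A}$ into a $t$-resilient message-passing algorithm $\m{A}'$ on the complete network; (2)~$\m{A}'$ is then run on a \emph{sparse} $(t+1)$-connected network, a ``$t$-augmented ring'' in which every process has exactly $t+1$ out-neighbours and $t+1$ in-neighbours, with messages to non-adjacent processes forwarded along the ring (any $t$ crashes leave the graph connected, so delivery is guaranteed); (3)~each of the $2(t+1)$ point-to-point links incident to a process is simulated by the alternating-bit protocol, which needs a $2$-bit send field and a $1$-bit ack field per link. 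Packing these $2(t+1)$ constant-size link registers into one SWMR register gives $3(t+1)=O(t)$ bits. The crucial idea is step~(2): reducing the \emph{degree} of the communication pattern to $O(t)$ is exactly what makes the register size $O(t)$, independent of~$n$.

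\textbf{Where your approach stalls.} Your emulation keeps the all-to-all communication pattern: to certify a chunk, the writer must collect echoes from a quorum of $n-t$ processes, and each process must be able to echo on behalf of any of the $n$ writers. With only $O(t)$ bits in its register, a process can carry at most a constant number of echoes at a time, so it must somehow indicate \emph{which} writer it is currently echoing --- that alone is $\Theta(\log n)$ bits --- or else the writer must recognise its own chunk among coinciding values, which fails. Your suggested fix, a ``time-multiplexed, schedule-based acknowledgement scheme,'' is not a construction: in an asynchronous system there is no common schedule, and a writer reading registers at arbitrary times has no guarantee of ever catching a given reader during the slot dedicated to it. More generally, any pairwise handshake (alternating bit or otherwise) with all $n-1$ peers already forces $\Omega(n)$ bits in a process's register; there is no known way to recycle bounded tags across $n$ concurrent emulations with only $O(t)$ bits of echo state, and you have not supplied one. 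The linearizability concern you raise is secondary; the primary obstruction is that the bookkeeping budget is $O(t)$ while the fan-out is $n$.

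The paper's detour through a degree-$(t+1)$ network is precisely the missing idea: it replaces the $n$-way handshake by $O(t)$ independent two-party handshakes, each of constant size, and that is why the bound comes out as $O(t)$ rather than $O(n)$ or $O(t\log n)$.
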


Note that the constant $\beta$ does not depend on the task,  neither on the number of processes. 
 In particular, every task solvable by $n$ processes in the $1$-resilient shared-memory model with unbounded-size registers can be solved  by $n$ processes  in the $1$-resilient shared-memory model with constant size registers. 
  Figure~\ref{fig:summary} summarizes all these results. 
 
 \begin{figure}[tb]
 \centering
\includegraphics[width=10cm]{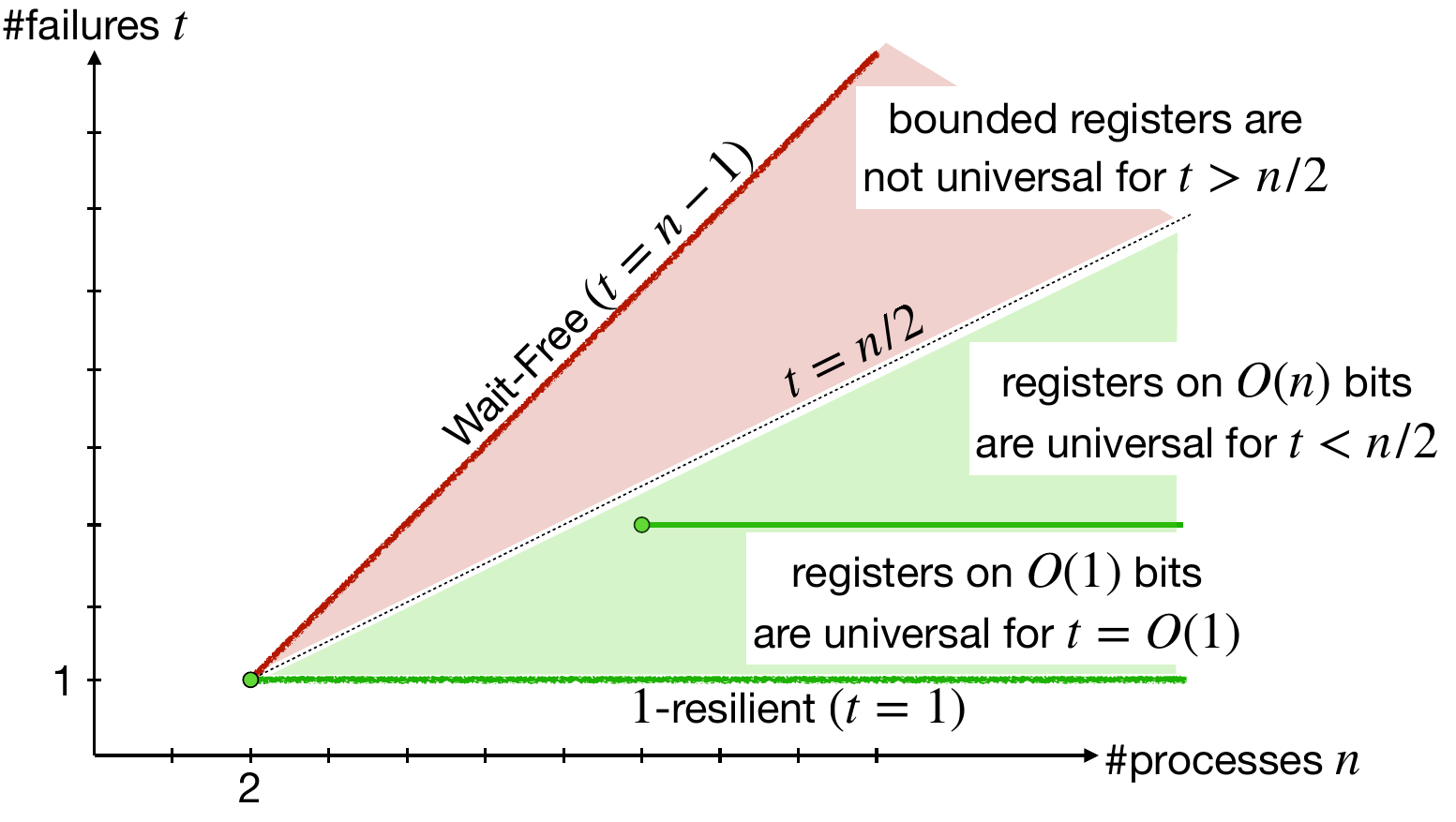}
\caption{\sl Summary of our results. For $t>\frac{n}{2}$, the $t$-resilient model with bounded registers is not universal, even if the bound on the size of the registers may depend on the number of processes (Theorem~\ref{main:wait-free-not-universal}). For $t<\frac{n}{2}$, the $t$-resilient model with registers on $O(n)$ bits is universal. Actually, it is universal even with registers on $O(t)$ bits (Theorem~\ref{theo:small-values-of-t}). Therefore, for $t=O(1)$, the $t$-resilient model with constant-size registers is universal. Moreover, for the case $n=2$, where 1-resilient computing and wait-free computing coincide, 1-bit registers are universal (Theorem~\ref{theo:case-of-two-proc}). }
\label{fig:summary}
\end{figure}

Finally, we conclude the paper by considering the so-called \emph{iterated} version of the wait-free model, 
 which is an  artificial abstraction  that has been used in most papers  since the asynchronous computability theorem~\cite{HerlihyS99} was established, mainly because it greatly simplifies topological arguments~\cite{bookHerlihyKR2013}. 
This abstraction assumes that, instead of a single array $M$ of $n$ registers, one register per process, 
the  shared-memory consists of an infinite sequence $(M_r)_{r\geq 1}$ of such arrays.  
The computation is then  organized in rounds. At every round $r\geq 1$, the processes communicate by writing once, and then
reading the registers in the $r$th memory $M_r$.
Several variants of iterated models have been studied, some even more powerful than wait-free~\cite{ImbsRV15,KuznetsovR20, KuznetsovRH18},
but the general 
idea is that the restriction of writing only once each memory $M_r$, and the inability of re-reading a ``lower'' memory $M_s$ for $s<r$, 
can be done without loss of generality, as far as task solvability is concerned.
Specifically, in the wait-free model, a task is solvable  if and only if it is solvable in the corresponding iterated
model~\cite{BorowskyG97simple,GafniR10}, both using unbounded registers. 
We show that, in fact, bounded size registers suffice in the iterated model.  
 
 \begin{theorem}\label{theo:iterated-immediate-snapshot}
 For any $n\geq 2$, any task solvable wait-free by $n$ processes in the iterated model with unbounded registers is solvable  wait-free  in the  iterated model with 1-bit registers (at each level $r$ of the memory).
\end{theorem}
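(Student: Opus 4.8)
The plan is to reduce \cref{theo:iterated-immediate-snapshot} to a single combinatorial lemma about simulating one immediate-snapshot round by a block of consecutive one-bit rounds, and then to iterate that lemma. First I would invoke the topological characterisation of iterated wait-free solvability: since the task $(\mathcal I,\mathcal O,\Delta)$ is solvable wait-free in the iterated model with unbounded registers, there are an integer $R$ and a chromatic simplicial map $\delta\colon\mathrm{Chr}^R(\mathcal I)\to\mathcal O$ carried by $\Delta$, where $\mathrm{Chr}^R(\mathcal I)$ is the $R$-th iterated standard chromatic subdivision of the input complex, i.e.\ the protocol complex of the $R$-round full-information iterated immediate-snapshot protocol~\cite{HerlihyS99,BorowskyG97simple,GafniR10}. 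Since a task has a finite input complex, the input alphabet is finite, so the depth-$R$ view of a process in that protocol is a labelled tree of depth $R$ with $O((n{+}1)^R)$ nodes, hence encodable by a bit string of length at most some $L=L(n,R,|\mathcal I|)$ independent of anything else. It therefore suffices to exhibit a one-bit iterated protocol in which, after a bounded number $N=N(n,R,|\mathcal I|)$ of rounds, each process outputs a vertex of $\mathrm{Chr}^R(\mathcal I)$ so that the outputs of any execution form a simplex of $\mathrm{Chr}^R(\mathcal I)$ lying over the corresponding input simplex; composing with $\delta$ then solves the task.

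The protocol I would build groups the one-bit levels into consecutive \emph{blocks}, block $r$ of a fixed length $K=K(n,L)$, and lets block $r$ simulate the $r$-th immediate-snapshot round of the full-information protocol. Entering block $r$, process $p_i$ holds the code word $\sigma_i\in\{0,1\}^{\le L}$ of its current view; inside the block it repeatedly writes, one bit per level, the bits of a \emph{redundant} re-encoding of $\sigma_i$, recording the immediate snapshot obtained at each level. At the end of the block it must produce a set $\hat S_i\ni i$ of processes it ``sees in round $r$'' together with the code words $\sigma_j$ of the $j\in\hat S_i$, and it updates its view to $\{\,\mathrm{decode}(\sigma_j):j\in\hat S_i\,\}$; concatenating $R$ such blocks gives $N=\sum_{r\le R}K$ one-bit rounds. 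The key lemma to prove is that a block faithfully realises one immediate-snapshot round: for \emph{every} schedule of its $K$ one-bit levels, (a) the family $\{\hat S_i\}_i$ is a legal immediate-snapshot output --- the $\hat S_i$ are linearly ordered by inclusion and satisfy immediacy, $j\in\hat S_i\Rightarrow\hat S_j\subseteq\hat S_i$ --- and (b) each $p_i$ has correctly recovered $\sigma_j$ for every $j\in\hat S_i$.

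The hard part will be exactly this faithfulness lemma, because of asynchrony \emph{inside} a block: a fast process may read a level's one-bit registers before a slow one has written them, so plain bit-by-bit transmission can leave $p_i$ with only part of $\sigma_j$ even when $j$ ought to belong to its immediate-snapshot set, while the naive fix ``keep $p_j$ only if seen at \emph{every} level of the block'' fails --- on three processes, a two-level schedule in which $p_1$ runs alone first and $p_2$ runs alone second produces the sets $\{1\}$, $\{2\}$, $\{1,2,3\}$, which are not a chain. My first attempt would be to prepend to each block a short synchronisation prefix of one-bit immediate-snapshot levels (with a fixed payload) whose sole job is to pin down, combinatorially, the concurrency classes of round $r$, and to choose the redundancy of the re-encoding large enough that any process declared to see $p_j$ provably receives enough bits of $\sigma_j$ to reconstruct it; the crux would then be a purely combinatorial estimate, from the partial order of the block's read/write events, of how many levels a process declared concurrent with $p_j$ can ``miss'' $p_j$, and this bound would fix the required redundancy and the length $K$. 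If the adversary's freedom to make a process race past an arbitrary stretch of a transmission cannot be neutralised this cleanly, the fallback is to let each block realise between $1$ and $K$ immediate-snapshot rounds rather than exactly one --- using a \emph{depth-monotone} encoding, so that a received prefix of $\sigma_j$ is read as a shallower, legal view of $p_j$ --- and to truncate all views to a common depth before applying $\delta$; this bookkeeping is heavier but the inductive structure is unchanged. I expect this step to be the main obstacle.

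Granting the block lemma, an induction on $r$ shows that after block $r$ the processes' states are consistent with a single $r$-round full-information iterated immediate-snapshot execution on the given inputs; hence after $R$ blocks every execution of the one-bit protocol yields a simplex of $\mathrm{Chr}^R(\mathcal I)$ over the correct input simplex, and this is carrier-compatible because the protocol never alters inputs. Composing with $\delta$ gives the desired wait-free one-bit iterated protocol solving the task, with $N$ and the one-bit register size depending only on $n$ and the task, as claimed.
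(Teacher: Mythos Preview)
Your outline correctly reduces the theorem to simulating one full-information immediate-snapshot round by a finite block of 1-bit iterated rounds, but the mechanism you propose for the block---transmitting each view $\sigma_i$ bit by bit with redundancy and a synchronisation prefix---has precisely the gap you flag, and neither of your fixes closes it. In the iterated model a fast process can complete all $K$ levels of a block before a slow process has written a single bit, so no finite redundancy guarantees that any portion of $\sigma_j$ gets through; a synchronisation prefix is subject to the same asynchrony, and even if it fixes legal sets $\hat S_i$ at one level, it says nothing about whether $p_i$'s reads at the \emph{other} levels occur after $p_j$'s writes there. The ``combinatorial estimate of how many levels can be missed'' you hope for is simply ``all of them.'' Your fallback with depth-monotone encodings and truncation is not specified tightly enough to evaluate. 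So the block lemma, which you rightly identify as the crux, is genuinely open in your plan.

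The paper sidesteps this entirely by never attempting to transmit views. It first inserts an intermediate model, iterated \emph{collect} (IC), and shows (via Borowsky--Gafni) that IIS with unbounded registers reduces to IC with unbounded registers. It then simulates one IC round~$r$ in the 1-bit IIS model by a block of iterations indexed not by bit positions but by a fixed, round-preserving enumeration $(c_1,\dots,c_N)$ of \emph{all} round-$(r{-}1)$ configurations of the simulated protocol. At iteration~$\rho$, process~$i$ writes the single bit $[\,c_\rho[i]=W_i^{r-1}\,]$ and immediate-snapshots; whenever it sees $j$'s bit equal to~$1$, it adds $c_\rho[j]$ to its simulated round-$r$ view. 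The point is that at the unique iteration $\tilde\rho$ indexing the \emph{actual} configuration, every process writes~$1$, so the immediate snapshot at that single level already yields a nested, self-containing family of round-$(r{-}1)$ views; the extra entries collected at other iterations only enlarge the views, and a short lemma shows the resulting family is still a legal \emph{collect} outcome (this is why IC, not IIS, is the simulated model). Thus the iteration index itself carries the information, and no bit-by-bit transmission or synchronisation is needed.
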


Therefore, while the wait-free read/write shared-memory model is known to be computationally
equivalent to its wait-free iterated version, this equivalence holds only if
registers are unbounded. 


\subsection{Related Work}
\label{sec:related}

The fundamental paper by Fischer, Lynch, and Paterson~\cite{FischerLP85} showed 
that  Turing computability theory is not sufficient for analyzing computability in asynchronous, distributed systems, by proving that the consensus task is not solvable 
 in a message passing system even if only one process may fail by crashing,
 even if each process is an infinite state machine.
 Biran, Moran and Zaks~\cite{BMZ90charac} extended the result into a full
  characterization of the tasks that are solvable in this model of computation.
   Later, it was  shown by Attiya, Bar{-}Noy and Dolev~\cite{AttiyaBD95} 
   that the message passing and shared memory models are equivalent with respect
   to the tasks that they can solve,  when the number of processes that can crash $t$ is
   $t<n/2$.
  
 Biran, Moran and Zaks~\cite{BMZ90charac} showed that any $1$-resilient solvable task
 can be solved by a form of  approximate agreement on a graph, defined by
 the possible outputs of the task.  It is ``approximate'' in the sense that processes
 decide on the same vertex or on adjacent vertices of the graph. 
 In Section~\ref{sec:equality} we show how  this approach also applies to the case of two processes communicating with constant size registers. 
For $t<n/2$, we prove the universality of registers of size $O(t)$ in Section~\ref{sec:minority},
transforming the algorithm with unbounded registers to a message passing algorithm 
using the above simulation~\cite{AttiyaBD95}, then reducing the
 communication links between processes, as long as the network is $(t+1)$-connected,
and finally relying on the  alternating bit protocol (see~\cite{bartlett1969note,lynch1968computer}).

The approximate agreement problem where processes start with real valued inputs
in a range $[a,b]$ was introduced by Dolev, Lynch, Pinter, Stark, and Weihl~\cite{DolevLPSW86}, and since then it has been thoroughly studied, providing
solutions in various  models, both synchronous and asynchronous, 
 even when  processes can exhibit arbitrary,  Byzantine failures, e.g.~\cite{AbrahamAD04}.
Closer to our work, Hoest and Shavit~\cite{HoestS06} considered the problem in a shared-memory
iterated model of computation, proving matching upper and lower bounds of
$\log_d (b-a)/\epsilon$ on number of steps taken by a process, where $d=3$ for two processes, and $d=2$ for three or more processes.
  A similar result was proved in message passing, dynamic graph models~\cite{FuggerNS21}.
 We show in Section~\ref{sec:equality} that approximate agreement can be
 solved with registers of constant size by two processes with the same step complexity,
 through a novel technique that considers protocol complexes that are not full-information.

 In contrast,  when a majority of the processes
 may fail, restricting the size of the registers strictly limits the degree of 
 approximation that can be achieved. In Section~\ref{sec:no_dum} (sketch in Section~\ref{sec:sketch}).
We show that  there exists  
 a sufficiently small $\epsilon$,  such that approximate agreement cannot be solved by
$n$ processes in the $t$-resilient shared-memory model with registers of $O(n)$ bits.
  
 Generalizing the  task solvability characterization~\cite{BMZ90charac} from $t=1$ to $t=n-1$
 was done by Herlihy and Shavit~\cite{HerlihyS99} using combinatorial topology,
 essentially going from 1-dimensional graphs needed to model two process
 executions, to $(n+1)$-dimensional simplicial complexes needed to model $n$ process
 executions. In particular, an $n$-dimensional form of approximate agreement
 task played the analogue of the approximate agreement graphs task used in~\cite{BMZ90charac}.
 An extensive literature has investigated the solvability of particular tasks, such as consensus, 
 approximate agreement~\cite{DolevLPSW86},
 set agreement~\cite{Raynal16b}, and renaming~\cite{CastanedaRR11}.
For an overview of results in this
area, and why it is intimately related to topology, see~\cite{bookHerlihyKR2013}.
Most of the work uses iterated models, but a few have studied  non-iterated models e.g.~\cite{AttiyaR02}.
 
Analyzing the topology of the  protocol complex representing
the executions of an algorithm, proved to be much easier in an iterated
model, which has the same task computability power~\cite{BorowskyG97simple}.  
Dynamic network models where synchronous processes that do not crash communicate by
sending messages that may be lost are a closely related to iterated shared memory models~\cite{AfekG15,GodardP20,FuggerNS21}.
It was shown that, for two processes, it is possible to solve any wait-free 
solvable task using  $1$-bit messages, without incurring any cost in the
optimal number of communication rounds~\cite{Delporte-Gallet20}.
In Section~\ref{sec:iterated} we generalize this result for any $n\geq 2$, showing
that any task solvable wait-free by $n$ processes in the IIS model with unbounded registers is solvable  wait-free  in the  IIS model with 1-bit registers (per round).
For this, we  use the snapshot implementation of~\cite{BG92}.

For the case of algorithms with constant size registers,
the input registers allow processes to communicate their inputs for free,
to separate the concerns about the size of the
inputs from the question about reducing the costs of the full information protocol under
asynchrony and failures.
For example, bounds on approximate agreement often depend both on $\epsilon$ and
on the magnitude or the spread of the inputs e.g.~\cite{AttiyaE22,Herlihy91,Moran95,Schenk95}.
 Instead,
we separate the aspects related to the number of possible inputs, and the
actual approximate agreement computation,   by assuming instead  inputs  $\{0,1\}$,
and then use a solution to this binary version, to solve any task.
In a sense, communicating the input is an information theoretic concern, and indeed,
the problem has been thoroughly studied from this perspective, e.g. Orlitsky~\cite{OrlitskyV03},
and more recently~\cite{Rajsbaum23},
and this is the case also in the wait-free setting~\cite{Delporte-Gallet20}.
Indeed, Schenk~\cite{Schenk95} (see also \cite[Theorem~3.11]{Schenk96} for more details) shows that any two process
protocol for approximate agreement communicating through $b$ bit MWMR registers
must use a number of registers that grows as $\epsilon$ gets smaller,
an apparent contradiction with our solution using two $1$-bit registers. The reason is
precisely that Schenk assumes  inputs in $[0,1]$, 
while we assume only two possible inputs. Attiya and Ellen~\cite{AttiyaE22} focus on
multi-dimensional version of approximate agreement (and MWMR registers), but
in the one-dimensional case,  achieve an approximate agreement algorithm whose step complexity is $O(\min\{ \log n(\log n+\log(S/ \epsilon)),\log(M/\epsilon)\})$, where $S$ is the spread $b-a$
and $M$ is their magnitude, the largest of the absolute values of the inputs.
 In contrast, our lower bounds
for more than 2 processes showing the impossibility of constant (or bounded) registers
size hold even if processes have only two possible inputs. Schenk considers also the wait-free
case of  $n>2$, and 
 proves upper and lower bounds of $\Theta(\log (s/\epsilon))$ single-bit MWMR registers,
 for real-valued inputs of magnitud at most $s$, while we show a solution with registers
 whose size does not depend on $\epsilon$, for $t<n/2$, and furthermore, in the 
 case of a majority of failures 
 we show that there must be a dependence on $\epsilon$, even in the case of only two possible inputs.

\subsection{Paper Organization}

In Section~\ref{sec:model}, we formally describe our model, and we recall basic facts from the distributed computing framework that will be used for establishing our own results. Then, in Section~\ref{sec:sketch}, we provide the reader with some intuition about the main arguments used for proving our results. The next four sections are dedicated to the proofs of the four Theorems~\ref{main:wait-free-not-universal}-\ref{theo:iterated-immediate-snapshot}, respectively. Section~\ref{sec:faster-eps-agr} revisits Theorem~\ref{theo:case-of-two-proc}, and demonstrates that, by slightly increasing the size of the registers, from 1~bit to $O(1)$ bits, the simulation used in the proof of Theorem~\ref{theo:case-of-two-proc} can be speeded up exponentially, and running at  essentially the same time as the algorithms using unbounded registers. 
Finally, Section~\ref{sec:concl} concludes the paper.


\section{Model and Basic Facts}
\label{sec:model}

We use a standard model of computation that is briefly summarized below, together with reminders of classical results in distributed computing that will be useful in the paper. More details can be found in~\cite{attiya2004distributed,raynalBook13}. 

\paragraph{Computing Model.}

We consider  $n\geq 2$ deterministic  processes, modeled as (possibly infinite) state machines labeled from~1 to~$n$, communicating by reading and writing to a shared memory.

The shared memory consists of $n$ \emph{registers} $R_1,\dots,R_n$. 
For every $i\in\{1,\dots,n\}$, register $R_i$ is the only register in which process~$i$, abbreviated~$p_i$, can write. However, $p_i$ can read all  registers. Such registers are called single-writer multi-reader (SWMR) registers. 
Notice that the assumption of having only one SWMR register per process is done without loss of generality, in the sense that a  SWMR register can emulate a constant number of SWMR registers.
When a process $i$ writes a value~$v$ in its register $R_i$, the content of the register is erased and replaced by~$v$.  Each access to a register, whether it be a read or a write, is atomic, and it is assumed that two concurrent accesses to a same register never occur. Registers are often referred to as \emph{objects} in the literature on shared-memory distributed computing, and the standard syntax for a write operation performed by a process in a register $R$ is $R.\mathsf{write}(v)$, meaning that the value~$v$ is written in register~$R$. Similarly, $R.\mathsf{read}()$ returns the value~$v$ currently stored in register~$R$. 

An atomic operation performed by a process in the shared memory is called a \emph{step}. The processes are \emph{asynchronous}, and may fail by \emph{crashing}. When a process crashes, it stops taking any step, stops performing any internal computation, and never recovers. 

\paragraph{Tasks and Algorithms.}

A  \emph{task} is the basic form of distributed computing problem, the analogous of a function for Turing machines, studied since~\cite{BMZ90charac}. It is defined by a set of input  \emph{configurations} of the $n$-process system, a set of output configurations,  and a mapping specifying, for every input configuration, the set of output configurations that are legal with respect to this input. An algorithm solves a given task if each of the $n$ processes can start with any private input value taken from the input domain of the task, communicate with the other processes, and eventually halt with a private output value such that the collection of output values produced by the processes is consistent with the collection of input values with respect to the input-output specification of the task. 

More formally, a task for $n$ processes is defined as a triplet $\Pi =  (\m{I},\m{O},\Delta)$ where $\m{I}$ and $\m{O}$ are finite sets of  input and output configurations (usually defined as simplicial complexes), and   $\Delta:\m{I}\to 2^\m{O}$ maps each input configuration $\sigma \in \m{I}$ to a set of  output configurations $\Delta(\sigma) \subseteq \m{O}$ that are legal with respect to~$\sigma$. 
An algorithm $\mathcal{A}$ solves a task $\Pi$ if whenever each of the $n$ processes  is initially given a private input such that the sets of inputs defines a configuration $\sigma\in\m{I}$, any execution of $\mathcal{A}$ results in the (non crashing) processes deciding  output values  forming a configuration $\tau\in \m{O}$ that is legal for~$\sigma$ according to the input-output specification~$\Delta$, i.e., $\tau\in\Delta(\sigma)$.

\paragraph{Executions.}

An \emph{execution} (of an algorithm) is a sequence of \emph{steps}, i.e., a sequence of read or a write operations in the shared memory performed by any of the processes. A crucial point is that, although each of the processes is deterministic, the overall system is non-deterministic, due to asynchrony and crash failures. A same algorithm $\mathcal{A}$ may thus have different executions, due to different interleaving between the read and write operations performed in the shared memory, and due to the fact that some processes may not terminate due to crashes. In fact, some processes may not even take any step in case they crash at the very beginning of an execution.  A \emph{solo} execution is an execution in which only one of the $n$ processes take steps, all the others being either crashed or very slow. A process that does not crash in an execution, is said to be \emph{correct} in that execution. 

\paragraph{Resiliency.}

The model has different variants, parametrized by the maximum number of crashes which may occur during any execution (of an algorithm). For every $t\in\{1,\dots,n-1\}$, the \emph{$t$-resilient} model assumes that at most $t$ processes can crash during any execution. The case $t=n-1$ deserves  special attention, is  referred to as the \emph{wait-free} model. Indeed, if all but one process may crash during any execution, a process~$i$ cannot ``wait'' for another process~$j$ to perform a write in the shared memory (e.g., for $p_j$ to increase a counter in its register $R_j$), because, $p_i$ cannot distinguish between the case where $p_j$ is very slow (due to, e.g.,  contention), and the case where $p_j$ has crashed. Instead, for $t<n-1$, a process can wait for up to $n-(t+1)$ other processes, as for sure $n-t$ processes do not crash. 

\paragraph{Consensus.}

Consensus is arguably the most important task in  distributed computability. In \emph{consensus,} every process $i$ starts with a value $x_i\in I$ taken from some finite set~$I$, and every correct process $i$ must decide an output  value $y_i$ such that (1)~the value $y_i$ is equal to the input value $x_j$ of some $p_j$, and (2)~the output values decided by the correct processes are identical. The instance of consensus in which $I=\{0,1\}$ is \emph{binary consensus}. As mentioned previously, solving consensus is  impossible, even in the stronger model considered in the paper (i.e., the $1$-resilient model), a result originally proved in a message passing system~\cite{FischerLP85}, but can be  extended to shared memory~\cite{attiya2004distributed}.

\begin{lemma} \label{lem:impossibility-of-consensus}
For every $n\geq 2$, binary consensus cannot be solved by $n$ processes in the $1$-resilient shared-memory model.
\end{lemma}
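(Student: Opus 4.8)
The plan is to reduce to the classical FLP-style impossibility for shared memory by a standard valency (bivalence) argument, adapted so that only a single crash is needed. First I would fix an algorithm $\mathcal{A}$ purporting to solve binary consensus in the $1$-resilient model and argue, via the usual inputs-spanning argument, that $\mathcal{A}$ has a \emph{bivalent} initial configuration: the all-$0$ input forces output $0$ by validity, the all-$1$ input forces output $1$, and two initial configurations differing in the input of a single process $p_i$ must, if both were univalent, have opposite valencies; running an execution in which $p_i$ crashes immediately (permitted since $t\ge 1$) makes these two configurations indistinguishable to every other process, a contradiction. Hence some initial configuration is bivalent.

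Next I would run the standard ``critical configuration'' step: starting from a bivalent configuration, extend the execution as long as bivalence is preserved; since $\mathcal{A}$ must terminate, we reach a bivalent configuration $C$ such that every process's next step leads to a univalent configuration. Pick processes $p$ and $q$ whose next steps $e_p, e_q$ lead to $0$-valent and $1$-valent configurations respectively. I would then do the usual case analysis on the nature of $e_p$ and $e_q$ as read/write operations on SWMR registers: (i) if they touch different registers, or are both reads, then $e_p e_q$ and $e_q e_p$ produce the same configuration, contradicting opposite valencies; (ii) the only remaining case is that one of them, say $e_q$, writes to the register $R_q$, while $e_p$ reads $R_q$ (the ``conflicting'' case). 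Here I would use the single allowed crash: from $C e_q$ (which is $1$-valent) let $p$ crash; the resulting execution among the remaining $n-1\ge 1$ processes eventually decides $1$. But the configurations $C e_p e_q$ and $C e_q$ are indistinguishable to all processes other than $p$ (the read $e_p$ leaves no trace in shared memory, and $p$'s local state is irrelevant once $p$ is crashed), so the same crash-of-$p$ extension from $C e_p e_q$ also decides $1$ — contradicting that $C e_p$ is $0$-valent.

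The main obstacle is the conflicting read/write case, and specifically justifying that crashing the single process $p$ genuinely yields indistinguishable executions in the SWMR model: one must be careful that $p$'s pending internal state plays no role once $p$ takes no further steps, and that an overwriting write by $q$ does not expose whether $p$ had previously read the old value. Everything else (the spanning argument for an initial bivalent configuration, the extraction of a critical configuration using termination, the commuting-operations cases) is routine. I would also remark that the result is stated for the $1$-resilient model, which is the strongest (most reliable) variant considered, so it immediately implies the impossibility in the wait-free and all intermediate $t$-resilient models; and since binary consensus is a task with inputs in $\{0,1\}$, no appeal to large inputs is needed. Alternatively, one could simply cite~\cite{FischerLP85,attiya2004distributed} and observe that their impossibility proof uses only a single crash, but giving the self-contained valency argument keeps the paper self-contained on this foundational point.
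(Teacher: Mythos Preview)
Your valency argument is correct and complete for the SWMR setting: the initial-bivalence step, the critical-configuration extraction, and the read/write case analysis all go through, and your handling of the only non-commuting case (a write to $R_q$ by $q$ versus a read of $R_q$ by $p$) correctly uses the single permitted crash to hide $p$'s read.

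However, the paper does not actually prove this lemma. It is stated in the ``Model and Basic Facts'' section as a known result, with the sentence immediately preceding it saying that the impossibility ``a result originally proved in a message passing system~\cite{FischerLP85}, but can be extended to shared memory~\cite{attiya2004distributed}.'' In other words, the paper takes exactly the alternative you mention in your last paragraph: it simply cites FLP and the Attiya--Welch textbook and moves on. Your self-contained argument is more informative for a reader, but the paper treats this as background and spends no space on it; your proposal is thus correct but strictly more detailed than what the paper provides.
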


\paragraph{Approximate Agreement.}


Approximate agreement~\cite{DolevLPSW83} is a relaxation of binary consensus, parame\-tri\-zed by a ``degree of agreement''~$\epsilon>0$. When the domain is discrete $\{0,1\}$, for $\epsilon \in (0,1]$, the binary $\epsilon$-agreement task is the following. 
Each process starts with an input value in $\{0,1\}$. It must produce an output value in the interval $[0,1]$,
such that (1)~if all process start with the same value~$x\in \{0,1\}$, then they must decide that value~$x$, and (2)~otherwise 
they can output any values in $[0,1]$ as long all these values are at most $\epsilon$ apart from each other.  
To be formalized as a task, a discretized version of approximate agreement has been introduced (see~\cite{bookHerlihyKR2013}), in which $\epsilon=1/k$ for some integer $k\geq 1$, and it is required that all the output values be of the form $m/k$, for some $m\in\{0,\dots,k\}$.  
As opposed to consensus,  $\epsilon$-agreement is solvable, for every $\epsilon>0$, even in the weakest variant considered in the paper, i.e., the wait-free model.

\begin{lemma}\emph{(\cite{DolevLPSW86,HoestS06})}\label{lem:possibility-of-approx-agreement}
For every $\epsilon>0$ and every $n\geq 2$, $\epsilon$-agreement can be solved by $n$ processes in the wait-free shared-memory model.
\end{lemma}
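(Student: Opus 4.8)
The plan is to describe the classical \emph{midpoint} (convergence) algorithm, run on top of a sequence of one-shot atomic snapshot objects, and to check that it meets the three requirements of the discretized binary $\epsilon$-agreement task (with $\epsilon=1/k$): termination, validity, and $\epsilon$-agreement. I would first recall that an $n$-component atomic snapshot object, supporting a write to one's own component and an atomic $\mathsf{scan}$, is wait-free implementable from $n$ SWMR registers (the standard construction, e.g.\ the one of~\cite{BG92} that we reuse later). The algorithm then employs a sequence $S_1,S_2,\dots$ of such objects, all components initialized to $\bot$. Each process $p_i$ keeps an estimate $v_i$, initialized to its input in $\{0,1\}$, and runs $R:=\lceil\log_2(2/\epsilon)\rceil$ rounds; in round $r$ it writes $v_i$ to its component of $S_r$, scans $S_r$, lets $T_i\subseteq[0,1]$ be the (nonempty) set of non-$\bot$ values it reads, and updates $v_i\gets\tfrac12(\min T_i+\max T_i)$. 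After round $R$, $p_i$ outputs the multiple of $1/k$ closest to $v_i$ (ties broken downward, say). Termination is immediate: every correct process halts after a fixed finite number of wait-free operations, never waiting for anyone.

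The core is a one-round contraction claim: letting $V_r$ be the (finite, nonempty) set of estimates written in round $r$, after that round every surviving estimate lies in $[\min V_r,\max V_r]$, and the spread of these new estimates is at most $\tfrac12(\max V_r-\min V_r)$. This is where atomicity enters: by linearizability of the scans, the sets $T_i$ obtained in round $r$ are totally ordered by inclusion, each $T_i$ is nonempty (it contains $p_i$'s just-written value) and $T_i\subseteq V_r$. For any comparable pair $T_i\subseteq T_j$, writing $a=\min T_j$, $b=\max T_j$, we have $\min T_i,\max T_i\in[a,b]$, hence $\mathrm{mid}(T_i)-\mathrm{mid}(T_j)=\tfrac12\bigl((\min T_i-a)-(b-\max T_i)\bigr)$, which has absolute value at most $\tfrac12(b-a)\le\tfrac12(\max V_r-\min V_r)$; since all pairs are comparable, this bounds the new spread. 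As $V_{r+1}$ is a subset of the estimates produced in round $r$, iterating the contraction from an initial spread of at most $1$ yields a final spread strictly less than $1/k=\epsilon$ after $R$ rounds.

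Finally I would close the task. Validity: if all inputs equal the same $x\in\{0,1\}$, then in every round every $T_i$ equals $\{x\}$, so every estimate stays at $x$; since $x$ is already a multiple of $1/k$, the output is $x$. $\epsilon$-agreement: after round $R$ all estimates lie in a common interval of length $<1/k$, which contains at most one point of the form $(m+\tfrac12)/k$, so the rounded outputs take at most two consecutive values in $\{0,1/k,\dots,1\}$ and thus differ by at most $1/k=\epsilon$; and each estimate is a midpoint of previously held estimates, so by induction it stays in the convex hull of the inputs, hence in $[0,1]$, and so does its rounding.

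The main obstacle is the contraction claim, and specifically the appeal to linearizability of the atomic snapshot scans to obtain the inclusion ordering of the sets $T_i$ — with plain non-atomic collects the halving can fail (for $n\ge 3$ two processes may each observe only their own value at opposite endpoints), so the snapshot layer is essential rather than a convenience. Everything else is either a routine induction on the spread or a direct consequence of the algorithm being a bounded loop of wait-free operations.
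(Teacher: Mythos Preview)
The paper does not prove this lemma: it is stated in Section~\ref{sec:model} as a known fact with citations to~\cite{DolevLPSW86,HoestS06}, and is used as a black box throughout (notably in the proof of Theorem~\ref{main:wait-free-not-universal}). So there is no ``paper's own proof'' to compare against.

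Your proof is correct and is precisely the classical midpoint-convergence argument one finds in the cited references. The one step worth singling out is the inclusion ordering of the snapshot views in a given round: you correctly identify that this is where linearizability of the atomic snapshot is essential, and your warning that plain collects do not suffice for $n\ge 3$ is apt. The halving computation, the induction on the spread, the treatment of validity, and the rounding argument (an interval of length $<1/k$ meets at most one half-integer grid point $(m+\tfrac12)/k$, hence rounds to at most two consecutive outputs) are all sound. One cosmetic remark: rather than citing~\cite{BG92} for the snapshot implementation you could equally cite~\cite{AfekADGMS93}, which is the reference the paper uses for snapshots; but either works and this has no bearing on correctness.
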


\paragraph{Size of the Registers.}

This paper is mostly interesting in the \emph{size} of the shared registers $R_1,\dots,R_n$. We are questioning whether there is a bound $b\geq 1$ such that all tasks solvable using registers of unbounded size are solvable using $b$-bit registers. Ideally, we would like $b$ to be independent of the number $n$ of processes. An intermediate case however is the case where $b$ is allowed to grow with $n$, but independently of the tasks. There is a subtlety regarding the role of the registers, because they serve two purposes. One is informing the other processes of the input given to each process, and the other is coordination between the processes so that they can eventually produce outputs. Our impossibility results hold for tasks with constant inputs, actually inputs in $\{0,1\}$. 

For handling tasks with arbitrarily large inputs in the context of our positive results using registers of bounded size (whether it be constant or function of~$n$),  we enhance the model by assuming one additional SWMR \emph{write-once} register $I_i$ per process $i\in\{1,\dots,n\}$. In this case, the first operation of each process $i$ must be writing its input~$x_i$ (if any) in its register~$I_i$, thanks to the atomic instruction $I_i.\mathsf{write}(x_i)$. This register can be read at will by the other processes, thanks to the instruction  $I_i.\mathsf{read}()$, but $p_i$ never writes again to its  register~$I_i$. That is, the registers $I_1,\dots,I_n$ are just to store input values, and cannot be used for coordination between the processes. Coordination must only be based on using the $b$-bit registers $R_1,\dots,R_n$. The input registers $I_1,\dots,I_n$ are required only when the range of inputs exceeds what can be stored in the registers $R_1,\dots,R_n$, e.g., inputs in $\{1,\dots,n\}$ with constant-size registers. Again, our impossibility results hold even for binary inputs. 

\paragraph{Snapshots and Immediate Snapshots.} 

A \emph{snapshot}~\cite{AfekADGMS93} is a mechanism, a.k.a.~\emph{object}, allowing any process $i$ to read all the registers of the other processes as if this would occur atomically, that is, with the guarantee that no process $j$ writes in its register while $p_i$ is performing its snapshot. The instruction $\mathsf{snapshot}()$ returns all the values stored in the registers in the memory, as an $n$-entry vector. An  \emph{immediate} snapshot~\cite{BG92}  is an even more sophisticated object enabling any process $i$ to write in its register, and then snapshot the memory, as a single atomic action, as if the snapshot occurred ``immediately'' after the write. As opposed to all the previously mentioned communication instructions (read, write, and snapshot), several immediate snapshots may occur simultaneously. If this is the case, then every process performing the immediate snapshot gets in its own snapshot all the values written by the other processes concurrently performing the immediate snapshot. W.l.o.g., we can assume access to immediate snapshot, due to the following result. 

\begin{lemma}\emph{(\cite{BG92})}\label{lem:possibility-of-immediate-snapshot}
For every $n\geq 2$, immediate snapshot  can be implemented by $n$ processes in the wait-free read/write shared-memory model.
\end{lemma}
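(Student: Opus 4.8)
The plan is to recall the Borowsky--Gafni recursive construction, layered on top of an atomic snapshot object (which, by the construction of Afek et al.~\cite{AfekADGMS93}, is itself wait-free implementable from SWMR read/write registers, so it suffices to build an immediate snapshot from an atomic snapshot). I would use the classical \emph{level} algorithm. The shared memory holds an array $\mathsf{level}[1..n]$, each entry a SWMR register owned by the corresponding process and initialized to $n+1$; in addition, as its very first step, $p_i$ writes its input $v_i$ into a dedicated SWMR register. Then $p_i$ repeats: decrement a local counter $\ell$ (starting from $n+1$, so the first value used is $n$), write $\ell$ into $\mathsf{level}[i]$, take an atomic snapshot of $\mathsf{level}[1..n]$, set $S=\{j : \mathsf{level}[j]\le \ell\}$, and if $|S|\ge \ell$ halt and return $\{(j,v_j) : j\in S\}$ (reading the $v_j$'s from the input registers); otherwise loop. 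Each process performs at most $n$ iterations, each a constant number of read/write steps plus one wait-free snapshot, so the whole object is wait-free.

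For correctness, the central quantity is $A_\ell$, the set of processes that ever write the value $\ell$ into their $\mathsf{level}$ entry. Note first that $A_\ell\subseteq A_{\ell'}$ whenever $\ell\le\ell'$, since a process reaches level $\ell$ only after passing through $n,n-1,\dots,\ell+1$. The heart of the argument is to prove, by downward induction on $\ell$, that $|A_\ell|\le \ell$. The base case $\ell=n$ is trivial. For the step, every process in $A_\ell$ descended from level $\ell+1$ and hence took a snapshot while at level $\ell+1$; let $p^\star\in A_\ell$ be the one whose such snapshot is linearized last. Every $q\in A_\ell$ writes $\ell+1$ before $q$'s own level-$(\ell+1)$ snapshot, hence before $p^\star$'s, so $p^\star$'s snapshot sees every member of $A_\ell$ at a level $\le \ell+1$; thus $A_\ell\subseteq S_{p^\star}$. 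But $p^\star$ descended, so $|S_{p^\star}|<\ell+1$, whence $|A_\ell|\le \ell$. Termination follows: a non-halting process would decrement below $1$, which is impossible since at level $1$ its own entry already gives $|S|\ge 1$.

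It then remains to check the three defining properties of an immediate snapshot for the returned sets. If $p$ halts at level $\ell$ with index set $V_p=\{j:\mathsf{level}[j]\le\ell\}$, then $V_p\subseteq A_\ell$ and $|V_p|\ge\ell\ge|A_\ell|$, so $V_p=A_\ell$; in particular the returned set depends only on the halting level. Self-inclusion: $p$ wrote $\ell$ before its successful snapshot, so $p\in A_\ell=V_p$. Comparability: if $p,q$ halt at levels $\ell\le\ell'$ then $V_p=A_\ell\subseteq A_{\ell'}=V_q$. Immediacy: if $q\in V_p=A_\ell$ then $q$ reached level $\le\ell$, so $q$ halts at some $\ell'\le\ell$, giving $V_q=A_{\ell'}\subseteq A_\ell=V_p$. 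Pairing each index with the value read from the corresponding input register yields the output, completing the construction.

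The main obstacle is precisely the $|A_\ell|\le\ell$ bound: it relies on the ``last-linearized-snapshot'' timing argument together with the linearizability of the atomic snapshot sub-object; once that is in place, the termination, wait-freedom, and the three immediate-snapshot properties are routine bookkeeping.
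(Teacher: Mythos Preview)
Your proof is correct and is precisely the standard Borowsky--Gafni level-based construction and its usual invariant $|A_\ell|\le\ell$, which is exactly what the paper is citing from~\cite{BG92}; the paper states this lemma without its own proof, treating it as a known result. The paper does later reuse essentially the same algorithm (with the level counter running upward as an iteration index $\rho$, and the halting condition $|S|=n+1-\rho$) in Algorithm~\ref{alg:unboundedICsimIS} to simulate one IS round in the IC model, with a brief correctness sketch matching your argument.
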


\paragraph{The Iterated Model.} 

The model is defined here assuming processes systematically use immediate snapshots to communicate, and so is called the \emph{iterated immediate snapshot} model, or IIS for short. The IIS model is similar to the shared-memory model, excepted that the processes have access to an infinite sequence of shared memories $M_1,M_2,\dots$. Each memory $M_r$ is composed of $n$ SWMR registers $R_{r,i}$, $i=1,\dots,n$, where $p_i$ is the only process allowed to write into~$R_{r,i}$. The crucial characteristic of the model is that, for every $r\geq 1$, when a process performs its $r$th immediate snapshot, it does so in memory~$M_r$. By definition, the iterated model can be simulated by the non-iterated model (assuming unbounded registers). The interesting fact is that both models are computationally equivalent\footnote{This is not obvious, as illustrated by the example of an execution $\m{E}$ in which $p_1$ is faster than all the other processes, which are however supposed to be just one step behind $p_1$ in their immediate snapshots. In this execution~$\m{E}$, $p_1$  runs ``solo'' in the IIS model in the sense that none of its immediate snapshots ever return a value written by another process. This is in contrast to the non-iterated model, in which, at step $r$ in~$\m{E}$, $p_1$ reads all the values written by the other processes at their step $r-1$.}.   

\begin{lemma}\emph{(\cite{BorowskyG97simple})}\label{lem:power-of-IIS}
For every task $\Pi$ and for every $n\geq 2$, $\Pi$ is solvable by $n$ processes in the wait-free shared-memory model if and only if $\Pi$ is solvable by $n$ processes in the wait-free IIS model.
\end{lemma}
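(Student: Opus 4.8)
The plan is to establish the two implications separately; the forward direction (IIS $\Rightarrow$ shared memory) is routine, and the reverse direction is the substantial one (the Borowsky--Gafni simulation). For \emph{IIS $\Rightarrow$ shared memory}, given a wait-free IIS algorithm solving $\Pi$, we simulate it in the ordinary model. The only mismatch is that the ordinary model has the single array $R_1,\dots,R_n$ whereas IIS needs the infinite family $M_1,M_2,\dots$; we resolve this using that registers are unbounded, letting $p_i$ keep in $R_i$ the whole history of what it has ever written, each entry tagged with its round number. To simulate the $r$-th immediate snapshot of the IIS algorithm, $p_i$ invokes the immediate-snapshot object of Lemma~\ref{lem:possibility-of-immediate-snapshot} on the whole array and keeps the round-$r$ component of each returned history. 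Since that object already enforces self-inclusion, the chain (containment) property, and immediacy, the sequence of views $p_i$ obtains across rounds is the view pattern of \emph{some} legal IIS execution, so the IIS algorithm --- being correct on \emph{all} IIS executions, including, e.g., the solo one of the footnote --- remains correct here.

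For \emph{shared memory $\Rightarrow$ IIS}, first, using Lemma~\ref{lem:possibility-of-immediate-snapshot} together with the classical wait-free atomic-snapshot construction~\cite{AfekADGMS93}, we may assume the given algorithm $\mathcal{A}$ is a full-information protocol in the atomic-snapshot model: each process repeatedly writes its state and then snapshots, for a number of rounds that may depend on its input and on what it reads, and finally decides. In the IIS model every process uses the rounds $M_1,M_2,\dots$ to cooperatively grow a prefix of one common legal execution of $\mathcal{A}$: at IIS round $r$, $p_i$ writes into $M_r$ the partial schedule of $\mathcal{A}$ it currently holds (together with the simulated states it has derived), takes its immediate snapshot, merges the partial schedules it reads into the most advanced schedule consistent with all of them, advances the simulation of $\mathcal{A}$ as far as that merged schedule permits, and outputs the value $\mathcal{A}$ prescribes for $p_i$ as soon as the simulated run has $p_i$ decide.

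Two properties must then be checked. \textbf{Consistency:} all processes simulate prefixes of the \emph{same} execution of $\mathcal{A}$; this is forced by the structure of immediate-snapshot views --- within a round they are linearly ordered by inclusion, and across rounds they only grow --- so the ``who learned which operation by when'' relation is acyclic and determines a single global interleaving of $\mathcal{A}$'s operations of which every local schedule is a prefix, while the merge rule is applied deterministically, so independently computed extensions never disagree. \textbf{Progress:} every process correct in the IIS execution decides; in each IIS round there is a nonempty block of processes whose immediate snapshot is minimal, a process can always extend its schedule from what it has already read without having to wait for a process it has not yet seen, and since $\mathcal{A}$ is wait-free each such restricted run of $\mathcal{A}$ reaches a decision after finitely many $\mathcal{A}$-rounds, hence after finitely many IIS rounds. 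Because $\mathcal{A}$ solves $\Pi$, any value decided in the simulated run is legal for the input configuration, so $\Pi$ is solved in the IIS model. (A more topological route would go through the asynchronous computability theorem~\cite{HerlihyS99} plus the cofinality of iterated chromatic subdivisions among subdivisions of the input complex, but the operational argument above is self-contained given Lemma~\ref{lem:possibility-of-immediate-snapshot}.)

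The main obstacle is exactly the tension between these two points in the reverse direction: the merge rule for partial schedules must be deterministic enough that separately computed extensions of the simulated execution can never conflict, yet permissive enough that the ``fast'', minimal-view processes are never blocked waiting for slower ones --- and it is here that wait-freedom of $\mathcal{A}$ is used. This is also why the footnote's solo IIS execution causes no trouble: a solo execution of $\mathcal{A}$ is still a terminating execution that the slower processes later recognize as a prefix of their own, more advanced, schedules.
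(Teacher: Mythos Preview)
The paper does not prove this lemma: it is stated in the preliminaries with a citation to~\cite{BorowskyG97simple} and used thereafter as a black box, so there is no proof in the paper against which to compare your attempt.

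As a stand-alone sketch of the Borowsky--Gafni result, your outline has the right shape but leaves real work undone in both directions. In the forward direction, invoking ``the immediate-snapshot object of Lemma~\ref{lem:possibility-of-immediate-snapshot}'' repeatedly on the single array and projecting to the round-$r$ coordinate is not obviously sound: that lemma yields a \emph{one-shot} object, and across separate invocations you lose the concurrency pattern (blocks of processes that see each other simultaneously) that the IIS semantics require. The standard fix --- and what the paper itself does in Section~\ref{sec:iterated} for a related purpose --- is to partition each unbounded $R_i$ into virtual registers $R_{i,r}$ and run a fresh one-shot instance per round. In the reverse direction, the sentence ``merges the partial schedules \dots\ into the most advanced schedule consistent with all of them'' is precisely where the whole argument lives, and you have not specified the rule. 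In the actual construction of~\cite{BorowskyG97simple} the merge is not a generic join: the chain structure of IIS views at each round is used to \emph{define} the linearization order of the simulated snapshot steps (smaller view scheduled first), and the inductive invariant that all processes agree on a common growing prefix depends on that specific choice. Your Consistency and Progress paragraphs assert the needed conclusions without deriving them from any stated rule, so what you have is a plan for the proof rather than the proof itself.
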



\section{High Level Ideas of the Proofs}
\label{sec:sketch}

This section is aiming at providing the readers with some intuition about the techniques used to established our results. The formal proofs can be found in the  sections further in the paper. 

\subsection{Non Universality when a Majority of Processes May Fail}
\label{subsec:sketch:main:wait-free-not-universal}

The proof of Theorem~\ref{main:wait-free-not-universal} is based on demonstrating that, given $f:\mathbb{N}\to\mathbb{N}$, $n>2$, and $t\in (\frac{n}{2},n-1]$, there exists $\epsilon>0$ sufficiently small such that $\epsilon$-agreement cannot be solved by $n$ processes in the $t$-resilient shared-memory model with registers of at most $f(n)$ bits. This will be sufficient for establishing the theorem since, thanks to Lemma~\ref{lem:possibility-of-approx-agreement}, for every $\epsilon>0$, $\epsilon$-agreement is solvable by $n$ processes in the wait-free model, and therefore in the stronger $t$-resilient model as well, whenever using unbounded-size registers. To see why $\epsilon$-agreement cannot be solved by $n>2$ processes whenever $\epsilon$ is too small, let us consider the simplest case of wait-free computing (i.e., up to $n-1$ processes can crash). 

Intuitively, registers of size at most $s=f(n)$ bits causes the following issue. 
If two processes $1$ and $2$ run solo, and reach agreement, their outputs are at most $\epsilon$ apart from each other. However, a register of size~$s$ bits can take at most $2^s$ different values, and therefore, if $\epsilon\ll \frac1{2^s}$, they cannot accurately store their outputs in the shared registers. As a consequence, if a third process $3$ does not take any step before $p_1$ and $p_2$ terminate, it cannot determine its output with sufficient precision for guaranteeing that it will be at most $\epsilon$ apart from the outputs of $p_1$ and~$p_2$. 

Slightly more formally, let us assume that, for every $\epsilon>0$, there exists a wait free algorithm $A_\epsilon$ solving $\epsilon$-agreement for $n\geq 3$ processes, using registers of size $s$ bits. We concentrate on the behavior of $A_\epsilon$ restricted to the executions in which $p_1$ has input~0, and  $p_2$ has  input~1. Note that $p_1$ running solo must output~0, and $p_2$ running solo must output~1. Let $G=(V,E)$ be the graph in which every vertex is a pair $(i,\sigma)$ where $i\in\{1,2\}$ is a process identifier, and $\sigma$ is an final internal state of process~$i$ in some execution of~$A_\epsilon$, i.e., a state in which $p_i$ outputs and terminates. There is an edge between two vertices $(i,\sigma_i)$ and $(j,\sigma_j)$ of~$G$ if $i\neq j$ and there is an execution in which processes $i$ and $j$ terminate in respective internal state $\sigma_i$ and $\sigma_j$. Note that $G$ includes a vertex $v_1=(1,\sigma_1)$ corresponding to a solo execution of~$p_1$, as well as a vertex $v_2=(2,\sigma_2)$ corresponding to a solo execution of~$p_2$. To each vertex $(p,\sigma)$ of $G$ corresponds  the output value returned by process~$p$ in (final) state~$\sigma$. The value associated to $v_1$ is~0, and the value associated to $v_2$ is~1. A crucial observation is that $G$  must include a path between the vertex~$v_1$ and the vertex $v_2$. Indeed, if these two vertices were disconnected in~$G$, then $p_1$ and $p_2$ could solve consensus, by having all vertices of the connected component of $G$ containing~$v_1$ outputting~0, and  all vertices of the connected component of $G$ containing~$v_2$ outputting~1. Since consensus is not solvable wait-free (cf. Lemma~\ref{lem:impossibility-of-consensus}), $v_1$ and $v_2$ must be connected by a path $P$ in $G$ (see, e.g.,~\cite{BMZ90charac} for more details about such an argument). 

Let $P=(w_0,w_1,\dots,w_{2k+1})$ with $w_0=v_1$ and $w_{2k+1}=v_2$. Let us associate to each vertex $w_i$ of $P$ the output value~$y_i\in[0,1]$ produced by~$A_\epsilon$, for $i=0,\dots,2k+1$.  By definition of $\epsilon$-agreement, we must have $|y_i-y_{i+1}|\leq \epsilon$. In particular, this implies that the path $P$ must be of length at least $1/ \epsilon$, because its two extremities, $w_0=v_1$ and $w_{2k+1}=v_2$, output~$y_0=0$ and $y_{2k+1}=1$, respectively. That is, $2k+1\geq 1/\epsilon$.
For every $i\in\{0,\dots,k\}$, let $\ell_i$ be the  largest index such that $y_{\ell_i}= 2i\epsilon$. We focus on the $k+1$ pairs of vertices $(w_{\ell_i},w_{\ell_i+1})$ for $i=0,\dots,k$. Note that $\ell_0<\ell_1<\dots<\ell_k$, and, for every $i\in\{0,\dots,k\}$,  $y_{\ell_i+1}=(2i+1)\epsilon$, which in particular implies that $w_{\ell_i+1}\neq w_{\ell_{i+1}}$. So, overall, we have that,  for $i=0,\dots,k$, 
$
(y_{\ell_i},\; y_{\ell_i+1})=\big(2i\epsilon,\; (2i+1)\epsilon\big).
$
Once $p_1$ and $p_2$ have terminated, let $p_3$ start executing $A_\epsilon$. The output of $p_3$ will depend only on of  its own input value, and on the values that it reads in the registers $R_1$ and $R_2$ of $p_1$ and~$p_2$. 

Since the registers have size $s$ bits, there are at most $2^{2s}$ different values for the content of the pair $(R_1,R_2)$. If 
$1/\epsilon > 2^{2s+1}$, then, by the pigeon hole principle, there are two distinct pairs of vertices $(w_{\ell_i},w_{\ell_i+1})$ and $(w_{\ell_j},w_{\ell_j+1})$ corresponding to a same content for the pair of registers $(R_1,R_2)$. However, $p_1$ and $p_2$ output $(2i\epsilon,\; (2i+1)\epsilon)$ for $(w_{\ell_i},w_{\ell_i+1})$, whereas $p_1$ and $p_2$ output $(2j\epsilon,\; (2j+1)\epsilon)$ for $(w_{\ell_j},w_{\ell_j+1})$. Since $i\neq j$, there are thus four different outputs for $p_1$ and $p_2$ for which $p_3$ receive the same information stored in the registers $R_1$ and $R_2$ of these two processes. It follows that, whatever is the output for $p_3$, this output is at least $2\epsilon$ apart from one of these four values. As a consequence, there is an execution for which $A_\epsilon$ does not solve $\epsilon$-agreement, a contradiction. 

Note that this sketch of proof heavily uses the wait-free assumption. For instance, if at most $n-2$ processes could fail instead of at most $n-1$ as in the wait-free case, then, in the scenario depicted in the proof, the two processes $1$ and $2$ could wait for each other, and coordinate so that to solve consensus. Instead, in the wait-free setting, these processes are bounded to output values that are $\epsilon$ apart from each other in some executions, precisely because none of the two processes can ever wait for the other process, at any point in time during all executions.  The formal proof, for the wait-free model as well as for all $t$-resilient models, $\frac{n}2< t <n$, can be found in Section~\ref{sec:no_dum}.

\subsection{Constant-Size Registers are Universal for 2-Process Systems}

To establish Theorem~\ref{theo:case-of-two-proc}, that is, to show that any task solvable by 2~processes in the wait-free shared-memory model with unbounded-size registers can be solved  by 2~processes  in the wait-free shared-memory model with 1-bit registers, we proceed in two steps. 

First, we show that, for every $\epsilon>0$, $n=2$ processes can solve $\epsilon$-agreement with 1-bit registers. In this algorithm, each of the two processes alternates writing 0 and 1 in its register, and reading the value in the other register. It stops whenever it reads twice the same value in the register of the other process. Intuitively, this allows the two processes to check whether they are running concurrently, and to stop whenever they become desynchronized, at which point they can decide output values that are close together. Which value each of the two processes adopt depends on the parity of the round at which they become desynchronized, and on which of the two processes went faster than the other. Nevertheless,  the fact that the two processes are not desynchronized by more than one round allows them to adopt  values close to each other. The number of rounds is limited up to $\Theta(1/\epsilon)$ read/write operations, which is enough for guaranteeing a precision  of $\epsilon$ in case the two processes run concurrently in locksteps. 

Second, we start from the known fact that a black box solving $\epsilon$-agreement can be used for solving any wait-free solvable task, in 2-process systems~\cite{bookHerlihyKR2013}. In a nutshell, we show that this holds even when using 1-bit registers. To get the intuition of why this is true, we use the classical framework introduced by Biran, Moran and Zaks~\cite{BMZ90charac}, who gave necessary
and sufficient conditions for tasks to be solvable in the $1$-resilient model (which coincides with the wait-free model for $n=2$ processes). 
Roughly, we use the same type of graphs as the one used in Section~\ref{subsec:sketch:main:wait-free-not-universal}. That is, 
at any point in time of the execution of a protocol, all possible states of the system can be described by a graph whose vertices are pairs $(p,\sigma)$ where $p\in\{1,2\}$ is a process, and $\sigma$ is a possible local state of $p$ at the considered time. Two vertices are linked by an edge if the two vertices correspond to two different processes in compatible states. Starting from the input graph $G_0$ including all vertices $(p,x)$ where $x$ is a possible input value for process $p$, the computation evolves as a sequence of graphs $(G_t)_{t\geq 0}$, where $G_t$ displays all possible states of the system after $t$ steps. Assuming that all processes terminate at the same step~$T$, an algorithm specifies an output values for each vertex of $G_T$, i.e., for each possible state $\sigma$ of a process~$p$ after $T$ steps. Now, all possible output states also form a graph $H$ including all vertices $(p,y)$ where $y$ is a possible output value for process $p$. So the algorithm maps every vertex $(p,\sigma)$ of $G_T$ to a vertex $(p,y)$ of $H$. Moreover, this mapping must be a morphism, i.e., it must preserve the edges, as any possible global states after $T$ steps, that is, any possible edge $\{(1,\sigma_1),(2,\sigma_2)\}$ of $G_T$, must be mapped to a legal global output states, i.e., to an edge $\{(1,y_1),(2,y_2)\}$ of $H$. 

The structure of the graph $G_T$ depends on the communication model of the system. In the case of 1-resilient computing, this graph can be viewed as a form of refinement of the input graph $G_0$, mostly obtained by subdividing the edges of~$G_0$, which intuitively corresponds to capturing all possible interleavings of the read and write operations performed by the two processes (see~\cite{BMZ90charac}). Each process $p$ knows in which vertex $(p,\sigma)$ of $G_T$ it stands after it has performed $T$ steps, starting from its input $(p,x)$ and repeating $T$ times (1)~writing  its entire current history in memory, and (2)~reading the current history of the other process. Here comes the use of approximate agreement. An alternative approach is to let the two processes solving $\epsilon$-agreement, for some suited $\epsilon$ depending on $T$, and  small enough so that there is a mapping from the output graph of $\epsilon$-agreement to $G_T$. That is, given an input edge $\{(1,x_1),(2,x_2)\}$ of $G_0$, the two processes can $\epsilon$-agree on an edge $\{(1,\sigma_1),(2,\sigma_2)\}$ of $G_T$ without exchanging histories, and actually using 1-bit registers only. Once they have agreed on $\{(1,\sigma_1),(2,\sigma_2)\}$, the two processes can map the edge to a legal output configuration. 
The formal proof can be found in Section~\ref{sec:equality}.

\paragraph{Remark.}

Our algorithm  for solving $\epsilon$-agreement with 1-bit registers used to establish Theorem~\ref{theo:case-of-two-proc} has step-complexity equal to $O(\frac1\epsilon)$, which is exponentially slower than the fastest algorithm solving $\epsilon$-agreement with unbounded registers --- the step complexity of $\epsilon$-agreement is $\Theta(\log\frac1\epsilon)$ (see, e.g.,~\cite{FuggerNS21}). In Theorem~\ref{thm:lcis} (cf. Section~\ref{sec:faster-eps-agr}), we show that an algorithm solving $\epsilon$-agreement with optimal step-complexity  $O(\log\frac1\epsilon)$ can be designed for a shared-memory with constant-size registers, that is, registers on $O(1)$ bits instead of just 1~bit.  
 
\subsection{Universality when a Minority of Processes May Fail}

To establish Theorem~\ref{theo:small-values-of-t}, given an algorithm $\m{A}$ solving a task $\Pi$ in the $t$-resilient shared-memory model with unbounded registers, one aims at constructing an algorithm $\m{B}$ solving $\Pi$ with registers of size $O(t)$ bits. This construction proceeds in three phases. 

\begin{enumerate}
\item Since $t<\frac{n}{2}$, the existence of  $\m{A}$ implies~\cite{AttiyaBD95} that there exists an $n$-process \emph{message-passing} algorithm $\m{A}'$ solving $\Pi$ when at most $t$ processes may crash. Recall that, in the message-passing model, instead of communicating via a shared memory, the processes communicate by exchanging messages. Every process can send a message to every other process, hence the communication network is complete. Communication channels are FIFO, and no messages are lost. However, the time required to deliver a message is unbounded. 

\item We show that the $t$-resilient message-passing model (which assumes a complete network) can be simulated by a network with fewer communication links between processes,  as long as the network is $(t+1)$-connected, i.e., removing at most $t$ nodes of the network keeps it connected. This results in an algorithm  $\m{A}''$.  

\item We rely on the  \emph{alternating bit} protocol (see~\cite{bartlett1969note,lynch1968computer}), to simulate algorithms for the $n$-node $(t+1)$-connected network in the shared memory model with registers of size $O(t)$ bits. Applied to algorithm $\m{A}''$, this results in the desired algorithm~$\m{B}$.
\end{enumerate}

\noindent More details can be found in Section~\ref{sec:minority}.

\subsection{Universality of the Iterated Immediate Snapshot (IIS) Model}
 
For proving Theorem~\ref{theo:iterated-immediate-snapshot}, we show that,  for any $n\geq 2$, any task solvable wait-free by $n$ processes in the IIS model with unbounded registers is solvable  wait-free  in the  IIS model with 1-bit registers (at each level of the memory). To establish this result, we show that \emph{full-information} protocols, i.e., protocols using unbounded registers in which processes can write all the information acquired so far, can be emulated in the  IIS model using $1$-bit registers. We actually consider a weaker variant of the IIS model, where the immediate snapshots are merely replaced by  \emph{collect} operations. In the iterated collect model (IC), every memory $M_r$, $r\geq 1$, is read by each process using a $\mathsf{collect}()$ operation~\cite{attiya2004distributed}, which merely consists of reading the $n$ registers of $M_r$ one by one, in an arbitrary order. We then proceed in two steps: 

\begin{enumerate}
\item We show that any full-information protocol in the IIS model with unbounded size registers can be simulated in the IC model with unbounded size registers. For this, we simulate one round of the IIS model with $n$ $\mathsf{write}$/$\mathsf{collect}$ iterations using the  (immediate) snapshot algorithm presented by Borowsky and Gafni in~\cite{BG92}. 

\item We show that any full-information protocol solving a task in the IC model with unbounded-size registers can be simulated in the IIS model using $1$-bit registers.  In a sense an iterated model provides a counter to the processes ``for free'', corresponding to the index~$r$ of the memory $M_r$. We show that, with this implicit counter, 1-bit registers suffice for the model to be universal. Intuitively, instead of writing a large value~$v$ in memory, a process can merely write~1 in its register in~$M_v$, and write~0 in all the other registers. This approach is over simplified, and does not suffice for  simulating the IIS model using $1$-bit registers. Neverheless, this provides hint on how one can play with the level~$r$ of the memory~$M_r$ to encode large value. 
\end{enumerate}

\noindent More details can be found in Section~\ref{sec:iterated}. The rest of the paper is dedicated to the formal proofs of the arguments sketched in this section.


\section{Non Universality when a Majority of Processes May Fail}
\label{sec:no_dum}

In this section, we formally prove Theorem~\ref{main:wait-free-not-universal}, that is, we show that when a majority of the processes may fail, restricting the size of the registers strictly limits the computational power of the shared memory model. Recall that, for every $n\geq 2$, and for every $\epsilon>0$, $\epsilon$-agreement is solvable wait-free by $n$ processes using unbounded registers (cf. Lemma~\ref{lem:possibility-of-approx-agreement}). The proposition below  implies Theorem~\ref{main:wait-free-not-universal}, keeping in mind that, for every $\epsilon>0$, since $\epsilon$-agreement is solvable wait-free, it is solvable under the stronger $t$-resilient model, for every $t\in\{1,\dots,n-1\}$, using registers of unbounded size.  

\begin{proposition}
\label{prop:no_dum}
For any function $f:\mathbb{N}\to\mathbb{N}$, for any integer $n>2$, and for any integer $t$ with $\frac{n}{2} < t < n$, there exists $\epsilon>0$ such that $\epsilon$-agreement cannot be solved by $n$ processes in the $t$-resilient shared-memory model with registers of at most $f(n)$ bits. 
\end{proposition}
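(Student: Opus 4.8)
The strategy is to make rigorous the sketch in Section~\ref{subsec:sketch:main:wait-free-not-universal}, but with the crucial extra ingredient needed to go from the wait-free case to the general $t$-resilient case with $\frac n2<t<n$. Fix $f$, $n>2$, and $t$ with $\frac n2<t<n$. We will choose $\epsilon=1/K$ for a large integer $K$ depending only on $s:=f(n)$, and argue by contradiction: suppose $A_\epsilon$ is a $t$-resilient algorithm solving binary $\epsilon$-agreement for $n$ processes using registers of at most $s$ bits. The key point allowing us to reuse the two-process argument is that, since $t>n/2$, we can partition the $n$ processes into two groups $P_1$ and $P_2$ with $|P_1|,|P_2|\le t$ (take $|P_1|=\lceil n/2\rceil\le t$ and $|P_2|=\lfloor n/2\rfloor\le t$, using $t\ge\lceil n/2\rceil$ which holds since $t>n/2$ and $t$ is an integer). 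Because each group has at most $t$ members, in a $t$-resilient execution it is legitimate for \emph{all} processes of one group to be crashed or arbitrarily slow. Thus the processes of $P_1$ (resp.\ $P_2$), running alone, must terminate, and the situation collapses to a ``two-party'' interaction between the coalitions $P_1$ and $P_2$, exactly as processes $1$ and $2$ in the sketch --- except that now ``process $1$'' is the coalition $P_1$ communicating through the registers $\{R_i : i\in P_1\}$, and similarly for $P_2$.

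Concretely, I would proceed as follows. First, restrict attention to input configurations in which every process of $P_1$ has input $0$ and every process of $P_2$ has input $1$; then a solo run of coalition $P_1$ must output $0$ everywhere, and a solo run of $P_2$ must output $1$ everywhere (by the validity clause of $\epsilon$-agreement applied to the sub-execution where only one coalition takes steps, which is a legal $t$-resilient execution). Second, build the analogue of the graph $G$: vertices are pairs $(c,\sigma)$ where $c\in\{1,2\}$ indexes a coalition and $\sigma$ is a reachable ``final joint state'' of coalition $P_c$ (a tuple of terminated local states, one per process of $P_c$, together with the outputs they decided); put an edge between $(1,\sigma_1)$ and $(2,\sigma_2)$ when there is a $t$-resilient execution in which coalition $P_1$ ends in joint state $\sigma_1$ and coalition $P_2$ ends in joint state $\sigma_2$. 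As in the sketch, the solo-$P_1$ vertex $v_1$ and the solo-$P_2$ vertex $v_2$ must lie in the same connected component: otherwise the coalitions could solve binary consensus ($P_1$-side component outputs $0$, $P_2$-side component outputs $1$), and since $|P_1\cup\{\text{any one of }P_2\}|$ could exceed $t$ this needs a small argument --- but in fact consensus impossibility already holds $1$-resiliently (Lemma~\ref{lem:impossibility-of-consensus}), hence a fortiori $t$-resiliently, so a consensus solution is impossible and $v_1,v_2$ are connected. Let $P=(w_0,\dots,w_{2k+1})$ be a path from $v_1$ to $v_2$; since adjacent vertices correspond to a common execution, all decided outputs occurring in $w_i$ and in $w_{i+1}$ must be pairwise within $\epsilon$, and as the path goes from output $0$ to output $1$ we get $2k+1\ge 1/\epsilon=K$. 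Extract, exactly as in the sketch, $k+1$ consecutive pairs $(w_{\ell_i},w_{\ell_i+1})$ whose associated $P_1$-output/$P_2$-output pairs are $(2i\epsilon,(2i+1)\epsilon)$ for $i=0,\dots,k$.

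Third comes the counting step, which I expect to be the only subtle point. After coalitions $P_1$ and $P_2$ have both terminated, let a \emph{third} process $p\notin P_1\cup P_2$ start; this requires $n\ge 3$ (hence the hypothesis $n>2$) and that $p$ was among the at most $t$ slow/crashed processes, which is consistent since $|P_1|,|P_2|\le t$ and we only need that run to be a legal $t$-resilient execution of $A_\epsilon$. The final state and output of $p$ depend only on $p$'s own input and on what $p$ reads, and after $P_1,P_2$ have halted the memory contents of the registers $\{R_i: i\in P_1\cup P_2\}$ are some fixed tuple. Crucially, each such register holds at most $2^s$ values, and $|P_1\cup P_2|=n$, so there are at most $2^{sn}$ possible memory snapshots that $p$ can ever observe once the two coalitions have stopped. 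Choosing $K=1/\epsilon>2^{sn+1}$, the pigeonhole principle forces two indices $i\ne j$ with $(w_{\ell_i},w_{\ell_i+1})$ and $(w_{\ell_j},w_{\ell_j+1})$ inducing the \emph{same} memory snapshot; but these two scenarios carry four distinct output values $2i\epsilon,(2i+1)\epsilon,2j\epsilon,(2j+1)\epsilon$ spanning a range $\ge 2\epsilon$, and $p$'s deterministic output (a function of its input and this snapshot alone) cannot be within $\epsilon$ of all four. Hence in one of these $t$-resilient executions $A_\epsilon$ violates $\epsilon$-agreement --- a contradiction. The only genuine care needed is to verify that the scenario ``run $P_1$ solo to completion, then $P_2$ solo to completion, then $p$ solo'' is realizable as a single $t$-resilient execution whose interleaving is consistent with the path $P$ (it is, because $P$ was built from such executions, and the register contents at the handover points are well-defined); everything else is the pigeonhole argument above. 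Taking $\epsilon:=2^{-(sn+1)-1}$ with $s=f(n)$ yields the claimed $\epsilon$, completing the proof.
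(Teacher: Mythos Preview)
Your proposal contains a structural error that breaks the argument. You partition \emph{all} $n$ processes into the two coalitions $P_1$ and $P_2$ (with $|P_1|=\lceil n/2\rceil$, $|P_2|=\lfloor n/2\rfloor$), and then in the ``counting step'' you write: ``let a \emph{third} process $p\notin P_1\cup P_2$ start; this requires $n\ge 3$\dots'' But $P_1\cup P_2=\{1,\dots,n\}$, so no such $p$ exists. The entire pigeonhole contradiction relies on a fresh observer whose output is a function of the frozen register contents, and your setup leaves no such observer. (A related loose end: a coalition does not have a single output value --- each process in $P_c$ decides its own value --- so the extraction of pairs ``$(2i\epsilon,(2i+1)\epsilon)$'' along the path would need more care even if the third-process step worked.)

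The paper avoids this by choosing the groups asymmetrically. It lets only the processes $\{1,\dots,n-t+1\}$ take steps first; since $t>n/2$ this group has size $n-t+1\le t$, so an execution in which the remaining $t-1$ processes are initially silent is still $t$-resilient. The claim that this small group alone can realize every output pair $O_\ell=\{\ell/k,(\ell+1)/k\}$ is proved via a $1$-resilient consensus reduction \emph{inside that group} (if some $O_\ell$ were never the output set, the $n-t+1$ processes could decide $0$ or $1$ according to which side of $\ell/k$ their $A_\epsilon$-output falls). Then, after those $n-t+1$ processes decide and crash --- still at most $t$ failures --- the remaining $t-1\ge 1$ processes $\{n-t+2,\dots,n\}$ wake up and play the role of your missing ``third process.'' Pigeonhole on the $(2^{f(n)})^{n-t+1}$ possible contents of registers $R_1,\dots,R_{n-t+1}$ then yields two indistinguishable executions with disjoint output pairs $O_{2i},O_{2j}$, forcing a violation. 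To repair your argument, you should shrink one of the two coalitions to leave at least one process outside (which is exactly what the condition $n-t+1<n$, i.e.\ $t\ge 1$, together with $n-t+1\le t$, permits).
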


\begin{proof}
 Again, the proof is by contradiction.  Let us assume that, for every $\epsilon>0$, there exists a $t$-resilient algorithm $A_\epsilon$  solving $\epsilon$-agreement for $n<2t$ processes, using registers of size $f(n)$ bits. It is sufficient to consider  $\epsilon=1/k$ for some odd integer $k>1$. For such an $\epsilon$, the sets of possible inputs and outputs are respectively  $\m{I} = \{0,1\}$ and $\m{O} = \{0,\frac{1}{k},\ldots,\frac{k-1}{k},1\}$. For every $\ell \in \{0,\dots,k-1\}$, let 
 \[
 O_\ell = \left\{\frac{\ell}{k}, \frac{\ell+1}{k}\right\}.
 \]
That is, $O_\ell$ is a valid legal set of outputs in an execution in which the input set is $\{0,1\}$. 

\begin{claim}\label{clm:Ol}
For every $\ell\in\{0,\dots,k-1\}$, there exists an execution of $\m{A}_\epsilon$ satisfying that 
(1)~only  processes $1,\ldots, n-t+1$ take steps, and (2)~at least $n-t$ processes decide, and the set of outputs is equal to~$O_\ell$. 
\end{claim}

To establish that claim, let us assume, for the purpose of contradiction, that  there exists $\ell\in\{0,\dots,{k-1}\}$ such that, for every  execution of $\m{A}_\epsilon$ in which only the processes $1,\ldots, n-t+1$ take steps, the set of outputs is not~$O_\ell$.  To establish a contradiction, we are going to show that, under this assumption, processes $1,\ldots,n-t+1$ could use algorithm  $\m{A}_\epsilon$ to solve $1$-resilient binary  consensus, contradicting Lemma~\ref{lem:impossibility-of-consensus}. Note that since $n\geq 3$, and $t>n/2$, we have always $t>1$, and thus $n-t+1\leq n-1$.

Let us assume that every process $i \in \{1,\ldots,n-t+1\}$ is given an input for binary consensus (i.e., $0$ or $1$). Process~$i$ first runs algorithm $\m{A}_\epsilon$. Since we are interested in solving binary consensus in the $1$-resilient model, we consider only the executions in which at most one process in $\{1,\ldots,n-t+1\}$ fails. Hence, in the corresponding execution of $\m{A}_\epsilon$, we have that $t-1$ processes fail initially, and at most  1~additional process fails, perhaps after having taken some steps. Since $\m{A}_\epsilon$ tolerates $t$ failures, each process  $i \in \{1,\ldots,n-t+1\}$ that does not fail will eventually produce an output, say~$d_i$. To solve binary consensus, process~$i$ outputs
\[
\mbox{out}(p_i)=\left\{\begin{array}{ll}
 0 & \mbox{if $d_i \leq \frac{\ell}{k}$} \\
 1 & \mbox{otherwise.}
\end{array}\right.
\]
In this way, binary consensus is solved. Indeed, if the only input is~$0$ (resp.,~$1$), then the only value returns by $\m{A}_\epsilon$ is~$0$ (resp.,~$1$), by the validity property of $\epsilon$-agreement. Moreover, let $D$ be the set of output values returned by $\m{A}_\epsilon$. As $\m{A}_\epsilon$ solves $\epsilon$-agreement, we have 
\[
D \subseteq O_{\ell'} = \left\{\frac{\ell'}{k},\frac{\ell'+1}{k}\right\}
\] 
for some $\ell'\in\{0, \dots, k-1\}\smallsetminus \{\ell\}$. If $D$ is a singleton, then all process that decide have outputted the same value, as desired. If $D=O_{\ell'}$, then, as $\ell' \neq \ell$, we have 
\[
\frac{\ell'}{k} < \frac{\ell'+1}{k} \leq \frac{\ell}{k} \;\;\mbox{or}\;\;  \frac{\ell+1}{k} \leq \frac{\ell'}{k} < \frac{\ell'+1}{k}.
\]
Every process that decides thus output~$0$ in the first  case, and $1$ in the second case, from which agreement follows. This contradicts the impossibility of $1$-resilient binary consensus (cf. Lemma~\ref{lem:impossibility-of-consensus}), and completes the proof of Claim~\ref{clm:Ol}.

\bigbreak

We now set 
\[
k = 2\left(2^{f(n)}\right)^{n-t+1} + 1.
\]
For each $\ell \in \{0,\ldots, k-1\}$, let $\m{E}_\ell$ be a finite execution of $\m{A}_\epsilon$ in which only processes in $\{1,\ldots, n-t+1\}$ take steps, and the set of outputs is~$O_\ell$. Such an execution exists by Claim~\ref{clm:Ol}. Let $w_\ell$ be the binary word formed by concatenating the contents of the registers $R_1,\ldots,R_{n-t+1}$ immediately after each process $1,\ldots, n-t+1$ has decided. Since each register has size at most $f(n)$ bits, the set $W$ of possible such words satisfies 
\[
  |W| = \left(2^{f(n)}\right)^{n-t+1}. 
\]

Let us consider the set of $\frac{k-1}{2} +1$ output sets $\{O_0,O_2,O_4,\ldots,O_{k-1}\}$.
These sets are mutually exclusive in the sense that if, in some execution of $\m{A}_\epsilon$, the set of outputs of some processes is exactly $O_{2i}$, for some $i\in\{0,\dots, \frac{k-1}{2}\}$, then no other processes output a value $d \in O_{2j}$ for any $j \neq i$ in that execution. Indeed, assume that in some execution $\m{E}$,  $d \in O_{2j}$ along with the values 
\[
d_{2i} = \frac{2i}{k}, \;\mbox{and}\; d_{2i+1} = \frac{2i+1}{k} 
\]
are output. If $j< i$, we have 
\[
d_{2i+1} - d = \frac{2i+1}{k} - d \geq \frac{(2i)+1 - (2(i-1)+1)}{k} = \frac{2}{k} > \epsilon.
\]
Similarly, if $i< j$, then $d-d_{2i} \geq \frac{2}{k}$. Hence, in both case $\epsilon$-agreement is not satisfied in $\m{E}$.  

Thanks to the setting of~$k$, and by the pigeon hole principle, we get that there exists $i\neq j$ in $\{0,\dots,\frac{k-1}{2}\}$ such that $w_{2i} = w_{2j}$. Let $\m{E}'_{2i}=\m{E}_{2i}\circ \m{E}$ be an extension of $\m{E}_{2i}$ in which, immediately after $\m{E}_{2i}$, all processes $1,\ldots,n-t+1$ crash, and all the $t-1$ processes $n-t+2,\ldots,n$  take steps, say in a round robin fashion), until  one of them  decides  some value~$d$. Note that  $d \in O_{2i}$ since both values $\frac{2i}{k}$ and $\frac{2i+1}{k}$ have been decided in $\m{E}_{2i}$.  The inputs of processes $\{n-t+2,\ldots,n\}$ are defined arbitrarily. As $t > \frac{n}{2}$, we have $t \geq n-t+1$, and thus $\m{E}'_{2i}$ is an execution with at most $t$ failures. As $\m{A}_\epsilon$ tolerates up to $t$ failures, at least one of the processes $n-t+2,\ldots,n$ must decide after taking finitely many steps. Since, in $\m{E}_{2i}$ as well as in $\m{E}_{2j}$, no processes in $\{n-t+2,\ldots,n\}$ take step, and the state of the shared registers is the same at the end of both executions, the executions $\m{E}_{2i}$ and $\m{E}_{2j}$ are indistinguishable for the processes $n-t+2,\ldots,n$. The execution $\m{E}'_{2j} = \m{E}_{2j} \circ \m{E}$ is thus a valid execution of $\m{A}_\epsilon$, indistinguishable from $\m{E}'_{2i}$ for processes in $\{n-t+2,\ldots,n\}$. It follows that $d$ is also the output value in $\m{E}'_{2j}$ for some process. Therefore, $\epsilon$-agreement is violated in $\m{E}'_{2j}$ since $d \notin O_{2j} = \{\frac{2j}{k},\frac{2j+1}{k}\}$ and both $\frac{2j}{k}$ and $\frac{2j+1}{k}$ are outputted in the prefix $\m{E}_{2j}$ of $\m{E}'_{2j}$, contradicting the correctness of $\m{A}_\epsilon$. 
\end{proof}


\section{Constant-Size Registers are Universal for 2-Process Systems}
\label{sec:equality}

We present a  proof of Theorem~\ref{theo:case-of-two-proc} by providing a wait-free $\epsilon$-agreement protocols for $n=2$ processes.  Then we show how to use the approximate agreement protocol for solving any task that is solvable wait-free using registers of unbounded size.


\subsection{Wait-Free $\epsilon$-Agreement with 1-bit Registers}
\label{sec:epsilon_agreement}

Given any $\epsilon\in (0,1)$, let $k$ be an integer such that $\frac{1}{2k+1} \leq \epsilon$.   
We will show that the protocol ${\cal A}_{k}$ described in Algorithm~\ref{alg:epsilon} solves $\epsilon$-agreement for two processes $p_1$ and $p_2$. The first step in the code is for each process  $i\in\{1,2\}$ to write its input in the write-once special register  $I_i$.
The shared registers $R_1$ and $R_2$ are  initialized to~$0$, and registers $I_1$ and $I_2$ to $\bot$. 

\newcommand{\vprec}{\mathit{prec}}
\newcommand{\new}{\mathit{new}}
\newcommand{\who}{\mathit{who}}
\newcommand{\vme}{\mathit{me}}
\newcommand{\vother}{\mathit{other}}
\begin{algorithm}[tb]
  \caption{Approximate agreement protocol $\m{A}_k$ for two processes. Code for process $i \in \{1,2\}$ with input $\mathit{myInput} \in \{0,1\}$}
  \label{alg:epsilon}
  \begin{algorithmic}[1]
    \Statex \textbf{local variables:} 
    \Statex \hspace{\algorithmicindent} $\vprec,\new,x_1,x_2 \in \{0,1\}$ \Comment{$\vprec$ initialized to 0}
    \Statex \hspace{\algorithmicindent} $\vme,\vother, \who\in \{1,2\}$ 
    \Comment{$\vme$ and $\vother$ resp.~initialized to $i$ and $j\in\{1,2\}\smallsetminus \{i\}$ at $p_i$}
    \Statex \hspace{\algorithmicindent} $r\in \{1,\dots,k\}$
    
    \Begin
    \State $I_{me}.\mathsf{write}(\mathit{myInput})$\label{line:writeinput}
    \For{$r=1,\ldots,k$} \label{line:forstart}
    		\State $R_{\vme}.\mathsf{write}(r \bmod 2)$
    		\State $\new \gets R_{\vother}.\mathsf{read}()$
    		\If{$\new \neq \vprec$} 
			$\vprec \gets \new$
   		 \Else{}\label{line:break} 
		 	\textbf{break} \Comment{leave the for-loop if the same value is read twice}
    		\EndIf
    \EndFor
    \State $x_{me} \gets I_{me}.\mathsf{read}()$
    \State $x_{other} \gets I_{other}.\mathsf{read}()$
    \If{($x_{\vother} = \bot$ \textbf{or}  $x_{me} = x_{other}$)}\label{line:decIdem}\label{line:decinit} \textsl{return} $x_{me}$ 
    \Else
    		\If{($r=k$  \textbf{and} $\new = k \mod 2$)} \Comment{$\vme$ left the for-loop after  $k$ iterations}
      			\If{$r \bmod 2 =0$}\label{line:decpairk} 
                        $\who\gets \vme$ 
                        \textbf{else} \label{line:decimpairk}$\who\gets \vother$ 
			\EndIf
	      		\State \label{line:dec_r=k}\textsl{return}\big($(x_\who+k)/(2k+1)$\big)
		\Else{} \Comment{ $\vme$ left the for-loop because it has read the same value twice}
      			\If{$r \bmod 2 =0$}\label{line:decpair} 
				$\who\gets \vother$ 
			\textbf{else}\label{line:decimpair} 
				$\who\gets \vme$ 
			\EndIf
 			\State\label{line:dec_r<k}\textsl{return}\big($x_\who + ((-1)^{x_\who}(r-1))/(2k+1)$\big)
  		 \EndIf
     \EndIf
\End
\end{algorithmic}
\end{algorithm}

Each process $i$ repeatedly writes (at most $k$ times) in its own register $R_{i}$,  and reads the register of the other process.  The processes  alternate writing $0$ and~$1$. If a process reads twice the same value, then the process stops reading and writing, and computes its output. A possible execution of the algorithm is given in Figure~\ref{fig:1bit_eps_exe}.

\begin{figure}[htbp]
  \centering
\newcommand{\level}{1}
\newcommand{\scale}{1.8}
\newcommand{\drawpoint}[2]{
    \ifthenelse{\isodd{#1}}{ 
      \tkzDrawPoint[Bstyle](B#2#1)
    }{ 
      \tkzDrawPoint[Astyle](A#2#1)
    }
  }
\newcommand{\drawsoloedge}[2]{
  \pgfmathsetmacro\childsolo{int(3*#1)}
  \pgfmathsetmacro\nextlevel{int(#2+1)}
  \ifthenelse{\isodd{#1}}{ 
    \draw[SBedge] (B#2#1) -- (B\nextlevel\childsolo);
  }
  { 
    \draw[SAedge] (A#2#1) -- (A\nextlevel\childsolo);
  }
}

\newcommand{\drawsoloedgesandpoints}[3]{
  \drawsoloedges{#1}{#2}{#3}
  \pgfmathsetmacro\nextlevel{int(#3+1)}
  \pgfmathsetmacro\firstindex{int(#1*3)}
  \pgfmathsetmacro\lastindex{int(#2*3)}
  \drawpoints{\firstindex}{\lastindex}{\nextlevel}
  }
\newcommand{\drawsoloedges}[3]{
  \pgfmathsetmacro\secondindex{int(#1+1)}
  \ifthenelse{#1=\secondindex}{
    \drawsoloedge{#1}{#3}
    \drawsoloedge{\secondindex}{#3}
  }
  {
    \foreach \i in {#1,\secondindex,...,#2} {
      \drawsoloedge{\i}{#3}
    }
  }
  }
\newcommand{\drawpoints}[3]{
    \pgfmathsetmacro\secondindex{int(#1+1)}
    \ifthenelse{#1=\secondindex}{
      \drawpoint{#1}{#3}
      \drawpoint{\secondindex}{#3}
    }
    {
      \foreach \i in {#1,\secondindex,...,#2} {
        \drawpoint{\i}{#3}
      }
    }
}

\hspace*{-2.7cm}\begin{tikzpicture}[
   Astyle/.style = {color=black, fill= black, shape=circle, minimum size=5},
   Bstyle/.style = {color=red,  fill= red, shape=circle,minimum size=5},
   Gstyle/.style = {color=gray, fill= gray, shape=circle,minimum size=5},
   Aedge/.style = {dashed, black, thick},
   Bedge/.style = {dashed, red, thick},
   SAedge/.style = {Aedge},
   SBedge/.style = {Bedge}]

   \foreach \r in {0,1,2,3,4}{
     \pgfmathsetmacro\y{-\r}
     \tkzDefPoints{-5.4/\y/D\r, 5.4/\y/F\r}
     \tkzDrawLine[color=gray,thin,dashed](D\r,F\r)
     \tkzLabelLine[gray,above right,pos=1.1](D\r,F\r){\small{round $\r$}}
     \pgfmathsetmacro\xA{-(1*\scale^\r)}
     \pgfmathsetmacro\xB{1*\scale^\r}
     \tkzDefPoints{\xA/\y/A\r, \xB/\y/B\r}

     \ifthenelse{\r=0}{ 
   
     }
     {
       \tkzCalcLength(A\r,B\r)\tkzGetLength{ABl} 
       \renewcommand{\level}{\r}
       \pgfmathsetmacro\nx{3^\level}
       \foreach \x in {0,2,...,\nx}{
         \pgfmathsetmacro\mypos{\x/(\nx)}
         \tkzDefPointOnLine[pos=\mypos](A\r,B\r)
         \tkzGetPoint{A\r\x}
       }
       \foreach \x in {1,3,...,\nx}{
         \pgfmathsetmacro\mypos{\x/\nx}
         \tkzDefPointOnLine[pos=\mypos](A\r,B\r)
         \tkzGetPoint{B\r\x}
       }
     }
   }
   \tkzDefShiftPoint[A1](0,1){A0ext}
   \tkzDefShiftPoint[B1](0,1){B0ext}
   \draw (A0ext) -- (B0ext);
   \draw (A1) -- (B1);
   \draw (B23) -- (A26);
   \draw (A312) -- (B315);
   \draw (B439) -- (A442);

   \draw[SAedge] (A0ext) -- (A12);
   \draw[SAedge] (A12)   -- (A24);
   \draw[SAedge] (A12)   -- (A24);
   \draw[SAedge] (A24)   -- (A314);
   \draw[SAedge] (A314)  -- (A442);

   \draw[SBedge] (B0ext) -- (B13);
   \draw[SBedge] (B13)   -- (B25);
   \draw[SBedge] (B25)   -- (B315);

   \tkzDrawPoint[Astyle](A0ext)
   \tkzDrawPoint[Bstyle](B0ext)
   \drawpoints{0}{3}{1}
   \drawpoints{3}{6}{2}
   \drawpoints{12}{15}{3}
   \drawpoints{39}{42}{4}

   \tkzLabelPoint[above](A0ext){input:0}
   \tkzLabelPoint[above](B0ext){input:1}
   \tkzLabelPoint[above](A10){\small{10}}
   \tkzLabelPoint[below,gray](A10){\small{0}}
   \tkzLabelPoint[above](B11){\small{11}}
   \tkzLabelPoint[above](A12){\small{11}}
   \tkzLabelPoint[above](B13){\small{01}}
   \tkzLabelPoint[below,gray](B13){\small{1}}

   \tkzLabelPoint[above](B23){\small{01}}
   \tkzLabelPoint[below,gray](B23){\small{$\frac{1}{9}$}}
   \tkzLabelPoint[above](A24){\small{00}}
   \tkzLabelPoint[above](B25){\small{00}}
   \tkzLabelPoint[above](A26){\small{10}}
   \tkzLabelPoint[below,gray](A26){\small{$\frac{8}{9}$}}

   \tkzLabelPoint[above](A312){\tiny{10}}
   \tkzLabelPoint[below,gray](A312){\tiny{$\frac{2}{9}$}}
   \tkzLabelPoint[above](B313){\tiny{11}}
   \tkzLabelPoint[above](A314){\tiny{11}}
   \tkzLabelPoint[above](B315){\tiny{01}}
   \tkzLabelPoint[below,gray](B315){\tiny{$\frac{7}{9}$}}
   
   \tkzLabelPoint[below,gray](B439){\tiny{$\frac{3}{9}$}}
   \tkzLabelPoint[below,gray](A440){\tiny{$\frac{4}{9}$}}
   \tkzLabelPoint[below,gray](B441){\tiny{$\frac{5}{9}$}}
   \tkzLabelPoint[below,gray](A442){\tiny{$\frac{6}{9}$}}
   

 \end{tikzpicture}
 \caption{\sl Executions of Algorithm~\ref{alg:epsilon} with inputs $0$ and $1$ and $k=4$. Process $1$ is represented by a black dot, process $2$ by a read dot. Each dot represents a local state of a process in some execution of the algorithm. The label above a dot represents the state of the memory (the values of the registers $R_1$ and $R_2$) is the corresponding local state. In the execution pictured with dashed line, process $2$ (red) decides after exiting the for loop at iteration 3. Process 1 (black) decides after finishing its for loop.}
 \label{fig:1bit_eps_exe}
\end{figure}

By the code, the algorithm is wait-free as each process performs at most $2k+3$ $\mathsf{read}()$ and $\mathsf{write}()$ operations. In any execution in which only one process decides (by returning a value at line~\ref{line:decIdem}, line~\ref{line:dec_r<k} or line~\ref{line:dec_r=k}), approximate agreement is trivially satisfied. In the following, we consider an  execution $\m{E}$ of the algorithm in which both processes $1$ and $2$ decide. We denote by $r_i$ the value of the local variable $r$ of process $i$ when it decides and by $y_i$ the decision of process $i$. We show that $|y_1 - y_2| \leq \frac{1}{2k+1}$. 

We first show that each process performs the same number of iterations of the for loop or at most one more iteration than the other process. 
\begin{lemma}
  \label{lem:delta_r}
  $|r_1 - r_2| \leq 1$ 
\end{lemma}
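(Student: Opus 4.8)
The plan is to track how the two processes can ``drift apart'' in the for-loop, using the fact that reads and writes are atomic. The key invariant I would establish is: at all times, the number of completed for-loop iterations of $p_1$ and $p_2$ differ by at most one, and this remains true up to the moment each process leaves the loop. The reason a process leaves the loop is that it reads twice the same value in the other process's register; so to bound $|r_1-r_2|$ I need to argue that a process cannot complete two more iterations than the other without being forced to exit.

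\begin{proof}
We argue by contradiction. Consider the execution $\m{E}$, and suppose without loss of generality that $r_1 \geq r_2 + 2$; in particular $r_2 < k$, so $p_2$ left the for-loop because it read twice the same value in $R_1$ (the \textbf{break} at line~\ref{line:break}), not because it completed $k$ iterations. Recall that in each iteration $r$, a process first writes $r \bmod 2$ into its own register, then reads the other register. Let us timestamp the events of $p_1$: let $W_1(s)$ denote the point at which $p_1$ performs $R_1.\mathsf{write}(s \bmod 2)$ in iteration $s$, and similarly track the reads of $p_2$ from $R_1$.

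Focus on $p_2$. Let $r_2 = m$, so $p_2$ performs, in its last two iterations $m-1$ and $m$, two reads of $R_1$ returning the same value $v$, and this value equals $\vprec$ from the previous iteration as well (otherwise the \textbf{break} would not trigger at iteration $m$; I should be careful here about the off-by-one in how $\vprec$ is updated, but the point is that $p_2$ saw the value of $R_1$ fail to change across the relevant interval). Now $R_1$ alternates $0,1,0,1,\dots$ as $p_1$ writes in successive iterations, and critically $R_1$ is \emph{only} written by $p_1$. If $p_1$ has, during the interval spanned by these two reads of $p_2$, performed two or more writes, then between two consecutive reads by $p_2$ the register $R_1$ must have changed value (since consecutive writes flip the bit, and after an even number of flips we are back — so I need: if $p_1$ wrote an \emph{odd} number of times between $p_2$'s two reads, the value changed; if an even positive number, intermediate values differ but endpoints agree). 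The delicate case is an even number of writes. Here I would use the third read: since $p_2$ actually read the same value in two \emph{consecutive} loop iterations and had the same value in the iteration before, the relevant window of $p_2$'s reads is long enough (three reads) that $p_1$ cannot squeeze in a ``stealthy'' even number of writes $\geq 2$ without one of $p_2$'s three reads catching a flipped value. Formally: three reads by $p_2$ all returning $v$; between read~1 and read~3, $R_1$'s value returns to $v$, and between read~1 and read~2 as well; so the number of $p_1$-writes in $(\text{read }1, \text{read }2)$ and in $(\text{read }2, \text{read }3)$ are both even; but then the number of $p_1$-writes in $(\text{read }1,\text{read }3)$ is even, and I need to rule out it being $\geq 2$. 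Suppose it is $\geq 2$; then $p_1$ did at least two writes in this window, hence it reached at least iteration $s$ and $s+1$ for some $s$; I then look at what $p_1$ \emph{read} in iterations $s$ and $s+1$ — it read $R_2$, which during this whole window is ``stuck'' at some value because $p_2$ is busy doing its last reads and its preceding write is fixed. So $p_1$ reads the same value from $R_2$ twice in iterations $s$ and $s+1$, forcing $p_1$ to \textbf{break} at iteration $s+1 \leq r_1$, and in fact $s+1 < r_1$ would contradict $p_1$ running all the way to $r_1$; tracing the indices, $s+1$ lands strictly before $r_1$ when $r_1 \geq r_2+2 = m+2$, giving the contradiction. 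Hence $p_1$ performs at most one write in the window, so $r_1 \leq r_2 + 1$ after a short index chase. This is the main obstacle: pinning down the index arithmetic so that ``$p_1$ does $\geq 2$ writes while $p_2$ does its exiting reads'' is forced by $r_1 \geq r_2+2$, and then symmetrically forces $p_1$ itself to exit early.
\end{proof}

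**Remark on the hard part.** The genuine difficulty is not the overall idea — it is the bookkeeping of which write precedes which read in the atomic interleaving, and handling the parity/off-by-one in the \textbf{break} condition (a process exits at iteration $r$ based on comparing the read at iteration $r$ with $\vprec$, which holds the read from iteration $r-1$). I would organize the final write-up around a single clean claim of the form ``if $p_i$ completes iteration $r$, then $p_j$ has completed at least iteration $r-1$,'' proved by induction on $r$ using the alternation of register values and the single-writer property, rather than the contradiction framing above; the induction is likely cleaner because it localizes the interleaving reasoning to one step at a time. Either way, Lemma~\ref{lem:delta_r} then follows immediately, and it is exactly the synchronization guarantee needed downstream to bound $|y_1 - y_2|$ by $\frac{1}{2k+1}$.
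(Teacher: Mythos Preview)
Your sketched contradiction argument is over-engineered and contains a real error: you claim that when $p_2$ breaks at iteration $m$, \emph{three} consecutive reads of $R_1$ return the same value~$v$. That is false. The break condition at iteration $m$ says $\new=\vprec$, where $\vprec$ holds the value read at iteration $m-1$; so only the reads at iterations $m-1$ and $m$ agree. The read at iteration $m-2$ must have been different, else $p_2$ would have broken already at $m-1$. Your parity discussion (even/odd number of writes between reads) then collapses, because with only two equal reads you cannot rule out ``zero writes in the window'' by parity alone, and zero writes between those two reads says nothing about $r_1$ being small. The contradiction you want actually comes from what happens \emph{after} $p_2$'s last write (once $R_2$ is frozen, $p_1$ can survive at most one more iteration), not from the window between $p_2$'s last two reads.

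The good news is that the clean approach you describe in your Remark---prove that if $p_j$ reaches iteration $r$ then $p_i$ has performed at least $r-1$ writes---is exactly what the paper does, and it is a five-line argument with no case analysis. Concretely: assume w.l.o.g.\ $r_1<r_2$. Since $p_2$ reaches iteration $r_2$, it did not break in iterations $1,\dots,r_2-1$, so the values it read from $R_1$ in those iterations are $1,0,1,\dots,(r_2-1)\bmod 2$. As $R_1$ starts at~$0$ and $p_1$ writes $r\bmod 2$ in iteration~$r$, this forces $p_1$ to have written at least $r_2-1$ times, hence $r_1\ge r_2-1$. Drop the contradiction framing and the window analysis entirely; the direct counting of alternations is both shorter and avoids the off-by-one traps you flagged.
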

\begin{proof}
As the condition for exiting the for loop does not depend on the process' identity, let us assume without loss of generality that $r_1 < r_2$. Process $2$ exists the for-loop (either because $r_2=k$ or at line~\ref{line:break}) after performing $r_2$ iterations. Since process $2$  reaches iteration $r_2$, it must have seen the value in the register $R_{2}$ alternates between $0$ and $1$ $r_2$ times. That is,  the sequence of values returned by the  reads of $R_{1}$ by process $2$ in iteration $1,\ldots,r_2-1$ is $(1,0,\ldots, (r_2-1) \mod 2)$. Therefore, as the initial value of $R_{1}$ is $0$, process $1$ must have written to $R_{1}$ at least $(r_2-1)$ times. By the code, process $1$ writes once to $R_{1}$ in each iteration of its for loop. It follows that process $1$ reaches round $r_2-1$. Hence, $r_1 \geq r_2-1$ which concludes the proof of the Lemma as we assume that $r_1 < r_2$.
\end{proof}

A process exits its for loop prematurely at line~\ref{line:break} when the bit read in the register of the other process is not the parity of its current iteration number. Next Lemma establishes that the two processes cannot exit their for-loop prematurely in the same iteration.
\begin{lemma}
  \label{lem:break}
  Suppose that process $i$ exits its for loop at line~\ref{line:break} in iteration $r_i$ (this is always the case if $r_i < k$.). If  process $3-i$ reaches iteration $r_i$, it does not exits its for loop in this iteration at line~\ref{line:break}. 
\end{lemma}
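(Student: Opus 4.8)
I want to show that if process $i$ exits its for-loop prematurely at line~\ref{line:break} in iteration $r_i$, then process $3-i$, if it ever reaches iteration $r_i$, does not also exit prematurely at line~\ref{line:break} in that same iteration. The plan is to analyze carefully what each process reads and writes, round by round, and derive a contradiction from the assumption that both break in iteration $r_i$.

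**Approach.** I would reason about the sequence of values written to the two registers $R_i$ and $R_{3-i}$. By the code, in iteration $r$ each process writes $r \bmod 2$ to its own register \emph{before} reading the other's register. Process $i$ breaks in iteration $r_i$ means: the value $\new$ it reads from $R_{3-i}$ in iteration $r_i$ equals $\vprec$, i.e. it equals the value process $i$ read in iteration $r_i-1$ (or the initial value $0$ if $r_i=1$). Since in a non-breaking iteration the alternation forces the read value to match the parity of the iteration, the value process $i$ reads in iteration $r$ (for $r<r_i$) is $r \bmod 2$; hence the value it reads in iteration $r_i-1$ is $(r_i-1)\bmod 2$, and breaking means $\new = (r_i-1)\bmod 2 \ne r_i \bmod 2$. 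So process $i$ sees $R_{3-i} = (r_i-1) \bmod 2$ at the moment it reads in iteration $r_i$. Symmetrically, if process $3-i$ also breaks in iteration $r_i$, it sees $R_i = (r_i-1)\bmod 2$ when it reads in iteration $r_i$. I would then track \emph{when} these writes and reads happen in the linearization order: process $i$'s write of $r_i\bmod 2$ to $R_i$ in iteration $r_i$ happens before process $i$'s read in iteration $r_i$; likewise for process $3-i$. The key point is that one of the two processes reads last, and by that time the other has already overwritten its register with the current-parity value $r_i \bmod 2$, contradicting that the reader saw $(r_i-1)\bmod 2$.

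**Key steps in order.** (1) Establish the invariant: for $r < r_i$, the value process $i$ reads from $R_{3-i}$ in iteration $r$ is exactly $r \bmod 2$ (this is just the loop-continuation condition unrolled, with base case the initial $0$), and likewise for process $3-i$ up to iteration $r_i-1$, which it reaches by Lemma~\ref{lem:delta_r} / the hypothesis that it reaches iteration $r_i$. (2) Deduce that by the start of iteration $r_i$, process $i$ has written values $1,0,1,\dots$ i.e. $R_i$ has been set to $1 \bmod 2, 2\bmod 2, \dots, (r_i-1)\bmod 2$ in order, and analogously for $R_{3-i}$. (3) In iteration $r_i$: process $i$ writes $r_i \bmod 2$ to $R_i$, then reads $R_{3-i}$ and gets $(r_i-1)\bmod 2$ (the break condition); process $3-i$ writes $r_i\bmod 2$ to $R_{3-i}$, then reads $R_i$ and gets $(r_i-1)\bmod 2$ (its break condition). (4) Consider the two reads of iteration $r_i$: whichever occurs later in the linearization, the corresponding register has already been overwritten by the other process with $r_i \bmod 2$ in its iteration-$r_i$ write (which precedes that process's iteration-$r_i$ read, hence precedes the later of the two reads). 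So the later reader cannot have read $(r_i-1)\bmod 2$ — contradiction. (5) Conclude, and note the parenthetical: if $r_i < k$ then exiting the loop is \emph{necessarily} via line~\ref{line:break} (the only other exit is completing $k$ iterations), so the hypothesis is automatic in that case.

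**Main obstacle.** The delicate part is step~(4): pinning down the ordering of the iteration-$r_i$ writes relative to the iteration-$r_i$ reads of \emph{both} processes, and making sure no intervening write by either process changes the picture (each process writes its own register exactly once per iteration, and after breaking it stops writing, so within iteration $r_i$ there is exactly one write per register). I would phrase this via the atomic linearization of all read/write steps and a short case analysis on which of the two iteration-$r_i$ reads comes first; the symmetry of the two processes means only one case needs real work. A minor subtlety to handle cleanly is the base case $r_i = 1$, where $\vprec$ is the initial value $0$, so "$(r_i-1)\bmod 2$" should be read as the initial register value $0$; the argument goes through unchanged since the other process's write of $1 \bmod 2 = 1$ in iteration $1$ precedes its read.
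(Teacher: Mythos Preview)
Your proposal is correct and follows essentially the same approach as the paper. The paper's proof orders the two iteration-$r_i$ \emph{writes} (WLOG process~$1$ writes first, so process~$2$'s subsequent read sees $r_i \bmod 2$), whereas you order the two iteration-$r_i$ \emph{reads} (the later reader must see the other process's fresh write); both reduce to the same write-before-read observation within each process, and your preliminary invariant steps (1)--(2), while not needed in the paper's more concise version, are harmless and make the break condition explicit.
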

\begin{proof}
  Suppose that process $1$ exits its for loop prematurely at  line~\ref{line:break} in iteration $r_1$ (The proof is the same for process $2$.). Assume for contradiction that process $2$ also exits its for loop at  line~\ref{line:break} in iteration $r_1$. Note that we thus have $r_1 = r_2$. Let $\rho = r_1 = r_2$ 

  Without loss of generality, let us suppose that in iteration $\rho$, process $1$ writes before process~$2$. That is,  the $\rho$th write to $R_{1}$ by process $1$ occurs before the $\rho$th write to $R_{2}$ by process $2$. Since process~$1$ exits the for loop in iteration $\rho$, the last value written to $R_{1}$ is $\rho \mod 2$.

  In iteration $\rho$, process $2$ first write to its register before reading $R_{1}$. Therefore, as we assume that the $\rho$th write to $R_{1}$ precedes the $\rho$th write to $R_{2}$, it follows that process $2$ reads $\rho \mod 2$ from $R_{1}$. By the code, it proceeds to the next iteration, or terminates normally its for loop whenever $\rho = k$, a contradiction. 
\end{proof}

As a corollary,  there is only one case in which both processes perform the same number of iterations of the for loop: 
\begin{lemma}
  \label{lem:different_r}
  If $r_1 = r_2$ then $r_1 = r_2 =k$. 
\end{lemma}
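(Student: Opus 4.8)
The plan is a short proof by contradiction, using Lemma~\ref{lem:break} essentially as a black box. Suppose, for the sake of contradiction, that $r_1 = r_2 = \rho$ with $\rho < k$. The key observation is that a process can only stop its for-loop at an iteration whose index is strictly less than $k$ by executing the \textbf{break} at line~\ref{line:break}: if it never reads the same value twice in the register of the other process, it runs through all $k$ iterations and terminates with $r = k$. Hence both process $1$ and process $2$ must have left their for-loop prematurely, via line~\ref{line:break}, during their respective iteration $\rho$.

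Next I would instantiate Lemma~\ref{lem:break} with $i = 1$. By the previous paragraph, process $1$ exits its for-loop at line~\ref{line:break} in iteration $r_1 = \rho$. Moreover process $2$ reaches iteration $r_1 = \rho$ --- indeed it does not merely reach it but terminates its loop there, since $r_2 = \rho$. Lemma~\ref{lem:break} then asserts that process $2$ does not exit its for-loop at line~\ref{line:break} in iteration $\rho$. This contradicts the fact that $r_2 = \rho < k$, which (again by the first paragraph) forces process $2$ to leave the loop precisely through a premature \textbf{break} in iteration $\rho$.

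Therefore the assumption $\rho < k$ is untenable, so $r_1 = r_2 \geq k$; since the loop index ranges over $\{1,\dots,k\}$, we conclude $r_1 = r_2 = k$, as claimed. I do not anticipate any genuine obstacle: the statement is an immediate corollary of Lemma~\ref{lem:break}. The only point requiring a little care is checking that the hypothesis of Lemma~\ref{lem:break} that ``process $3-i$ reaches iteration $r_i$'' is satisfied in our situation; this is automatic because $r_2 = r_1$ and process $2$ actually terminates its for-loop in that iteration, hence in particular reaches it.
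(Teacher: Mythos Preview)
Your proof is correct and follows essentially the same approach as the paper: both argue that if $r_1 = r_2 < k$ then both processes would have to exit via the \textbf{break} at line~\ref{line:break} in the same iteration, contradicting Lemma~\ref{lem:break}. Your version is simply a more detailed unpacking of the paper's one-sentence proof.
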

\begin{proof}
  Suppose that $r_i < k$. By Lemma~\ref{lem:break}, $r_{3-i} \neq r_i$ since not both processes can break from  their for loop in the same iteration. 
\end{proof}

Next technical lemma establishes that when processes exits their for-loop at distinct iteration numbers, they both decide at line~\ref{line:dec_r<k}. 
\begin{lemma}
  \label{lem:k_k-1}
  If $\{r_1,r_2\} = \{k-1,k\}$, then no process decides at line~\ref{line:dec_r=k}. 
\end{lemma}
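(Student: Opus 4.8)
The claim is: if the two processes leave their for-loops at iteration counts $k-1$ and $k$ respectively, then neither of them returns at line~\ref{line:dec_r=k}. Line~\ref{line:dec_r=k} is reached only when a process has $r=k$ \emph{and} the last value it read is $\new = k \bmod 2$, i.e. the process completed all $k$ iterations "normally" (its last read matched the parity of iteration $k$, so it never broke out early). So I must show: the process with $r=k$ in this scenario actually broke out of the loop at line~\ref{line:break} in iteration $k$ (reading $\new \neq k \bmod 2$), rather than completing iteration $k$ normally.

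The plan is to name the processes according to who does what. By hypothesis $\{r_1,r_2\}=\{k-1,k\}$; without loss of generality say process $2$ has $r_2 = k-1$ and process $1$ has $r_1 = k$. Since $r_2 = k-1 < k$, Lemma~\ref{lem:break}'s hypothesis applies to process $2$: it exits at line~\ref{line:break} in iteration $k-1$, meaning the value it read from $R_1$ in iteration $k-1$ was \emph{not} $(k-1)\bmod 2$; since the memory holds only bits, that read returned $k \bmod 2$. The key point is that this read returning the "wrong" parity tells us something about how far process $1$ had progressed and what it wrote. First I would argue that process $1$ must have already performed its $k$-th write to $R_1$ (writing $k\bmod 2$) \emph{before} process $2$'s iteration-$(k-1)$ read of $R_1$ — because that is the only way $R_1$ could contain $k\bmod 2$ at that moment: in iterations $1,\dots,k-1$ process $1$ writes the bits $1,0,\dots,(k-1)\bmod 2$, none of which is $k\bmod 2$, so the value $k\bmod 2$ in $R_1$ can only come from the $k$-th write. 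Hence process $1$'s $k$-th write precedes process $2$'s $(k-1)$-th write.

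Now I analyze what process $1$ reads from $R_2$ in its iteration $k$. At the time of process $1$'s iteration-$k$ write-then-read, process $2$ has performed at most $k-1$ writes to $R_2$ (it only reaches iteration $k-1$), and by the ordering just established, process $2$'s $(k-1)$-th write has not yet happened when process $1$ does its $k$-th write; but I need to track process $2$'s $k-1$ writes carefully. The writes process $2$ makes are the bits $1,0,\dots$ — more precisely the $m$-th write of process $2$ writes $m \bmod 2$. So $R_2$ contains $(k-1)\bmod 2$ only after process $2$'s $(k-1)$-th write. Since process $1$'s $k$-th write precedes process $2$'s $(k-1)$-th write, at the moment process $1$ reads $R_2$ in iteration $k$, the register $R_2$ contains at most the value written by process $2$'s $(k-2)$-th write, i.e. $(k-2)\bmod 2 = k\bmod 2$. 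Wait — I need process $1$'s iteration-$k$ \emph{read} to come after its iteration-$k$ write but I have not pinned down where it falls relative to process $2$'s writes; the cleanest route is: process $2$'s $(k-1)$-th read of $R_1$ sees $k\bmod 2$, and process $2$'s iteration-$(k-1)$ write precedes that read, so process $2$'s $(k-1)$-th write precedes its $(k-1)$-th read of $R_1$, which sees the result of process $1$'s $k$-th write, so process $1$'s $k$-th write precedes process $2$'s $(k-1)$-th read. To locate process $1$'s $k$-th read I compare it with process $2$'s writes: I will show process $1$ reads a value from $R_2$ that is $\neq k \bmod 2$ exactly when process $2$'s $(k-1)$-th write has already landed, and $= k\bmod 2$ otherwise; either way I must derive that process $1$ breaks at line~\ref{line:break}.

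The main obstacle is exactly this interleaving bookkeeping: I have control over "$A$'s $k$-th write precedes $B$'s $(k-1)$-th read", but I need to conclude something about "$A$'s $k$-th read versus $B$'s writes". I expect the resolution is that if process $1$ did \emph{not} break in iteration $k$ (i.e. read $k\bmod 2$ from $R_2$), then process $2$'s $k$-th write would have to have occurred — but process $2$ never reaches iteration $k$, contradiction; hence process $1$ reads $(k-1)\bmod 2 = \overline{k\bmod 2}$ from $R_2$, i.e. it breaks at line~\ref{line:break}, so $\new = (k-1)\bmod 2 \neq k\bmod 2$ and the guard of line~\ref{line:dec_r=k} fails. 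Symmetrically, the process with $r=k-1$ has $r\neq k$ so it trivially cannot reach line~\ref{line:dec_r=k}. I would write this up by carefully invoking the same write-counting argument used in the proofs of Lemmas~\ref{lem:delta_r} and~\ref{lem:break} (the sequence of values a process writes to its register forces a lower bound on how many writes the other process made), making the contradiction "process $2$ would have performed a $k$-th write" explicit.
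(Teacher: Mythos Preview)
Your first line of attack contains a concrete error. You write that ``in iterations $1,\dots,k-1$ process~$1$ writes the bits $1,0,\dots,(k-1)\bmod 2$, none of which is $k\bmod 2$'', and from this deduce that process~$1$'s $k$-th write must precede process~$2$'s $(k-1)$-th read of~$R_1$. But the alternating sequence $1,0,1,0,\ldots$ obviously contains both bits: process~$1$'s writes in iterations $k-2,k-4,\ldots$ all put $k\bmod 2$ into~$R_1$. So when process~$2$ reads $k\bmod 2$ from~$R_1$ in iteration $k-1$, it is perfectly possible that process~$1$ has performed only $k-2$ writes (the last of which wrote $(k-2)\bmod 2 = k\bmod 2$) and has not yet done its $(k-1)$-th write. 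Your conclusion ``process~$1$'s $k$-th write precedes process~$2$'s $(k-1)$-th write'' is therefore unjustified, and the bookkeeping you build on it cannot be completed.

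Your fallback plan in the final paragraph, on the other hand, is correct and is exactly what the paper does: reason from process~$1$'s side. Since process~$1$ reaches iteration~$k$, its reads of~$R_2$ in iterations $1,\ldots,k-1$ return the full alternating sequence $1,0,\ldots,(k-1)\bmod 2$; the write-counting argument then forces process~$2$ to have already performed at least $k-1$ writes before process~$1$ enters iteration~$k$. But $r_2=k-1$, so these are \emph{all} of process~$2$'s writes, and the final value in~$R_2$ is $(k-1)\bmod 2$. Hence process~$1$'s iteration-$k$ read returns $(k-1)\bmod 2 \ne k\bmod 2$, the guard for line~\ref{line:dec_r=k} fails, and process~$1$ decides at line~\ref{line:dec_r<k}. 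Drop the process-$2$-first analysis and write up this direct argument; it avoids the interleaving ambiguity precisely because process~$2$'s total number of writes coincides with the lower bound the counting gives you.
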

\begin{proof}
  Suppose without loss of generality that $r_1 = k$. By the code, this means that the last bit read in $R_{2}$ by process $1$ is $k \mod 2$. Since $r_2 = k-1$, the last value  written in $R_{2}$ is $k-1 \mod 2$. Hence, process $1$ must read $R_{2}$ in iteration $k$ before this last write by process $2$ occurs.

Since process $1$ reaches iteration $k$ of its for loop, 
 the sequence of values returned by the  reads of $R_{2}$ by process $1$ in iteration $1,\ldots,r_2-1$ is $(1,0,\ldots, (k-1) \mod 2)$. Therefore, as the initial value of $R_{1}$ is $0$, process $2$ must have written to $R_{1}$ at least $(k-1)$ times before process $1$ enters iteration $k$. As $r_2 = k-1$, process $2$ writes exactly $(k-1)$ times to $R_{2}$, and the last value it writes is $k-1 \mod 2$. It thus follows that in iteration $k$, process $1$ reads $k-1 \mod 2$ from $R_{2}$. Therefore, we have $new \neq k \mod 2$ and process $1$ cannot decide at line~\ref{line:dec_r=k}.   
\end{proof}

We are now ready to prove that the decisions of the two processes are at most $\frac{1}{2k+1}$ apart. The proof is a case analysis depending on the values $r_1$  and $r_2$. 
\begin{lemma}
  \label{lem:approx_agreement}
  $|y_1 - y_2| \leq \frac{1}{2k+1}$.
\end{lemma}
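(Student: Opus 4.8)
The plan is to read off from the code the (finite) set of values a process can output, and then to show that in any execution $\m{E}$ in which both $p_1$ and $p_2$ decide, their outputs lie on the same or on consecutive rungs of the ladder $\big\{\tfrac{m}{2k+1}:0\le m\le 2k+1\big\}\subseteq[0,1]$. Inspecting the three \textsl{return} statements: a return at line~\ref{line:decinit} gives $x_{me}\in\{0,1\}$; a return at line~\ref{line:dec_r=k} (only reachable when $r=k$) gives $\tfrac{k}{2k+1}$ or $\tfrac{k+1}{2k+1}$ according to whether $x_{\who}=0$ or $1$; and a return at line~\ref{line:dec_r<k} with loop counter $r$ gives $\tfrac{r-1}{2k+1}$ or $\tfrac{2k+2-r}{2k+1}$ according to whether $x_{\who}=0$ or $1$, where $r-1\le k-1$. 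In particular all outputs lie in $[0,1]$, so it suffices to prove $|y_1-y_2|\le\tfrac1{2k+1}$.

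I would first clear away two degenerate situations. If $x_1=x_2=x$, then each process reads $x_{other}\in\{x,\bot\}$, hence decides $x$ at line~\ref{line:decinit}, so $|y_1-y_2|=0$ (this also settles validity). If some process $p_i$ reads $x_{other}=\bot$: since $I_{3-i}$ is written by $p_{3-i}$ as its very first step (line~\ref{line:writeinput}), at that moment $p_{3-i}$ has not moved, so $R_{3-i}$ stays $0$ throughout $p_i$'s for-loop; thus $\new=\vprec=0$ already in iteration~$1$, giving $r_i=1$, $p_i$ decides $x_i$ at line~\ref{line:decinit}, and thereafter $R_i$ is frozen at $1$. Then $p_{3-i}$ runs against $R_i=1$, so it reads $1$ in every iteration and stops with $r_{3-i}\le 2$; reading $I_i=x_i$ (written long before), it either sees equal inputs and decides $x_i$, or has distinct inputs and, by direct evaluation of the branch it enters, outputs $x_i\pm\tfrac1{2k+1}$. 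In all these sub-cases $|y_1-y_2|\le\tfrac1{2k+1}$.

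This leaves the main case: $x_1\ne x_2$ and neither process reads $\bot$, so both decide in the \textbf{else} branch. By Lemmas~\ref{lem:delta_r} and~\ref{lem:different_r}, either $|r_1-r_2|=1$ or $r_1=r_2=k$. Suppose $|r_1-r_2|=1$. The process with the smaller counter has counter $<k$, hence exited at line~\ref{line:break} and returns at line~\ref{line:dec_r<k}; the other returns at line~\ref{line:dec_r<k} too --- either because it also exited at line~\ref{line:break}, or, when $\{r_1,r_2\}=\{k-1,k\}$, by Lemma~\ref{lem:k_k-1}. By the code, at line~\ref{line:dec_r<k} a process sets $\who$ to itself if its counter is odd and to the other process if its counter is even; since $r_1$ and $r_2$ have opposite parities, $p_1$ and $p_2$ thereby select the \emph{same} process $w$. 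Hence $y_i=x_w+(-1)^{x_w}\tfrac{r_i-1}{2k+1}$ for $i=1,2$, and $|y_1-y_2|=\tfrac{|r_1-r_2|}{2k+1}=\tfrac1{2k+1}$.

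It remains to handle $r_1=r_2=k$. A process that completes iteration~$k$ normally has $\new=k\bmod 2$ and returns at line~\ref{line:dec_r=k}; one that breaks in iteration~$k$ has $\new=(k-1)\bmod 2\ne k\bmod 2$ and returns at line~\ref{line:dec_r<k} with $r-1=k-1$. By Lemma~\ref{lem:break} the two processes cannot both break in iteration~$k$, so at least one returns at line~\ref{line:dec_r=k}. If both return at line~\ref{line:dec_r=k}, then the $\who$-rule there (the previous parity rule reversed) makes $\{\who_1,\who_2\}=\{1,2\}$, so $\{x_{\who_1},x_{\who_2}\}=\{0,1\}$ and $\{y_1,y_2\}=\big\{\tfrac{k}{2k+1},\tfrac{k+1}{2k+1}\big\}$. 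If one returns at line~\ref{line:dec_r=k} and the other at line~\ref{line:dec_r<k}, comparing the two parity rules shows they again select the same process $w$, and then $\{y_1,y_2\}=\big\{\tfrac{x_w+k}{2k+1},\,x_w+(-1)^{x_w}\tfrac{k-1}{2k+1}\big\}$, which equals $\big\{\tfrac{k-1}{2k+1},\tfrac{k}{2k+1}\big\}$ if $x_w=0$ and $\big\{\tfrac{k+1}{2k+1},\tfrac{k+2}{2k+1}\big\}$ if $x_w=1$. In every sub-case $|y_1-y_2|=\tfrac1{2k+1}$, completing the proof. I expect this last case to be the main obstacle: a priori, when $r_1=r_2=k$ the reachable ladder positions span $\tfrac{k-1}{2k+1},\dots,\tfrac{k+2}{2k+1}$, two of which are $\tfrac3{2k+1}$ apart, and it is precisely Lemma~\ref{lem:break} applied at the final iteration, together with the bookkeeping that the two $\who$-selections coincide, that rules out the bad combinations.
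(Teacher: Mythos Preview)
Your proof is correct and follows the same overall case analysis as the paper (using Lemmas~\ref{lem:delta_r}, \ref{lem:break}, \ref{lem:different_r}, and~\ref{lem:k_k-1} to reduce to the cases $|r_1-r_2|=1$ and $r_1=r_2=k$). The main difference is presentational: where the paper separately computes $y_1,y_2$ in each of roughly eight subcases (indexed by parity of $r$ or $k$, and by which process is at which return line), you isolate the single observation that the two processes' choices of $\who$ always coincide (or are complementary when both return at line~\ref{line:dec_r=k}). This collapses each block of subcases into one line and makes the $\tfrac{1}{2k+1}$ gap immediate; it is a genuine streamlining of the paper's argument, though not a different method.
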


\begin{proof}
Let $x_i \in \{0,1\}$ and $y_i$ denote respectively the input and the decision of process $i$. In each case, we compute the value of $|y_1 - y_2|$ by following the code and we show that it is upper-bounded by $\frac{1}{2k+1}$. 
  We  start by  considering the case in which one process $i$ decides at line~\ref{line:decIdem}. Let $j = 3-i$ be the other process. By the code, this means  that process $i$ reads $\bot$ or $x_i$ in the other process' input register $I_{j}$.
  \begin{itemize}
  \item Process $i$ reads $\bot$ from $I_{j}$. It thus decides $y_i = x_i$. If $x_j = x_i$, process $j$ also decides at line~\ref{line:decIdem} and both processes decide the same value  $x_i$. Suppose that $x_j \neq x_i$. Since process $i$ reads $I_{j}$ after existing the for loop, and the first step of process $j$ is writing its input to $I_{j}$ (line~\ref{line:writeinput}), process $i$ exists the for loop in the first iteration, that is $r_i = 1$. By Lemma~\ref{lem:delta_r}, $r_j = 2$ and when process $j$ exits the for loop, $new = 1$. By the code, it thus follows that $j$ decides at line~\ref{line:dec_r<k} $y_j = (x_i + (\frac{(-1)^{x_i}}{2k+1})) = \frac{1}{2k+1}$ if $x_i = 0$ and $1 - \frac{1}{2k+1}$ otherwise.  Hence $|y_i - y_j| \leq \epsilon$.
  \item Process $i$ reads $x_i$ from $I_{j}$. It decides $y_i = x_i  =x_j$. Process $j$ either reads $\bot$ from $I_{i}$ and decides $y_j = x_i = x_j$. Otherwise, it sees both process $i$ input values and decides $x_i = x_j$ at line~\ref{line:decIdem} since both inputs are the same. 
  \end{itemize}
  
It remains to examine the cases in which no processes decide at line~\ref{line:decIdem}.   We first consider the case in which both processes decide after completing the same number of iterations, that is $r_1 = r_2$. By Lemma~\ref{lem:different_r}, $r_1 = r_2 = k$. A process reaching iteration $k$ may still exit the for loop prematurely (at line~\ref{line:break}) and decides at line~\ref{line:dec_r<k} or decide at line~\ref{line:dec_r=k} after completing the $k$ iterations. By Lemma~\ref{lem:break}, the two  processes cannot both decide at line~\ref{line:dec_r<k}. We examine all these cases next: 
    \begin{itemize}
    \item Both processes terminate their for loop normally. In that case they decide at line~\ref{line:dec_r=k}. Depending on the parity of $k$, we have two cases:
      \begin{itemize}
      \item $k \mod 2 = 0$: We have $y_1 = \frac{x_1 + k}{2k+1}$ and $y_2 = \frac{x_2 + k}{2k+1}$, and thus, since $x_1,x_2 \in \{0,1\}$, $$|y_1-y_2| = \frac{|x_1 - x_2|}{2k+1} \leq \frac{1}{2k+1}.$$
      \item $k \mod 2 = 1$: Similarly to the previous sub-case, we have $y_1 = \frac{x_2 + k}{2k+1}$ and  $y_2 = \frac{x_1 + k}{2k+1}$, from which it follows that, since $x_1,x_2 \in \{0,1\}$, $$|y_1-y_2| = \frac{|x_2 - x_1|}{2k+1} \leq \frac{1}{2k+1}.$$  
      \end{itemize}
    \item One process decides at line~\ref{line:dec_r=k} and the other at line~\ref{line:dec_r<k}. We have four cases: 
      \begin{itemize}
      \item $k \mod 2 = 0$ and process $1$ decides at line~\ref{line:dec_r=k}:
        $y_1 = \frac{x_1 + k}{2k+1}$ and $y_2 = \frac{x_1(2k+1) + (-1)^{x_1}(k-1)}{2k+1}$, from which we derive $$|y_1 - y_2| = \frac{x_1 +k -x_1(2k+1) - (-1)^{x_1}(k-1)}{2k+1} = \frac{1}{2k+1}.$$
      \item $k \mod 2 = 0$ and process $2$ decides at line~\ref{line:dec_r=k}:
        $y_2 = \frac{x_2 +k}{2k+1}$, and $y_1 = \frac{x_2(2k+1) + (-1)^{x_2}(k-1)}{2k+1}$, from which we derive $$|y_1 - y_2| = \frac{x_2(2k+1) + (-1)^{x_2}(k-1) - x_2 -k}{2k+1} = \frac{1}{2k+1}.$$
      \item $k \mod 2 = 1$ and process $1$ decides at line~\ref{line:dec_r=k}:
        $y_1 = \frac{x_2+k}{2k+1}$, and 
        $y_2 = \frac{x_2(2k+1) + (-1)^{x_2}(k-1)}{2k+1}$, from which we derive, as in the previous sub-case, $$|y_1 - y_2| = \frac{1}{2k+1}.$$ 
      \item $k \mod 2 = 1$ and process $2$ decides at line~\ref{line:dec_r=k}:
              $y_1 = \frac{x_1(2k+1) (-1)^{x_1}(k-1)}{2k-1}$, and $y_2 = \frac{x_1 +k}{2k+1}$, therefore,  as in the first sub-case, $$|y_1 - y_2| = \frac{1}{2k+1}.$$ 
      \end{itemize}
    \end{itemize}

    It remains to examine the cases in which $r_1 \neq r_2$. Note that by Lemma~\ref{lem:delta_r} $|r_1 - r_2| = 1$. Moreover, it follows from Lemma~\ref{lem:k_k-1} that both processes decide at line~\ref{line:dec_r<k}. Let us assume that $r_1 = r_2 + 1$ (the other case is similar). There are two sub-cases: 
    \begin{itemize}
    \item $r_1 \mod 2 = 0$: We have
      $y_1 = \frac{x_2(2k+1) + (-1)^{x_2}(r_1-1)}{2k+1}$, and 
      $y_2 = \frac{x_2(2k+1) + (-1)^{x_2}(r_1-2)}{2k+1}$, therefore 
      $$|y_1 - y_2| = \frac{|x_2(2k+1) + (-1)^{x_2}(r_1-1) - x_2(2k+1) - (-1)^{x_2}(r_1-2)|}{2k+1} = \frac{|(-1)^{x_2}|}{2k+1}.$$ 
    \item $r_1 \mod 2 = 1$. Similarly, we have
      $y_1 = \frac{x_1(2k+1) + (-1)^{x_1}(r_1-1)}{2k+1}$, and
      $y_2 = \frac{x_1(2k+1) + (-1)^{x_1}(r_1-2)}{2k+1}$, therefore
      $$|y_1 - y_2| = \frac{|(-1)^{x_1}|}{2k+1}.$$  
    \end{itemize}
    This case analysis completes the proof of the lemma. 
\end{proof}

Finally we remark that a process with input $x \in \{0,1\}$ never decides $1-x$ (see for example Figure~\ref{fig:1bit_eps_exe}: process $1$ with input $0$ (represented by black dots) cannot decide $1$.). This observation will be useful in the next section.

\begin{lemma}
  \label{lem:eps}
  For any $i \in \{1,2\}$, if process $i$ decides $y\in\{0,1\}$ then its input is $y$. 
\end{lemma}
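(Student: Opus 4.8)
The plan is to prove Lemma~\ref{lem:eps} by tracking which decision lines of Algorithm~\ref{alg:epsilon} can produce an output in $\{0,1\}$, and verifying that in each such case the deciding process must have had that very value as input. First I would observe that there are three possible decision points: line~\ref{line:decIdem}, line~\ref{line:dec_r=k}, and line~\ref{line:dec_r<k}. The cleanest way to organize the argument is to go through these one at a time, showing that the only one that can yield a value in $\{0,1\}$ is line~\ref{line:decIdem}, and that when a process decides there it returns exactly its own input $x_{\vme}$.

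For line~\ref{line:decIdem}, the returned value is literally $x_{\vme} = I_{\vme}.\mathsf{read}()$, which is the process's own input, written at line~\ref{line:writeinput}; so if this value is in $\{0,1\}$ (which it always is), then the decision equals the input, and we are done for this case. For line~\ref{line:dec_r=k}, the returned value is $(x_{\who}+k)/(2k+1)$ with $x_{\who}\in\{0,1\}$; since $0 < k < 2k+1$ and $k+1 < 2k+1$, this value lies strictly between $0$ and $1$, hence is never in $\{0,1\}$. For line~\ref{line:dec_r<k}, the returned value is $x_{\who} + ((-1)^{x_{\who}}(r-1))/(2k+1)$; here I would note that the branch to line~\ref{line:dec_r<k} is only reached when we are in the \textbf{else} of line~\ref{line:decIdem}, i.e.\ when $x_{\vother}\neq\bot$ and $x_{\vme}\neq x_{\vother}$, so in this branch the two inputs differ and in particular $r\geq 2$ (a process exiting the for-loop in iteration $r=1$ at line~\ref{line:break} would have read $I_{\vother}=\bot$ since the other process has not yet written its input --- more carefully, by Lemma~\ref{lem:delta_r} the processes are within one iteration, and the scenario where this decision line is reached with $r=1$ is excluded because the other process, having taken at least one step to cause the premature exit, has already written its input). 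Hence $r-1\geq 1$, and $0 < (r-1)/(2k+1) \leq (k-1)/(2k+1) < 1$, so the value $x_{\who}\pm(r-1)/(2k+1)$ is strictly between $0$ and $1$ and thus never in $\{0,1\}$.

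Combining these three cases: if process $i$ decides a value $y\in\{0,1\}$, it must have decided at line~\ref{line:decIdem}, and there it returned $x_{\vme}=x_i$, so $y = x_i$, which is the claim. The main obstacle I anticipate is the careful bookkeeping in the line~\ref{line:dec_r<k} case, specifically ruling out $r=1$ cleanly: one must argue from the control flow that reaching line~\ref{line:dec_r<k} forces the other process to have already executed line~\ref{line:writeinput}, so that $x_{\vother}\neq\bot$ is consistent with $r\geq 2$; this is essentially the same reasoning already used in the first bullet of the proof of Lemma~\ref{lem:approx_agreement}, and can be imported verbatim. Everything else reduces to the elementary inequalities $0 < \frac{k}{2k+1},\frac{k+1}{2k+1},\frac{r-1}{2k+1} < 1$ for $1\leq r\leq k$.
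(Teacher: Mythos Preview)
The paper states this lemma without proof (it is presented as a remark, with a reference to Figure~\ref{fig:1bit_eps_exe}). Your case analysis over the three decision lines is the natural approach and is almost complete, but the handling of line~\ref{line:dec_r<k} has a genuine gap.

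Your claim that $r\geq 2$ whenever line~\ref{line:dec_r<k} is reached is false. Counterexample: process~$j$ first executes line~\ref{line:writeinput} (so $I_j=x_j$), then process~$i$ runs through line~\ref{line:writeinput}, writes $1$ to $R_i$, and reads $R_j$, which still holds its initial value~$0$; hence $\new=\vprec=0$ and $i$ breaks with $r=1$. Now $i$ reads $I_j$, obtains $x_j\neq\bot$, and if $x_j\neq x_i$ it proceeds to line~\ref{line:dec_r<k} with $r=1$. Your parenthetical justification is internally inconsistent: you first assert $I_{\vother}=\bot$, then that the other process ``has already written its input''; neither is forced, since a break at $r=1$ can occur simply because $R_j$ still holds its initial value~$0$, independently of whether $j$ has executed line~\ref{line:writeinput}.

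The fix is immediate and does not require ruling out $r=1$. When $r=1$ one has $r\bmod 2=1$, so $\who\gets\vme$, and the returned value is
\[
x_{\vme}+\frac{(-1)^{x_{\vme}}(r-1)}{2k+1}=x_{\vme}+0=x_{\vme},
\]
the process's own input. For $r\geq 2$ your inequality $0<(r-1)/(2k+1)\leq(k-1)/(2k+1)<1$ shows the output lies strictly in $(0,1)$. Combining these two sub-cases handles line~\ref{line:dec_r<k} cleanly, and with your (correct) arguments for lines~\ref{line:decIdem} and~\ref{line:dec_r=k}, the lemma follows.
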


We  now have all the ingredients to prove the correctness  of the algorithm. 

\begin{proposition}\label{prop:Ak}
  For every $k\geq 0$, Algorithm ${\cal A}_{k}$ solves $\frac{1}{2k+1}$-approximate agreement  wait-free  for two processes using 1-bit registers with worst-case step complexity $O(k)$.
\end{proposition}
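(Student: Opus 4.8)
The plan is to verify the three ingredients that together constitute solvability of (discretized) $\frac{1}{2k+1}$-agreement by $\m{A}_k$ — wait-free termination in $O(k)$ steps, validity, and $\frac{1}{2k+1}$-agreement — by assembling the facts already recorded for Algorithm~\ref{alg:epsilon}; the case $k=0$, i.e.\ $1$-agreement, is immediate, so one may take $k\ge 1$.

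First I would dispatch termination, wait-freedom, the register size, and the step complexity, all by inspection of the code. Each process executes line~\ref{line:writeinput} once, then at most $k$ iterations of the for-loop — each consisting of one $\mathsf{write}$ to $R_{\vme}$ and one $\mathsf{read}$ of $R_{\vother}$ — and then a bounded number of further operations (two $\mathsf{read}$s of the input registers $I_{\vme},I_{\vother}$ and a return). Hence every process halts after at most $2k+3$ shared-memory operations, independently of how the other process is scheduled; this simultaneously gives termination of every correct process, the wait-free property, and the $O(k)$ worst-case step complexity. Moreover, the only values ever written to a register $R_{\vme}$ are $r\bmod 2\in\{0,1\}$, so $R_1$ and $R_2$ are genuine $1$-bit registers, the write-once registers $I_1,I_2$ being exactly the auxiliary registers the model permits for communicating inputs.

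Second, validity. I would observe that $x_{\vme}$ always equals the process's own input, since it was written to $I_{\vme}$ at line~\ref{line:writeinput} before any other step; hence when the two inputs coincide and equal some $x\in\{0,1\}$, any deciding process reads either $\bot$ or $x$ from the other input register and therefore returns $x_{\vme}=x$ at line~\ref{line:decIdem}, so both processes decide $x$. For the domain constraint I would check, by a one-line inspection of the three return statements and using $1\le r\le k$ at line~\ref{line:dec_r<k}, that every returned value has the form $m/(2k+1)$ with $m\in\{0,\dots,2k+1\}$, and in particular lies in $[0,1]$.

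Finally, $\frac{1}{2k+1}$-agreement is, in any execution in which both processes decide, exactly the conclusion of Lemma~\ref{lem:approx_agreement}; in an execution in which at most one process decides the agreement constraint is vacuous, and only the range constraint just verified is needed. Combining the three parts yields the proposition. There is essentially no further obstacle at this level: the technical core — the case analysis on the pair $(r_1,r_2)$ of loop counters, underpinned by the structural Lemmas~\ref{lem:delta_r}, \ref{lem:break}, \ref{lem:different_r} and~\ref{lem:k_k-1} — has already been carried out inside the proof of Lemma~\ref{lem:approx_agreement}, so the only point requiring a little care when assembling the proposition is the validity argument, namely that two processes starting with the same input are forced to decide at line~\ref{line:decIdem}.
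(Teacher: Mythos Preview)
Your proposal is correct and follows essentially the same approach as the paper: the paper's proof likewise checks wait-freedom and the $O(k)$ step bound by counting operations, checks validity by observing that equal inputs force a decision at line~\ref{line:decinit}, verifies that every output lies in $\{0,\frac{1}{2k+1},\ldots,1\}$, and then invokes Lemma~\ref{lem:approx_agreement} for agreement. Your write-up is a bit more explicit (the $k=0$ case, the $1$-bit register check, the vacuity of agreement when only one process decides), but the structure and content match.
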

\begin{proof}
  By the code,  each process performs at most $O(k)$ write and read operations, and thus the algorithm is wait-free. The decisions are valid. If both processes have the same input, each process that decides decides its own input at line~\ref{line:decinit}. Otherwise, if the set of input is $\{0,1\}$, every decision (obtained at line~\ref{line:decinit}, line~\ref{line:dec_r<k} or line~\ref{line:dec_r=k}) belongs to $\{0,\frac{1}{2k+1},\ldots,\frac{2k}{2k+1},1\}$. Finally, it follows from Lemma~\ref{lem:approx_agreement} when the two processes decide, their output values are at most $\frac{1}{2k+1}$ apart from each other. 
\end{proof}

\subsection{From $\epsilon$-Agreement to Arbitrary Tasks}

In the previous section, we have seen that with two $1$-bit registers, two processes can wait-free solve $\epsilon$-agreement where $\epsilon$ is arbitrary small. 
Note that for two processes, an algorithm that solves a task has to be $1$-resilient, that is  has to tolerate at most one crash. Reciprocally, any $1$-resilient two processes algorithm is also wait-free. It is known that $\epsilon$-agreement plays a  central role in the characterization of $1$-resilient solvable tasks. In fact, in \cite{BMZ90charac}  an \emph{universal} protocol is presented, for solving any $1$-resilient solvable task which relies at its heart on an  $\epsilon$-agreement protocol. We show that this approach  also applies to the case of two processes communicating with constant  size  registers.

Biran, Moran and Zaks~\cite{BMZ90charac} give necessary
and sufficient conditions for tasks to be solvable
$1$-resiliently. Essentially,  solving a task   $\Pi = (\m{I},\m{O},\Delta)$ boils down for the processes to select an edge in a graph induced by the
possible outputs of the task. Each node represents a valid output (a couple of values $(y_1,y_2)$ in the case of two processes) and two nodes are neighbours if the corresponding outputs differ by one value. Each process is required to select a node such that the set of selected nodes is a singleton $\{v\}$ or form an edge $\{v,v'\}$ in the graph. Each process $i$ then decides the  value $y_i$  in the output associated with its selected node. The couple $(y_1,y_2)$ decided by the processes is a valid output for $\Pi$ since it is the output associated with $v$ or $v'$. We explain next how this can be achieved with $\epsilon$-agreement, and how we make sure that the decision $(y_1,y_2)$ is legal with respect to the specification of the task $\Delta$ and the inputs of the processes.

\subsubsection{Graph-theoretic characterization of $1$-resiliency}

Biran et al.~\cite{BMZ90charac} characterized the tasks that are $1$-resilient solvable in the asynchronous message passing model. It is not hard to see that the characterization also holds for the shared memory model in which processes communicate with read/write registers. For the  necessary conditions, the only property of the message passing model used in the proof is $i$-sleeping execution in which  all messages by  process $i$ are  arbitrarily delayed. This translates in shared memory by arbitrarily delaying  a write operation by process $i$. The sufficient condition is based on  the universal protocol mentioned above, which can be simulated in shared memory.

We recall the main ingredients of the characterization of \cite{BMZ90charac}, adapted to the case of two processes. 
Let $\Pi = (\m{I},\m{O},\Delta)$ be a task for
two processes.  An input $X \in \m{I}$ (respectively, an output $Y \in \m{O}$) for $\Pi$ is a couple of values $X = (x_1,x_2)$ (respectively, $Y = (y_1,y_2)$). As a process may fail before deciding or writing  its input to a shared register, \cite{BMZ90charac}  considers \emph{partial} input $X^i$ or output $Y^i$ in which the input or output of process $i, i \in \{1,2\}$ is missing and replaced by $\bot$. $X$ (respectively, $Y$) is then an \emph{extension} of $X^i$ (respectively, of $Y^i$).

Given a set $\m{O}' \subseteq \m{O}$ of outputs,  $G(\m{O}')$ is the graph whose
set of nodes is $\m{O}'$ and there is an edge between two nodes $y$ and
$y'$ if and only if $y$ and $y'$ differ in only one value. Said differently, the output value of one process $i\in\{1,2\}$ is the same in $y$ and $y'$. Essentially, $\Pi$ is $1$-resilient solvable if (a subset of) its outputs is \emph{connected} and, for any partial input $X^i$, there exits a partial output $Y^i$ that is \emph{extendable}: For any extension $X$ of $X^i$, there exists an extension $Y \in \Delta(X)$ of $Y^i$. 

\begin{lemma}[\cite{BMZ90charac}]
  \label{lem:1res_charac}
 Task $\Pi = (\m{I},\m{O},\Delta)$ is $1$-resilient solvable if and only if there exists 
 a subset $\m{O'} \subseteq \m{O}$ of the outputs satisfying:
 \begin{description}
 \item[Connectivity:] For every  input $X \in \m{I}$, the graph
   $G(\Delta(X) \cap \m{O'})$ is connected, and
\item[Covering:] For every partial input  $X^i$, $i\in \{1,2\}$, there
  exists a partial output  $Y^i$ such that, for every extension $X \in \m{I}$ of~$X^i$, 
  there exists an extension $Y$ of $Y^i$  in $\Delta(X) \cap \m{O}'$. 
\end{description}  
\end{lemma}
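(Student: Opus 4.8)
The plan is to prove Lemma~\ref{lem:1res_charac} by recalling the two directions of the Biran--Moran--Zaks characterization and adapting the message-passing argument to the read/write shared-memory model for two processes.

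\medskip

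\noindent\textbf{Necessity.}
First I would assume that an algorithm $\m{A}$ solves $\Pi$ $1$-resiliently, and extract the witness subset $\m{O'}\subseteq\m{O}$. The idea is to look at the graph of final configurations reachable by $\m{A}$, exactly as in the sketch of Section~\ref{subsec:sketch:main:wait-free-not-universal}: vertices are pairs $(p,\sigma)$ with $\sigma$ a deciding state of $p$, edges join compatible states of the two distinct processes. Define $\m{O'}$ to be the set of output couples $(y_1,y_2)$ that are actually decided in some execution of $\m{A}$; more precisely, for a fixed input $X$, restrict attention to executions starting from $X$ and take the decided outputs. For the \emph{Connectivity} property, I would argue that if $G(\Delta(X)\cap\m{O'})$ were disconnected for some input $X$, then the two processes could use $\m{A}$ to solve consensus on which connected component they land in, contradicting Lemma~\ref{lem:impossibility-of-consensus}; the only model-specific ingredient here is that in a ``$i$-sleeping'' execution (where all of $p_i$'s \emph{write} operations are delayed, so $p_i$ is invisible to $p_j$) $p_j$ must still decide, which is exactly where $1$-resilience is used and which transfers verbatim from message passing to shared memory. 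For the \emph{Covering} property, given a partial input $X^i$, consider the execution where $p_j$ ($j=3-i$) runs alone; it must decide some value, and by $1$-resilience this is the partial output $Y^i$; the requirement that $Y^i$ extends to a legal output for every extension $X$ of $X^i$ follows because $p_i$ can always wake up late, read $x_i$, and must then be able to decide legally.

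\medskip

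\noindent\textbf{Sufficiency.}
For the converse, I would recall the universal protocol of~\cite{BMZ90charac}: given $\m{O'}$ satisfying Connectivity and Covering, each process $i$ with input $x_i$ uses Covering to pick its partial-output node $Y^i$, then the processes run an $\epsilon$-agreement-style protocol on the connected graph $G(\Delta(X)\cap\m{O'})$ to converge to a single node or an edge, each process extracting its own coordinate $y_i$. The subtle point is legality: one must ensure that the couple $(y_1,y_2)$ the processes commit to lies in $\Delta(X)\cap\m{O'}$ for the \emph{actual} joint input $X$, even though each process initially only knows its own input and its partial node $Y^i$. This is handled because Covering guarantees the partial nodes are mutually compatible — there is always a common extension in $\Delta(X)\cap\m{O'}$ — and the agreement on the connected graph can be started from the two (possibly different) partial nodes and driven to a legal node. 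I would emphasize that every step of this protocol uses only read/write registers (indeed this is the content of the remark that ``the universal protocol can be simulated in shared memory''), so the characterization holds in our model.

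\medskip

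\noindent\textbf{Main obstacle.}
The routine parts (graph of configurations, the consensus-from-disconnection argument, the solo-execution argument for Covering) port directly from~\cite{BMZ90charac} and I would state them compactly, citing the original. The genuinely delicate point — and the one I expect to spend the most care on — is making the sufficiency direction precise: specifying exactly how the two processes agree on a \emph{legal} node of $G(\Delta(X)\cap\m{O'})$ when they start from different partial outputs and only gradually learn each other's input. In the present paper this is precisely what the authors intend to refine in Section~\ref{sec:equality}, replacing the full-information agreement of~\cite{BMZ90charac} by the $1$-bit $\epsilon$-agreement protocol $\m{A}_k$ of Algorithm~\ref{alg:epsilon}; for the statement of Lemma~\ref{lem:1res_charac} itself, however, it suffices to invoke the original universal protocol and note its shared-memory implementability, deferring the bounded-register optimization to the subsequent development.
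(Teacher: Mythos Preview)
The paper does not actually prove this lemma: it is stated as a citation from~\cite{BMZ90charac}, and the only justification the paper gives is the short paragraph immediately preceding the statement, which notes (i) that the necessity direction uses only ``$i$-sleeping'' executions, which translate to delayed writes in shared memory, and (ii) that the sufficiency direction is the universal protocol of~\cite{BMZ90charac}, which can be simulated with read/write registers. Your proposal is entirely consistent with this, and in fact fleshes out exactly those two points in more detail than the paper itself does; so there is no discrepancy, only that you have written more than the authors chose to.
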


\subsubsection{Solving any $1$-resilient solvable task}

Based on Lemma~\ref{lem:1res_charac}, a protocol for arbitrary $1$-resilient solvable task is derived in~\cite{BMZ90charac} as follows. Given an input $X \in \m{I}$ and a partial input $X^i$, for some $i \in \{1,2\}$, one can find a path ($Y_0,\ldots,Y_{L}$)
in $G(\m{O}')$ such that:
\begin{itemize}
\item $Y_0, \ldots, Y_{L-1}$ are valid outputs for $X$, i.e., $Y_i \in \Delta(X)$ for any $i \in \{0,\ldots, L-1\}$ 
\item $Y_{L-1}$ and $Y_{L}$ differ in their $i$th  component (that is, $Y_{L-1}$ is an extension of $Y_{L}^i$). 
\end{itemize}
$Y_0$ is an  output arbitrarily chosen  in $\Delta(X) \cap \m{O}'$. In particular, the choice of $Y_0$ does not depend on $i$. We next pick outputs $Y_{L}$ and $Y_{L-1}$. 
By the covering condition, there exists a partial output $Y^i$ that can be extended to a valid output for any extension of $X^i$. Let $Y_{L} \in \m{O}'$ be any extension of $Y^i$. Note that $Y_{L}$ may not be a valid output for $X$, that is it is not required that $Y_{L} \in \Delta(X)$. Let also $Y_{L-1}  \in \m{O}'$ that extends $Y^i$ and is valid for input $X$, i.e., $Y_{L-1} \in \Delta(X) \cap \m{O}'$.

We now explain why there is a path between $Y_0$ and $Y_{L}$ in $G(\m{O}')$. Since both $Y_{L-1}$ and $Y_{L}$ extend the same partial output $Y^i$, they are neighbours in $G(\m{O}')$.  Moreover, as $Y_0$ and $Y_{L-1}$ both belong to $\Delta(X) \cap \m{O}
'$, there are connected in $G(\m{O}')$ by the connectivity condition of Lemma~\ref{lem:1res_charac}.

Let $\m{I}^i$ be the set of partial inputs in which the input of process $i$ is missing, and 
let $$\delta : \m{I} \cup \m{I}^1 \cup \m{I}^2 \to \m{O'}$$ a map that associates with each (partial) input an output as above. We for any pair of input/partial input $X, X^i$, we fix a path in $G(\m{O}')$, denoted  $$path(\delta(X),\delta(X^i)) = (Y_0,\ldots,Y_{L}),$$  connecting $\delta(X)$ to $\delta(X^i)$. For simplicity, we assume that all paths have the same length $L\geq 2$. This can be achieved by adding adjacent copies of one of the nodes in shorter paths.

A key property of $path(\delta(X),\delta(X^i)) =(Y_0,\ldots,Y_{L})$ is that when the input is $X$, processes can safely pick their output in any pair of adjacent nodes.  Suppose for example that $Y = (y_1,y_2)$ and  $Y' = (y_1',y_2')$  are adjacent nodes in the path, process $1$ picks its output in $Y$ and process~$2$ in $Y'$.  There is only one output value that differs between $Y$ and $Y'$, say $y_2 \neq y_2'$. Then the global output is $Y' \in \Delta(X)$ by definition of the path. The only exception is when $Y$ or $Y'$ is equal to~$Y_{L}$, which may not be in $\Delta(X)$. But in that case we make sure that process $j \neq i$ does not pick its output in $Y_{L}$.

Having processes selecting adjacent nodes is easily done with an instance $\epsilon$-agreement, with $\epsilon = \frac{1}{L}$. Each process starts the $\epsilon$- agreement protocol with $0$ (if it sees the full input) or $1$ (if it sees only its input) and gets back a decision $d = k \epsilon$ for some $k \in \{0,\ldots, L\}$. It then selects node $Y_{d/\epsilon}$. When the memory is unbounded, as in~\cite{BMZ90charac}, processes can simply write in shared memory their view of the input (full or partial), from which the path $path(\delta(X),\delta(X^i))$ can be inferred. A little more care is required when the shared memory space is bounded.

\subsubsection{Solving any $2$-processes wait-free solvable task  with constant size registers}

Algorithm~\ref{alg:universal} presents an universal construction for solving any tasks solvable by two processes wait-free with registers of constant size. It relies on the $\epsilon$-agreement protocol (Algorithm~\ref{alg:epsilon}) of Section~\ref{sec:epsilon_agreement}, and requires 2 registers of size $3$ bits, one per process. 

The initial value of each  input register $I_{1}$ and $I_{2}$ is $\bot$. Algorithm~\ref{alg:universal} does not explicitly use other registers. However, processes invoke the $\epsilon$-agreement protocol (Algorithm~\ref{alg:epsilon}) in which two registers per process are needed. In Algorithm~\ref{alg:epsilon}, process $i$ requires an input register, whose value is either $\bot,0$ or $1$ and a binary communication register (register $R_{i}$). Since  only process $i$ writes to these registers, they could be replaced by a single register of size $3$ bits. 

Given a couple $C = (a,b)$, let $C[1]$ denotes the first value $a$ and $C[2]$, the second value $b$. Processes are equipped with the same map $\delta$ from the inputs and partial inputs of $\Pi$ to $\m{O}$, as well as the same collection of paths $$\{\mathit{path}(\delta(X),\delta(X^i)) : X \in \m{I}, i \in \{1,2\}\}.$$ As explained previously, we may assume that all paths have the same length $L$.
Process $i$ first writes its input $x_i$ to the special register $I_{i}$ (line~\ref{lu:writeinput}) and reads the input of the other register. Its view (partial if it reads $\bot$ in $I_{other}$ or full otherwise) of the global input to task $\Pi$ is encoded in its input to the $\epsilon$-agreement protocol (line~\ref{lu:epsinput}). 
Process $i$ then invokes the $\epsilon$-agreement protocol of Section~\ref{sec:epsilon_agreement} and gets back a value $d \in \{0,\frac{1}{L},\ldots,\frac{L-1}{L},1\}$ (line~\ref{lu:epsinvoke}).
Depending on $d$, the process then computes a partial input (line~\ref{lu:partialX1}, line~\ref{lu:partialX2} or line~\ref{lu:case1}) and/or a full input (line~\ref{lu:case0} and line~\ref{lu:fullX}). The full input $fullX$, when computed by both processes, is the same and equal to $(x_1,x_2)$. The same holds for $partialX$. Indeed, when process $i$ sets $partialX$, it is certain that the other process has a different view than its own of the task input before invoking $\epsilon$-agreement.

For example, suppose that process $1$ gets back $d, 0\leq d < 1$ from the $\epsilon$-agreement protocol to which its input $myInput$ is $0$. Then it knows that the input of process $2$  to $\epsilon$-agreement is $1$ (otherwise, both processes  decide $0$). Hence process $2$ misses $x_1$ when it first reads the input register. Consequently, $partialX$ is set to $(\bot,x_2)$ at process $1$ (line~\ref{lu:partialX2}). For process $2$, $partialX$ is also $(\bot,x_2)$ if it is set at line~\ref{lu:partialX1} (since for process $2$, $myInput=1$) or at line~\ref{lu:case1} (since in that case, it does not know the input of process $1$).
Therefore, both processes are able to select the same path $$path(\delta(fullX),\delta(partialX)) = Y_0,\ldots,Y_{L}.$$ Process $i$ then returns $y_i$, the value at position $i$ in $Y_{dL}$ (line~\ref{lu:decidefull}, line~\ref{lu:decideY} or line~\ref{lu:decidepartial}).    As $\epsilon = \frac{1}{L}$, the output $Y,Y'$ chosen by the processes are consecutive in path. They thus differ in at most value, and hence $(y_1,y_2) = Y \in \{Y_0,\ldots,Y_{L}\}$. We show in the proof that $Y \neq Y_{L}$ and therefore, $Y \in \Delta(X)$ by construction of the path $path(\delta(fullX),\delta(partialX))$. 

\begin{algorithm}[tb]
  \caption{Solving an arbitrary wait-free  solvable task $\Pi =(\m{I}, \m{O}, \Delta)$ with $3$-bit registers for two processes. Code for process $i \in \{1,2\}$ with input $x_i$}
  \label{alg:universal}
  \begin{algorithmic}[1]
    \Statex \textbf{pre-processing:} \Comment{Common to both processes}
    \Statex \hspace{\algorithmicindent} compute map $\delta:\m{I}\cup\m{I}^1\cup\m{I}^2 \to \m{O}$, integer $L$, 
    \Statex \hspace{\algorithmicindent} and $\mathit{path}(\delta(X),\delta(X^k))$ for every $X\in \m{I}$, and every $k \in \{1,2\}$
    \Statex \textbf{local variables:}
    \Statex \hspace{\algorithmicindent} $me \gets i$  \Comment{my index}
    \Statex \hspace{\algorithmicindent} $other \gets 3-i$ \Comment{index of the other process}
    \Statex \hspace{\algorithmicindent} $myInput \in \{0,1\}$         \Comment{input of process $i$ for the $\epsilon$-agreement protocol}

    \Begin
    \State $I_{me}.\mathsf{write}(x_{me})$ \label{lu:readinput} \Comment{write input for task $\Pi$}
    \State $x_{other}  \gets I_{other}.\mathsf{read}()$  \label{lu:writeinput}\Comment{Read other proc. input}
    \If{$x_{other} = \bot$} $myInput \gets 1$ \textbf{else} $myInput \gets 0$\label{lu:epsinput}    \EndIf
    \State $d \gets \m{A}_\frac{L}{2}(myInput)$\label{lu:epsinvoke}       \Comment{$\epsilon$-agreement (Algorithm~\ref{alg:epsilon}) with  $\epsilon = \frac{1}{L+1}$}
    \If{$d = 0$}
    		\State $(fullX[me],fullX[other]) \gets (x_{me},x_{other})$ \label{lu:case0} 
    		\State \textsl{return} $\delta(fullX)[me]$ \label{lu:decidefull}  \Comment{return value at position $me$ in $\delta(fullX)$}
    \Else
    		\If{$0 < d < 1$}
			\State $x_{other} \label{lu:cased}  \gets I_{other}.\mathsf{read}()$ 
   			\State  $(fullX[me],fullX[other]) \gets (x_{me},x_{other})$  \label{lu:fullX}
			\If{$myInput = 1$} 
				\State $partialX \gets \mathit{fullX}^{other}$ \label{lu:partialX1}
			\Else  \Comment{replace value at position $other$ or $me$ with $\bot$ in $fullX$}
				\State $partialX \gets \mathit{fullX}^{me}$ \label{lu:partialX2}
			\EndIf
    			\State $(Y_0, \ldots, Y_{dL}, \ldots, Y_{L}) \gets path(\delta(fullX),\delta(partialX))$
    			\State \textsl{return} $Y_{dL}[me]$ \label{lu:decideY} \Comment{return value at position $me$ in $Y_{dL}$}
   		 \Else\Comment{final case: $d=1$}
    			\State  $(partialX[me],partialX[other]) \gets (x_{me}, \bot)$ \label{lu:case1}
   			 \State  \textsl{return}($\delta(partialX)[me]$) \label{lu:decidepartial}   
			 				\Comment{return value at position $me$ in $\delta(partialX)$}
		  \EndIf
   	\EndIf
\End
\end{algorithmic}
\end{algorithm}

\subsubsection{Correctness of Algorithm~\ref{alg:universal} and proof of Theorem~\ref{theo:case-of-two-proc}}

Let $\Pi = (\m{I},\m{O},\Delta)$ be a task solvable by two processes wait-free with unbounded registers. As $\Pi$ is $1$-resilient solvable by processes $1$ and $2$, we have seen that there exists a map $$\delta : \m{I} \cup \m{I}^1 \cup \m{I}^2 \to \m{O},$$ an integer $L \geq 3$ and a collection $\{path(\delta(X),\delta(X^i)): X \in \m{I}, i\in \{1,2\}\}$ of sequences of $L+1$ outputs in $\m{O}$  such that, for every 
$X \in \m{I}$, and every  $i \in \{1,2\}$, if  $$path(\delta(X),\delta(X^i)) = Y_0,\ldots,Y_{L}$$ then, for every $j\in\{1,\dots,L\}$
\begin{itemize}
\item $Y_j \in \Delta(X)$,
\item $Y_j$ and $Y_{j+1}$ differ in at most one value,
\item $Y_{L-1}$ and $Y_{L}$   differ in their value at position $i$.  
\end{itemize}

\begin{lemma}
  \label{lem:ucorrect}
  Let $y_1,y_2$ be the value returned by processes $1$ and $2$   in an execution of Algorithm~\ref{alg:universal} in which their input is $x_1,x_2$, respectively.  $Y = (y_1,y_2) \in \Delta(X)$, where $X = (x_1,x_2)$. 
\end{lemma}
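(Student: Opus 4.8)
The plan is to follow the execution of Algorithm~\ref{alg:universal} in all the cases determined by the value $d$ returned by the $\epsilon$-agreement sub-protocol $\m{A}_{L/2}$, and to show in each case that the pair $Y=(y_1,y_2)$ picked by the two processes is a pair of consecutive nodes on the common path, and is never the ``unsafe'' last node $Y_L$ at the component of the process that could have a partial view. First I would record the two consistency facts that make the argument go through: (i)~the two processes compute the same $\mathit{fullX}=(x_1,x_2)$ whenever both of them reach a line where $\mathit{fullX}$ is set, and the same $\mathit{partialX}$ whenever both set it; and (ii)~by Lemma~\ref{lem:eps} the decision $d$ of the $\epsilon$-agreement instance satisfies: if $d=0$ then no process had input $1$, hence both processes saw the full input; and if $d=1$ then no process had input $0$, hence both processes saw only their own input. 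Fact (ii) is exactly what guarantees that the ``$d=0$'' branch is taken only when $\mathit{fullX}$ is genuinely known to everyone, and the ``$d=1$'' branch only when $\mathit{partialX}$ is. Fact (i) for $\mathit{partialX}$ is the subtle point already flagged in the text: when a process sets $partialX$ it is because it knows the other process has a strictly different view, so the missing coordinate ($\bot$) is forced to be the same at both processes.

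Next I would do the case analysis on $(d_1,d_2)$, the decisions of the two processes, using $|d_1-d_2|\le \epsilon = 1/L$ and the fact that all decisions are multiples of $1/L$, so $d_1L$ and $d_2L$ are integers in $\{0,\dots,L\}$ differing by at most $1$.
\begin{itemize}
\item If both processes get $d=0$: by (ii) both saw the full input $X$, both take line~\ref{lu:case0}, and both return the corresponding coordinate of $\delta(\mathit{fullX})=\delta(X)\in\Delta(X)\cap\m{O}'$, which is the node $Y_0$ of the path; so $Y=Y_0\in\Delta(X)$.
\item If both get $d=1$: by (ii) both saw only their own input, both take line~\ref{lu:case1} with $partialX=(x_i,\bot)$ read correctly by (i), and both return the coordinate of $\delta(\mathit{partialX})$; this is the node $Y_L$, and since it equals the common partial output $Y^i$ for the (unique) $i$ such that process $i$ is the one missing — wait, here both coordinates come from the same extension, so $Y=Y_L$, and although $Y_L$ need not be in $\Delta(X)$, in this branch $X$ is not even determined; what must be shown is that $(y_1,y_2)$ is legal for \emph{every} extension, which holds by the covering property since $Y_L$ extends $Y^i$ and $Y^i$ is extendable. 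Actually in this branch $d_1=d_2=1$ forces $partialX$ to be the \emph{same} partial input $X^i$ at both processes only if they miss the same coordinate; I will argue that is the case (they each wrote their own input, and each missed the other), so $partialX$ has $\bot$ in one fixed coordinate and the map is single-valued.
\item The mixed cases, $\{d_1,d_2\}\subseteq\{0,\tfrac1L\}$ or $\{\tfrac{L-1}L,1\}$ or both interior with $|d_1L-d_2L|\le 1$: here at least one process is in the ``$0<d<1$'' branch, re-reads $I_{other}$ (line~\ref{lu:cased}), forms $\mathit{fullX}$ and $\mathit{partialX}$, selects the common path $Y_0,\dots,Y_L$, and returns $Y_{d_iL}[i]$; a process in the $d=0$ branch returns $Y_0[i]$ and one in the $d=1$ branch returns $Y_L[i]$. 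In every subcase the two returned coordinates come from two nodes $Y_a,Y_b$ with $|a-b|\le 1$, hence from a single node or an adjacent pair; since consecutive nodes differ in at most one coordinate, $(y_1,y_2)=Y_b$ (or $Y_a$), i.e.\ it is one of the path nodes. It remains to check $Y=(y_1,y_2)\neq Y_L$ whenever $Y_L\notin\Delta(X)$: this is where I use that $Y_{L-1}$ and $Y_L$ differ exactly in the $i$th coordinate, where $i$ is the process that may have the partial view; the process that did see the full input is in the $d=0$ or $0<d<1$ branch and therefore returns a coordinate of some $Y_a$ with $a\le L-1$, so the global pair cannot be $Y_L$.
\end{itemize}

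The main obstacle I expect is the bookkeeping in the mixed cases: making precise which process can be ``behind'' (returning $Y_L[i]$) and simultaneously certifying that the \emph{other} coordinate is taken from a node with index $\le L-1$, so that the assembled pair is $Y_j$ for some $j\le L-1$ and hence in $\Delta(X)$. This needs the orientation fact that $Y_{L-1}$ and $Y_L$ differ precisely in coordinate $i$ (the one of the potentially-partial process), combined with Lemma~\ref{lem:eps} to pin down \emph{which} process that is from the value of $d$. Once that correspondence is nailed down, each subcase reduces to a one-line check that $(y_1,y_2)$ coincides with a path node of index at most $L-1$, and legality follows from the defining property of $\mathit{path}(\delta(\mathit{fullX}),\delta(\mathit{partialX}))$, namely $Y_j\in\Delta(X)$ for all $j\le L-1$. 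I would close by noting that the special degenerate length adjustment (padding shorter paths with repeated nodes so all paths have length $L$) does not affect any of this, since repeated adjacent nodes trivially satisfy ``differ in at most one coordinate'' and preserve membership in $\Delta(X)$.
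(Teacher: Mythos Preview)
Your overall structure—case analysis on $(d_1,d_2)$, with consistency facts about $\mathit{fullX}$ and $\mathit{partialX}$—matches the paper's, but your treatment of the case $d_1=d_2=1$ contains a genuine error. You try to argue correctness in that branch, but the case is in fact \emph{impossible}, and the paper's proof disposes of it precisely that way: each process first writes its input to $I_{me}$ and then reads $I_{other}$, so the process whose write occurs later necessarily reads a non-$\bot$ value and sets $myInput=0$. Hence at most one process enters the $\epsilon$-agreement protocol with input~$1$, and by Lemma~\ref{lem:eps} at most one process can decide~$1$. Your attempted argument for this case is also internally broken: if ``each missed the other'', process~$1$ would set $partialX=(x_1,\bot)$ while process~$2$ would set $partialX=(\bot,x_2)$; these have $\bot$ in \emph{different} coordinates, so your claim that ``$partialX$ has $\bot$ in one fixed coordinate'' fails, and the two processes would be selecting from different outputs $\delta(X^2)$ and $\delta(X^1)$ with no relation guaranteed between them.

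A second, smaller point: once you know $d_1=d_2=1$ is impossible, the argument that $(y_1,y_2)\neq Y_L$ in the mixed cases becomes a one-liner—if the assembled pair were $Y_L$, both processes would have selected index~$L$, i.e.\ both would have $d=1$, contradiction. The paper uses exactly this shortcut. Your planned route via the orientation fact (that $Y_{L-1}$ and $Y_L$ differ precisely in the coordinate of the process with the partial view) does work and is a legitimate alternative, but it requires you to track which process has $myInput=1$ and match it to the missing coordinate of $partialX$; the impossibility argument avoids all of that bookkeeping.
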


\begin{proof}
  Each process $i, i \in \{1,2\}$ returns $y_i$ at line~\ref{lu:decidefull}, line~\ref{lu:decideY} or line~\ref{lu:decidepartial} depending of the value it obtains from the $\epsilon$-agreement protocol. Let $d_i$ be the value obtained from the $\epsilon$-agreement protocol by process $i$ (line~\ref{lu:epsinvoke}). 
  \begin{itemize}
  \item $d_1 = d_2 = 0$. By Lemma~\ref{lem:eps}, the input of both process to the $\epsilon$-agreement protocol is $0$. They thus both see the full input $X = (x_1,x_2)$ of task $\Pi$, that is they read a non-$\bot$ value in the other process register (line~\ref{lu:readinput}). Hence, $Y = (y_1,y_2) = \delta(X)$, which belongs to $\Delta(X)$ by definition of $\delta$. 
  \item $d_1= d_2 = 1$. This case is not possible. By Lemma~\ref{lem:eps}, it would imply that the input of both processes to the $\epsilon$-agreement protocol is $1$. But the last process that reads the other process' input register (line~\ref{lu:readinput}) sees the other process input and thus sets its input to the $\epsilon$-agreement protocol to $0$ (line~\ref{lu:epsinput}).
  \item It remains to examine the case in which $ 0 < d_i < 1$ for some process $i$. Without loss of generality, let us assume that $0 < d_1  < 1$. Processes have therefore invoke the $\epsilon$-agreement protocol with distinct inputs before process $1$  gets back $d_1$. Hence, when process $1$ reads the input register of process $2$ at line~\ref{lu:cased}, it obtains $x_2 \neq \bot$. Thus $\mathit{fullX}$ is a full input in $\m{I}$, equal to $(x_1,x_2)$. $partialX$ is either $(x_1,\bot)$ or $(\bot,x_2)$   depending on the input of process $1$  to the $\epsilon$-agreement protocol. If process $1$ has not read $x_2$ at line~\ref{lu:readinput}, $partialX = (x_1,\bot)$ and $partialX = (\bot,x_2)$ otherwise.

Observe  that if process $2$ also computes a partial output (at line~\ref{lu:partialX1}, at line~\ref{lu:partialX2} or line~\ref{lu:case1}), it computes the same partial output. For example, if process $1$ has not read $x_2$ at line~\ref{lu:readinput}, it must be the case that process $2$ reads $x_1$ when it performs line~\ref{lu:readinput}. The partial output computed by process $2$ is then also $(x_1,\bot)$. Similarly,  if process $1$ has  read $x_2$ line~\ref{lu:readinput}, as both processes have distinct input to the $\epsilon$-agreement protocol, process $2$ misses the input of process $1$ at line~\ref{lu:readinput}. Therefore,  the partial output computed by process $2$ is $(\bot,x_2)$. 

Let $$d_1 = k_1\epsilon = k_1/L,$$ and let us  consider $$path(\delta(fullX),\delta(partialX)) = (Y_0, \ldots,Y_{L}).$$ 

Process 2 obtains $d_2 = k_2/L$ from the $\epsilon$-agreement protocol with $k_2 \in \{k_1-1,k_1,k_1+1\}$. The value $y_1,y_2$ returned by the processes are thus taken from the same outputs or two consecutive outputs $Y_k,Y_{k+1}$. Since consecutive outputs in $path(\delta(fullX,partialX))$ differ by at most one value, $(y_1,y_2) =Y_k$ or $(y_1,y_2) =Y_{k+1}$.  Finally, $$(y_1,y_2) \neq Y_{L} = \delta(partialX)$$ as this would mean that both processes obtain $1$ from the $\epsilon$-agreement protocol. Therefore, there exists $j \in \{1,\ldots,L\}$ such that $(y_1,y_2) = Y_j$. Hence, $$(y_1,y_2) = Y_j \in \Delta(fullX) = \Delta(x_1,x_2).$$ by construction of  $path(\delta(fullX,partialX))$.  \qedhere
 \end{itemize}
\end{proof}

We are now ready to prove Theorem~\ref{theo:case-of-two-proc}.
\begin{proof}[Proof of Theorem~\ref{theo:case-of-two-proc}]
  Algorithm~\ref{alg:universal} is wait-free, as the $\epsilon$-agreement protocol (Algorithm~\ref{alg:epsilon}) is wait-free and it has no loop. Besides input register, it requires a 3-bits register per process $i$ (for emulating the input register, which contain $\bot,0$ or $1$ and the binary register $R_i$ of Algorithm~\ref{alg:epsilon}). 
Let $\Pi = (\m{I},\m{O},\Delta)$ is a task solvable by two processes wait-free with unbounded registers.  By Lemma~\ref{lem:ucorrect},   Algorithm~\ref{alg:universal} correctly solves $\Pi$.
\end{proof}
\section{Universality when a Minority of Processes May Fail}
\label{sec:minority}

Theorem~\ref{theo:small-values-of-t} is a direct consequence of the following result. 

\begin{proposition}
There exists $c$ such that, for any $n\geq 2$, and any $t$ with $1 \leq t < \frac{n}{2}$, every task solvable by $n$ processes in the $t$-resilient shared-memory model with unbounded-size registers can be solved  by $n$ processes  in the $t$-resilient shared-memory model with registers of  $c(t+1)$ bits. 
\end{proposition}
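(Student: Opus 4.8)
The plan is to carry out the three-phase reduction outlined in Section~\ref{sec:sketch}: first turn the given unbounded-register shared-memory algorithm into an unbounded-message message-passing algorithm, then re-route that algorithm over a \emph{sparse} $(t+1)$-connected network so that every process communicates with only $O(t)$ others, and finally re-implement each channel of that network in shared memory using $O(1)$ register bits per channel via the alternating bit protocol. Throughout I use that the model bounds only the \emph{register} size: processes keep unbounded local state, which is needed for message deduplication and for the per-channel protocol automata. (Tasks with large inputs are handled as in the model definition: each process first copies its input into the write-once register $I_i$, which every process reads directly, so the bound $c(t+1)$ concerns only the coordination registers $R_1,\dots,R_n$.)

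\emph{Phase 1 (shared memory $\to$ message passing).} Let $\mathcal{A}$ solve the task $\Pi$ in the $t$-resilient SWMR shared-memory model with unbounded registers. Since $t<n/2$, the Attiya, Bar-Noy and Dolev simulation~\cite{AttiyaBD95} between shared memory and message passing gives a $t$-resilient algorithm $\mathcal{A}'$ solving $\Pi$ over the \emph{complete} $n$-node network with reliable channels (which we may take FIFO by appending per-sender sequence numbers), still with messages of unbounded size; a crash in $\mathcal{A}'$ corresponds to a crash in the simulated run of $\mathcal{A}$, so at most $t$ occur.

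\emph{Phase 2 (complete network $\to$ sparse $(t+1)$-connected network).} Fix the circulant graph $H_n$ on vertices $\{1,\dots,n\}$ in which $i$ is adjacent to $i\pm1,\dots,i\pm\lceil (t+1)/2\rceil\pmod n$ (well defined since $n>2t$); it is $(t+1)$-connected and has maximum degree at most $t+2$. I would emulate the complete-network model over $H_n$ by \emph{flooding}: to send, a process tags its message with the destination and a local sequence number and forwards it to all $H_n$-neighbours, and every process relays each distinct message it ever receives exactly once (deduplication in the unbounded local state). Correctness is Menger's theorem: removing the at most $t$ crashed vertices from the $(t+1)$-connected graph $H_n$ leaves the correct vertices connected, so every message from a correct sender reaches every correct receiver after finitely many relays, and the sequence numbers restore FIFO delivery. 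This yields a $t$-resilient algorithm $\mathcal{A}''$ solving $\Pi$ in which process $i$ sends only along its $\le t+2$ incident edges of $H_n$ (messages still unbounded).

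\emph{Phase 3 (sparse network $\to$ bounded registers), accounting, and the main obstacle.} To simulate $\mathcal{A}''$ in shared memory, I would partition $R_i$ into an $O(1)$-bit field for each directed edge $(i\to j)$ out of $i$ and an $O(1)$-bit acknowledgement field for each directed edge $(j\to i)$ into $i$; as $\deg_{H_n}(i)\le t+2$, this is $O(t)$ bits, in fact at most $c(t+1)$ for a universal constant $c$. For each directed edge $(i\to j)$ run the alternating bit protocol~\cite{bartlett1969note,lynch1968computer} over the ``channel'' given by the values $i$ writes into the $(i\to j)$ field and the values $j$ reads from it: such a channel may drop a value (overwrite before read) or return one twice (repeated read), but never reorders, since every read linearises after the write it observes. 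The protocol transmits a self-delimited binary encoding of each $\mathcal{A}''$-message one bit per frame, a frame carrying a data bit, an end-of-message control bit and an alternating sequence bit, with $i$ advancing only once the matching sequence bit appears in the acknowledgement field of $R_j$; each process runs these sender and receiver automata for all its incident edges concurrently, every write of $R_i$ simply republishing all current field values. Standard alternating-bit reasoning gives exactly-once in-order delivery between correct processes, so the shared-memory algorithm $\mathcal{B}$ faithfully simulates $\mathcal{A}''$ and solves $\Pi$ with registers of at most $c(t+1)$ bits. I expect the real difficulty to be exactly this uniform accounting: one must check that no phase smuggles in a register size depending on $n$ or on the message length --- Phase~2 is what avoids an $\Omega(n)$-degree (hence $\Omega(n)$-bit) blow-up, the per-channel alternating bit protocol in Phase~3 is what keeps unbounded message sizes out of the registers, and one must verify that the composition ``flooding then alternating bit'' re-establishes precisely the reliable FIFO point-to-point semantics assumed by the ABD simulation of Phase~1.
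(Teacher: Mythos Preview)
Your proposal is correct and follows essentially the same three-phase reduction as the paper: ABD simulation to message passing, restriction to a sparse $(t{+}1)$-connected network with $O(t)$ degree, and per-link alternating bit protocol. The only differences are cosmetic: the paper uses a directed ``$t$-augmented ring'' (out- and in-degree exactly $t{+}1$, giving the explicit constant $c=3$) rather than your undirected circulant, and it self-delimits messages by bit-stuffing ($0$-insertion with a terminal $1$) rather than by a dedicated end-of-message control bit; your version is, if anything, more careful than the paper about deduplication and FIFO restoration in the flooding step.
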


\begin{proof}
Let $\Pi$ be a task solvable by $n$ processes in the $t$-resilient shared-memory model with unbounded-size registers, and let $\m{A}$ be an algorithm solving $\Pi$. The construction of an algorithm $\m{B}$ solving $\Pi$ with registers of size $c(t+1)$ bits proceeds in three phases, essentially combining known results from the literature. 

First, since $t<\frac{n}{2}$, it is known that there is an $n$-processes \emph{message-passing} algorithm $\m{A}'$ solving $\Pi$ when at most $t$ processes may crash. Specifically, $\m{A}'$ can be obtained by the message-passing simulation of shared-memory systems (see~\cite{AttiyaBD95}).
In the latter, in addition to the value being written or read, each message carries some ``control bits''. 

Second, the $t$-resilient message-passing model can be simulated by a network with fewer communication links between processes,  as long as the network is $(t+1)$-connected, i.e., removing at most $t$ nodes of the network keeps it connected. In particular, the message-passing model can be simulated in the network displayed in Figure~\ref{fig:t-ring}, which we call \emph{$t$-augmented ring}. In the $t$-augmented ring, nodes $1,\dots,n$ form a directed cycle (displayed in blue on Figure~\ref{fig:t-ring}). Every node $i\in [n]$ has $t$-additional  out-neighbors, nodes $(i+1) \bmod n+1, \ldots, (i+t) \bmod n+1$, and thus $t$ additional in-neighbors as well, nodes  $(i-3) \bmod n+1, \ldots, (i-1 - (t+1)) \bmod n+1$ (the corresponding links are displayed as dashed arrows on Figure~\ref{fig:t-ring}).

For every $i\in [n]$, when sending (or receiving) a message~$m$ to (from) some process~$j\neq i$,  Process~$i$ sends  $m$ directly to $j$ if there is a link $(i,j)$, otherwise it sends it to all its $t+1$ successors in the $t$-augmented ring. From these latter nodes, the message $m$ is forwarded using the same mechanism until it reaches its destination. As at most $t$ processes may fail, any message sent by a non-faulty process is eventually received by its destination whenever the latter is not faulty. 

\begin{figure}[tb]\centering
\begin{tikzpicture}
\usetikzlibrary {graphs.standard}  
 \graph [->,nodes={draw, thick,  circle}, edges={thick, blue}, grow right=2cm] {
  1 -> 2 -> 3 -> 4 -> 5 -> 6 -> 7;
  7 -> [edge={bend right}] 1;
  1 ->[edge={bend right, dashed, gray}] {3,4};
  2 ->[edge={bend left, dashed, gray}]  {4,5};
  3 ->[edge={bend right, dashed, gray}] {5,6};
  4 ->[edge={bend right, dashed, gray}]  {6,7};
  5 ->[edge={bend right, dashed, gray}]  {7,1};
  6 ->[edge={bend right, dashed, gray}]  {1,2};
  7 ->[edge={bend right, dashed, gray}]  {2,3};
};
\end{tikzpicture}
\caption{\sl The $2$-augmented $7$-node ring}
\label{fig:t-ring}
\end{figure}
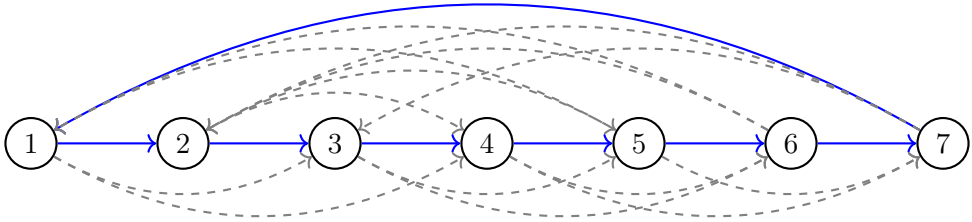

Finally,  we rely on the  \emph{alternating bit} protocol (see~\cite{bartlett1969note,lynch1968computer}), to simulate the $n$-node $t$-augmented ring by a shared memory, with registers of size $3(t+1)$ bits.
For that, we implement message passing communication on channel from $i$ to $j$ by a 2~bits register $R_i$ written by $p_i$ and read by $p_j$ and a 1~bit register $A_j$ (initialized to 0) written by $p_j$ and read by $p_i$.
First of all, message $m$ to be sent from $p_i$ to $p_j$ is encoded by inserting  $0$ between each bit and adding a $1$ at the end, giving a message $m'$. In this way, $p_j$ will be enable to determine the end of the message and get $m$. 

Following the alternating bit protocol, $p_i$ maintains a local bit $a_i$ (initialized to 0).  $m'$ will be transmit bit by bit, one after the other. To send a bit $b$ of $m'$, $p_i$ reads register $A_j$ until it reads value $a_i$ then it writes $(b,a_i)$ in $R_i$ and flip $a_i$ (hence it sends the next bit only when register $A_j$ has flipped). To receive a bit $p_j$ reads $R_i$ until it reads $(b,a)$ with $a=A_j$ for some $b$ then $p_j$ flips $A_j$ (acknowledging the received bit).  Hence the sequence of bits read by $p_j$ is $m'$ that can easily be decoded in $m$.

For each of its $t+1$ output links  $p_i$ writes in a 2~bits register and for each of its input links it writes a 1~bit register. Regrouping this in a single register we get a $3(t+1)$ bits register.
\end{proof}


\section{Universality of the Iterated Immediate Snapshot (IIS) Model}
\label{sec:iterated}

This section is entirely dedicated to the proof of Theorem~\ref{theo:iterated-immediate-snapshot}. It directly follows from Propositions~\ref{prop:IC2IIS-1-bit} and~\ref{prop:simIS_in_IC} given below. 

\paragraph{Preliminaries.} 

Recall that, in the IC model, the instruction $M_r.\mathsf{collect}()$ returns an $n$-dimensional vector $\mathbf{v}$ in which the value $\mathbf{v}[i]$ of each entry $i\in [n]$ is either $\bot$, or a value from some domain $D$, where $\bot \notin D$. We define a partial order $\subset$ on such vectors as follows. For two vectors $\mathbf{v}$ and $\mathbf{v}'$, we set $\mathbf{v}\subset \mathbf{v}'$ if and only if the following three conditions hold:
\begin{itemize}
\item  For every $i\in \{1,\dots,n\}$ : $\mathbf{v}'[i] = \bot \implies \mathbf{v}[i] = \mathbf{v}'[i]$ 
\item For every $i\in \{1,\dots,n\}$ :  $\mathbf{v}[i] \neq \bot \implies \mathbf{v}'[i] = \mathbf{v}[i]$  
\item There exists $i\in \{1,\dots,n\}$ such that $\mathbf{v}[i] = \bot$ and $\mathbf{v}'[i] \neq \bot$.
\end{itemize}
For every $r\geq 1$, the vectors  $\mathbf{v}_1,\ldots,\mathbf{v}_n$ returned by $\mathsf{snapshot}()$ or $\mathbf{collect}()$ operations performed in memory $M_r$ by all processes $1,\ldots,n$ have a specific structure. Specifically, for every $i\in \{1,\dots,n\}$, let $x_i$ be the value written in memory~$M_r$ by~$p_i$. In case of $\mathsf{snapshot}()$, we have 
\begin{itemize}
\item Validity: For every $i,j\in \{1,\dots,n\}$, $\mathbf{v}_i[j] \in \{x_j, \bot\}$
\item Self-containment: For every $i\in \{1,\dots,n\}$, $\mathbf{v}_i[i] \neq \bot$
\item Inclusion: For every $i,j\in \{1,\dots,n\}$, $\mathbf{v}_i \subseteq \mathbf{v}_j$ or $\mathbf{v}_j \subseteq \mathbf{v}_i$
\end{itemize}
In case of $\mathsf{collect}()$, the validity and self-containment properties holds, but not necessarily the Inclusion property, which is replace by the weaker 
\begin{itemize}
\item Write-order consistency: For every $i,j\in \{1,\dots,n\}$, if the execution of $M_r[i].\mathsf{write}(x_i)$ of~$p_i$ precedes the execution of $M_r[j].\mathsf{write}(x_j)$ of~$p_j$ then $\mathbf{v}_j[i] = x_i$. 
\end{itemize}

\subsection{From IC with Unbounded Registers to IS with $1$-Bit Registers}
\label{sec:1bitIS_to_IC}

Without loss of generality, any algorithm \m{A} solving some task $\Pi$ in the iterated model, whether it be using collect, or snapshot,   can be assumed to be a full information algorithm following the generic pattern of  algorithm~\ref{alg:genericIC}. This algorithm is for the iterated collect model, and the one for the iterated snapshot model is obtained by simply replacing Instruction~\ref{instruc:snap2collect} by $$myview \gets M_r.\mathsf{snapshot}().$$ Only the number $k$ of iterations, and the function $\mathsf{decide}$ depend on the task~$\Pi$ to be solved. The following result is our first step for establishing Theorem~\ref{theo:iterated-immediate-snapshot}:

\begin{algorithm}[tb]
  \caption{Full-information protocol in the iterated collect (IC) model: Code of process~${i\in [n]}$} 
  \label{alg:genericIS}
  \label{alg:genericIC}
  \begin{algorithmic}[1]
  \Statex\textbf{shared memory:} 
  \Statex \hspace{\algorithmicindent} for every $r\geq 1$, $M_r$ is an array of $n$ SWMR registers, each initialized to~$\bot$
   \Begin 
   \State $myview[i] \gets $ input of process $i$
   \For{every $j\neq i$} $myview[j] \gets\bot$  \EndFor
    \For{$r=1,\ldots,k$} \Comment{$k$ is a fixed integer, which depends on the task to be solved}
    \State $M_r[i].\mathsf{write}(myview)$
    \State $myview \gets M_r.\mathsf{collect}()$ \label{instruc:snap2collect}
    \EndFor
    \State \textsl{return}($myview$)
    \End
  \end{algorithmic}
\end{algorithm}

\begin{proposition}\label{prop:IC2IIS-1-bit}
  For every task~$\Pi$, if $\Pi$ solvable in the IC model then $\Pi$ is solvable in IIS model with 1-bit registers.
\end{proposition}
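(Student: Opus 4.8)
The plan is to simulate, level by level, the full-information IC protocol of Algorithm~\ref{alg:genericIC} (which uses unbounded registers) by an IIS protocol that writes only a single bit into each register $R_{r,i}$. The key difficulty is that a process' view after $r$ rounds can be an arbitrarily large object (a tree of depth $r$), yet at each level we are only allowed to deposit one bit. The idea is to spend many IIS levels to transmit one IC level: we encode a large value by the \emph{level at which a process writes a $1$}, exactly as hinted in Section~\ref{sec:sketch}. Concretely, I would fix, for each IC round $r$, an injective encoding of the (finitely many) possible views into an initial segment of $\mathbb N$, and devote a block of consecutive IIS levels to round $r$; within that block, process $i$ writes $1$ at the IIS level whose index encodes its current view, and $0$ at all other levels of the block.

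First I would make precise the bookkeeping. Since each IC run of Algorithm~\ref{alg:genericIC} has a fixed number $k$ of rounds, and the input domain is finite, the set of views that can ever appear at round $r$ is finite; let $N_r$ bound its size, and let $\mathrm{code}_r$ be a bijection from that set to $\{1,\dots,N_r\}$. The IIS levels are partitioned into blocks $B_1,\dots,B_k$, where $B_r$ consists of $N_r$ consecutive levels. In block $B_r$, each process $i$ holding view $v$ writes $1$ in the single level $\ell$ of $B_r$ with $\ell = \mathrm{code}_r(v)$ (relative to the start of $B_r$), and $0$ in all other $N_r-1$ levels of $B_r$. The point is that an immediate snapshot at a level where process $j$ writes $1$ reveals $j$'s entire round-$r$ view; a snapshot that returns $0$ (or $\bot$) for $j$ reveals nothing yet, but a later level of the same block will reveal it. So each process, by collecting the IIS outputs across all of $B_r$, reconstructs the set of $(j,v_j)$ pairs it would have obtained from a single $M_r.\mathsf{collect}()$ — possibly more, but that is fine, as explained next.

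The main obstacle, and where I would spend the most care, is showing that the \emph{ordering/consistency} guarantees survive the simulation, i.e. that what each process reconstructs is a legal output of some IC execution — in particular that it satisfies self-containment, validity, and write-order consistency as listed in the Preliminaries. Self-containment and validity are immediate from the construction (a process always sees its own $1$, and it only ever sees genuine views). The delicate property is write-order consistency: if in the simulated IC round the write of $p_i$ precedes that of $p_j$, then $p_j$ must see $p_i$'s value. I would handle this by arranging that a process does not "start" block $B_r$ (i.e. does not write its round-$r$ $1$) until it has finished scanning all of $B_{r-1}$ and has thereby committed to its round-$(r-1)$ view; then the asynchronous interleaving of the IIS levels inside $B_r$ induces, via the immediate-snapshot inclusion property at the level where the late writer deposits its $1$, exactly a collect-consistent pattern. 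A process that appears "early" (small code, hence small level in $B_r$) is seen by everyone who writes later in $B_r$; a process that is slow is simply not yet seen, which is allowed for collect. Finally, the reconstructed multiset of views may be a strict superset of a single collect's output, so I would note that a process can always down-project it to a valid collect output (e.g. take the prefix up to the first level in $B_r$ where its own $1$ was written, which is self-containing and satisfies the required order properties), feed that into the IC algorithm's $\mathsf{decide}$/next-round logic, and proceed. Since the IIS model's own wait-freedom and termination are inherited (each block is finite, there are $k$ blocks), the resulting protocol is wait-free and solves $\Pi$ using $1$-bit registers at every level, which is the claim.
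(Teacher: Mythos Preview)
Your high-level architecture (spend a block of IIS levels per IC round, write $1$ at the level that encodes your information, reconstruct the collect from the accumulated snapshots) matches the paper, but the \emph{granularity of the encoding} is wrong, and this breaks the correctness argument.

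You encode each process's individual \emph{view} by a level, so in a block each process writes $1$ at exactly one level, and different processes with different views write their $1$'s at \emph{different} levels. Consider $n=2$ with $\mathrm{code}_r(v_1)=1$ and $\mathrm{code}_r(v_2)=2$, so $p_1$ writes $1$ at level~$1$ and $p_2$ writes $1$ at level~$2$. Take the IIS schedule in which $p_2$ is solo at level~$1$ (so $p_2$ misses $p_1$'s~$1$) and $p_1$ is solo at level~$2$ (so $p_1$ misses $p_2$'s~$1$); this schedule is legal in the iterated model. At the end of the block, $p_1$ has reconstructed $\{(1,v_1)\}$ and $p_2$ has reconstructed $\{(2,v_2)\}$. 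This pair of views is \emph{not} realisable by any write/collect round, because in the IC model whichever process writes first is necessarily seen by the other. Your ``down-projection to the prefix up to my own~$1$'' does not help here (it yields the same two singletons), and your informal claim that ``a process with small code is seen by everyone who writes later'' conflates the static encoding order with the dynamic IIS schedule.

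The paper avoids exactly this pitfall by iterating not over views but over \emph{configurations}, i.e.\ $n$-tuples of views: at level $\rho$ (indexed by configuration $c_\rho$), process~$i$ writes~$1$ iff the $i$th component of $c_\rho$ equals its own current view $W_i^{r-1}$. Hence each process writes~$1$ at many levels, and---crucially---there is one level $\tilde\rho$ (the actual configuration $(W_1^{r-1},\dots,W_n^{r-1})$) at which \emph{every} process writes~$1$. The immediate snapshot at $\tilde\rho$ then gives each process a self-containing, nested set of round-$(r-1)$ views, and the paper's Lemma~\ref{lem:validcollect} shows that any family of supersets of such nested snapshots is the family of outputs of some write/collect round. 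Your scheme lacks this rendezvous level, which is the missing idea.
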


The rest of the sub-section is dedicated to the proof of this proposition. Let \m{A}  be an algorithm solving $\Pi$ in the IC model, written in the generic form of algorithm~\ref{alg:genericIC}. We show that \m{A} can be implemented in the  IIS model with 1-bit registers. Given an execution of \m{A},  let us denote by $V_i^r$ the ``view'' of process~$i$ at the end of round~$r$, i.e., the content of its local variable $myview$ at the end of round~$r$. For $r\geq 1$, we define the \emph{round-$r$ configuration} 
$
(V_1^r,\ldots,V_n^r).
$
An \emph{initial configuration}  $ (V_1^0,\ldots,V_n^0)$ is an $n$-uplet formed by input values to the processes. Let $\m{C}^0$ be the set of initial configurations, and let $\m{C}^r$ be the set of  configurations after $r$ rounds, taken over all possible executions of $\m{A}$, and all possible inputs assignments. That is, for each $c \in \m{C}^r$, there is an input configuration~$I$, and an execution $\alpha$ of \m{A} starting from $I$ in which, for each process~$i$, the value of $V_i$ at the end of round $r$ is $c[i]$. Note that, as $\Pi$ has finitely many inputs, each set $\m{C}^r $ for $r \geq 0$ is finite. Let 
\[
\m{C} = \bigcup_{0 \leq r \leq k} \m{C}^r
\]
where $k$ is the number of write-collect iteration of~$\m{A}$ in its generic form. We order the configurations in $\m{C}$ in a \emph{round-preserving} manner. That is, for every $r\geq 0$, any configuration in $\m{C}^r$ is ordered before any configuration in $\m{C}^{r+1}$. That is,  we enumerate the configurations in $\m{C}$ as 
\begin{equation}\label{eq:lesconfigordonnees}
\m{C} = (c_1,\ldots,c_N)
\end{equation}
where $N=|\m{C}|$, and, for every $i,j\in \{1,\dots,N\}$, if  $c_i \in \m{C}^r$ and $c_j \in \m{C}^{r'}$ with  $r<r'$ then $i < j$. 

\begin{algorithm}[tb]
  \caption{Simulation of Algorithm~\ref{alg:genericIS} in the IS model with 1-bit registers. Code for process~${i\in [n]}$} 
  \label{alg:simulateIC}
  \begin{algorithmic}[1]
   \Statex\label{itsim:var}\textbf{shared memory:} 
   \Statex \hspace{\algorithmicindent} for every $r\geq 1$, $M_r$ is an array of $n$ SWMR 1-bit registers, each initialized to~$0$
  
    \Begin\label{itsim:begin}
    \State $W_i^0 \gets$ input of process $i$ \label{itsim:input}
    \State $\rho \gets 0$ \Comment{$\rho$ is an integer, indexing the iterations}
    \For{$r=1,\ldots,k$}\label{itsim:extloop}  \Comment{$k$ is the same as in Algorithm~\ref{alg:genericIS} }
            	\State $W_i^r \gets \varnothing$ \label{itsim:Wempty} 
     		\For{$\rho = \sum_{0 \leq \ell < r-1} |\m{C}^\ell|+1,\ldots,\sum_{0 \leq \ell < r} |\m{C}^\ell|$}\label{itsim:intloop} 
            								\Comment{$|\m{C}^{r-1}|$ iterations of $1$-bit snapshots}
       			\If{$c_\rho[i] = W_i^{r-1}$}\Comment{Configurations are ordered as in Eq.~\eqref{eq:lesconfigordonnees}}
       				\State $M_\rho[i].\mathsf{write}(1)$  \label{itsim:write}  
       			\Else  
       				\State $M_\rho[i].\mathsf{write}(0)$ 
      			 \EndIf
       			\State $s \gets M_\rho.\mathsf{snapshot}()$
       			\State  $W_i^r \gets W_i^r \cup \{c_\rho[j] : s[j] = 1\}$ \label{itsim:snap} 
       		\EndFor
    \EndFor
    \State \textsl{return}($W_i^N$) \label{itsim:decide} 
    		\Comment{$N=\sum_{0 \leq \ell \leq k} |\m{C}^\ell|=$ total number of configurations in Eq.~\eqref{eq:lesconfigordonnees}}
    \End\label{itsim:end}
  \end{algorithmic}
\end{algorithm}

\paragraph{The simulation algorithm.} 

Algorithm~\ref{alg:simulateIC} simulates an execution of $\m{A}$ in the $1$-bit IS model.
Each process~$i$ starts with an input for $\Pi$ which is stored in the local variable $W_i^0$ (line~\ref{itsim:input}). To limit confusion between the simulated execution of $\m{A}$, and the actual execution of Algorithm~\ref{alg:simulateIC}, we use the term \emph{round} when we talk about the simulated execution of~$\m{A}$, and \emph{iteration} for the actual execution of Algorithm~\ref{alg:simulateIC}. A round $r$ in the IC model consists for each process in writing its history in the register $M_r[i]$ and then collecting the values stored in $M_r$ to get a view $V_i^r$ (see algorithm~\ref{alg:genericIC}). The aim of the simulation is thus to provide for each process $i$ and each round $r$ a view $W_i^r$ such that $$((W_1^1,\ldots,W_n^1), (W_1^2,\ldots,W_n^2), \ldots, (W_1^r,\ldots,W_n^r),\ldots$$ could occur in some execution of the generic algorithm~\ref{alg:genericIC}. 
The simulation of a round of $\m{A}$ spans $|\m{C}^{r-1}|$ $\mathsf{write}$-$\mathsf{collect}$ iterations (line~\ref{itsim:intloop}) whose numbers coincide with the indexes of  the configurations in $\m{C}^{r-1}$. 

Recall that a configuration $c \in \m{C}^{r-1}$ is a $n$-uplet $(V_1^{r-1},\ldots,V_n^{r-1})$ of views obtained by the processes at the end of round $r$ in~$\m{A}$. In the simulation of round $r$, the goal for each process~$i\in[n]$ is to collect a (sub)set of the round-$(r-1)$ views, which will constitute its view at the end of round~$r$.
To that end, in each iteration $\rho$ such that, in the corresponding configuration~$c_\rho$, the $i$th entry is equal to the simulated view at round $r-1$ of process $i$, as stored in the local variable $W_i^{r-1}$, process~$i$ writes~$1$ in its register $M_\rho[i]$ (line~\ref{itsim:write}). A process~$j$ observing~$1$ in the $i$th entry of its snapshot of $M_\rho$ adds $W_i^{r-1}$ in its view at round $r$ (line~\ref{itsim:snap}). $W_i^{r-1}$ is deduced from the number $\rho$ of the current iteration as every process knows the indexes of the configurations in $\m{C}^r$. Hence, $W_i^r$ contains only views of round~$r-1$.

Assuming that the simulation of the first $r-1$ rounds is correct, let us consider the index~$\rho$ of the configuration that corresponds to the simulated views of round~$r-1$. A crucial observation is that, in iteration~$\rho$, each process~$i$ writes~$1$ to its register $M_\rho[i]$, and thus it does get a \emph{snapshot} of the round-$(r-1)$ views. As a consequence, its final view of round~$r$, which is the value of $W_i^r$ when the internal for-loop ends (lines~\ref{itsim:intloop}-\ref{itsim:snap}), contains this snapshot. As we shall show in Lemma~\ref{lem:validcollect}, it will follow that $(W_1^r, \ldots, W_n^r)$  may have indeed been obtained by the processes after writing and collecting their round-$(r-1)$ views, as in round~$r$ of algorithm~\m{A}.  This allows us to inductively establish the correctness of the simulation (cf.~Lemma~\ref{lem:itsim:config}). 

\paragraph{Proof of correctness.} 

Let us fix any execution $e$ of the simulation algorithm~\ref{alg:simulateIC}. Let 
\[
D_0 = (W_1^0,\ldots,W_n^0)
\]
be the inputs of the processes in~$e$, and, for any $r\in\{1,\dots,k\}$, let 
\[
D^r = (w_1^r,\ldots,w_n^r)
\] 
where $w_i^r$ is the value of the local variable $W_i^r$ at the end of simulation of round $r$, i.e, immediately after the execution of the internal for-loop of lines~\ref{itsim:intloop}-\ref{itsim:snap} corresponding of the simulation of round~$r$. We show first that the sequence $D^0,\ldots,D^k$ is a valid sequence of configurations in a possible execution of the  full-information algorithm~\m{A}.

\begin{lemma}
  \label{lem:itsim:config}
  There exists an execution $\tilde{e}$ of algorithm \m{A} in which the sequence of configurations is $(D^0,\ldots,D^k )$. 
\end{lemma}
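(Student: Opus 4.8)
The plan is to prove Lemma~\ref{lem:itsim:config} by induction on $r\in\{0,\dots,k\}$, building $\tilde e$ one round at a time. The induction hypothesis is: there is a prefix of an execution of the full-information algorithm~\m{A} (Algorithm~\ref{alg:genericIC}) that runs through the end of round~$r$ and whose sequence of round configurations is exactly $(D^0,\dots,D^r)$; in particular $D^r\in\m{C}^r$. The base case $r=0$ is immediate, since $D^0=(W_1^0,\dots,W_n^0)$ is the tuple of inputs, i.e.\ an initial configuration. For the inductive step it suffices to establish the single-round statement (which I expect is the content of the forthcoming Lemma~\ref{lem:validcollect}): setting $x_j:=w_j^{r-1}$, the family $(w_1^r,\dots,w_n^r)$, read as a vector whose entry at each position $j$ is $x_j$ or $\bot$, is a possible outcome of one $\mathsf{write}$/$\mathsf{collect}$ round of $\m{A}$ started from $D^{r-1}$ --- equivalently, it satisfies Validity, Self-containment and Write-order consistency from the Preliminaries. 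Granting this, I append to the prefix furnished by the induction hypothesis a round-$r$ interleaving realizing that collect family, obtaining a prefix with configuration sequence $(D^0,\dots,D^r)$, which completes the induction. (A process that crashes in the fixed execution~$e$ is matched by crashing the corresponding process in $\tilde e$ at the same round.)

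To prove the single-round statement I would use three facts about the block of iterations $\{\sum_{\ell<r-1}|\m{C}^\ell|+1,\dots,\sum_{\ell<r}|\m{C}^\ell|\}$ that simulates round~$r$ (the inner for-loop, lines~\ref{itsim:intloop}--\ref{itsim:snap}). First, by the induction hypothesis $D^{r-1}\in\m{C}^{r-1}$, and by the round-preserving enumeration in~\eqref{eq:lesconfigordonnees} the indices of the configurations of $\m{C}^{r-1}$ are exactly the indices in that block; hence there is a unique iteration $\rho^\star$ in the block with $c_{\rho^\star}=D^{r-1}$. Second, throughout the block $W_i^{r-1}$ still holds its final value $w_i^{r-1}$, so in iteration $\rho^\star$ every process~$i$ has $c_{\rho^\star}[i]=w_i^{r-1}=W_i^{r-1}$ and therefore writes~$1$ to $M_{\rho^\star}[i]$ (line~\ref{itsim:write}); consequently the immediate snapshots of iteration~$\rho^\star$ are taken over registers all eventually equal to~$1$, so by self-containment and inclusion of immediate snapshots the sets $\pi_i:=\{j:s_i[j]=1\}$ (with $s_i$ the snapshot of $p_i$ in iteration $\rho^\star$) satisfy $i\in\pi_i$ and form a chain under $\subseteq$, and $p_i$ adds $\{x_j:j\in\pi_i\}$ to $W_i^r$. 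Third, in any iteration $\rho$ of the block, $p_i$ adds $c_\rho[j]$ to $W_i^r$ only if $s_i[j]=1$, i.e.\ only if $p_j$ wrote~$1$ in iteration~$\rho$, which forces $c_\rho[j]=W_j^{r-1}=w_j^{r-1}=x_j$; so every element ever added to $W_i^r$ is one of the $x_j$'s. Reading $w_i^r$ as a vector, its non-$\bot$ positions thus form a set $A_i\supseteq\pi_i$, with the entry at each position $j\in A_i$ equal to $x_j$.

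Validity and Self-containment of $(w_1^r,\dots,w_n^r)$ follow at once ($i\in\pi_i\subseteq A_i$, and all entries are among the $x_j$). For Write-order consistency I would linearize the writes $\{W_1,\dots,W_n\}$ by any linear extension of the relation ``$j$ before $j'$ whenever $j'\notin A_j$''. This relation is acyclic: along a would-be cycle $i_1,\dots,i_m,i_1$, since $i_{k+1}\in\pi_{i_{k+1}}$ while $i_{k+1}\notin A_{i_k}\supseteq\pi_{i_k}$ we get $\pi_{i_{k+1}}\not\subseteq\pi_{i_k}$, hence by the chain property $\pi_{i_k}\subsetneq\pi_{i_{k+1}}$, giving a strictly increasing chain around a cycle --- impossible. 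Any such linear extension then witnesses Write-order consistency, and a standard scheduling argument converts it into an actual asynchronous round-$r$ interleaving producing exactly the vectors $(w_1^r,\dots,w_n^r)$, which is what the inductive step requires.

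The step I expect to demand the most care is precisely this last one: turning the triple Validity/Self-containment/Write-order consistency into a genuine $\mathsf{write}$/$\mathsf{collect}$ execution of $\m{A}$, i.e.\ showing these necessary properties are also sufficient for realizability (a standard but not entirely trivial linearizability-style argument). A lesser, bookkeeping obstacle is reconciling the set-of-views bookkeeping of Algorithm~\ref{alg:simulateIC} with the vector-of-views bookkeeping of an IC round, including the possibility that two processes share the same round-$(r-1)$ view; this can only place a legal value $x_j$ at an extra position~$j$, which leaves Validity and Self-containment intact, so it is harmless but must be spelled out.
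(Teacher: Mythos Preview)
Your proposal is correct and follows the same inductive skeleton as the paper: induct on~$r$, identify the iteration~$\rho^\star$ with $c_{\rho^\star}=D^{r-1}$ where every process writes~$1$, use the inclusion property of the snapshots there, and observe that every element ever added to $W_i^r$ is some~$w_j^{r-1}$. The one genuine difference is in the realizability step (the content of Lemma~\ref{lem:validcollect}). The paper takes the chain $S_1\subset\cdots\subset S_m$ of snapshots from iteration~$\rho^\star$ as the \emph{hypothesis} of Lemma~\ref{lem:validcollect} and builds the $\mathsf{write}/\mathsf{collect}$ schedule directly from it, grouping writes into blocks $B_\ell=\{j:S_\ell[j]\neq\bot\}\setminus B_{\ell-1}$ and placing each read of $R_\ell$ by $p_i$ either right after the block containing the largest $S\subseteq X_i$ or right after $p_\ell$'s write. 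You instead extract from the same chain the acyclicity of the relation ``$i$ before $j$ iff $j\notin A_i$'', take any linear extension, and then schedule reads. Both arguments are valid and essentially equivalent; yours is a touch more abstract (it isolates exactly the order-theoretic obstruction), while the paper's is more explicit about the resulting schedule. Your closing remark about the set-versus-vector bookkeeping when two processes share the same round-$(r-1)$ view is apt and correctly resolved: duplicates can only enlarge the~$A_i$, which removes edges from your ``before'' relation and hence preserves acyclicity.
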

\begin{proof}
 The proof is by induction on the (simulated) round numbers $r$. For the base case, as $D^0$ consists in the inputs of the $n$ processes it is a valid initial configuration of \m{A}. Let $r, 0 \leq r < R$ and let us assume that there is a partial $r$-round execution $\tilde{e}^r$ of \m{A} in which the sequence of global configurations is $(D_0,\ldots,D_r)$.
  Let us consider  $$D^{r+1} = (w_1^{r+1},\ldots,w_n^{r+1})$$ the configuration produced after the simulation of round $r$. Let us concentrate on some process~$i$. Suppose that $V \in w_i^r$. By the code (line~\ref{itsim:snap}), there exists an iteration $\rho$ and a process $j$ such that:
  \begin{enumerate}
  \item $\sum_{0 \leq \ell < r} |\m{C}^\ell|+1 \leq \rho \leq \sum_{0 \leq \ell \leq r} |\m{C}^\ell|$
  \item $V = c_\rho[j]$ and in iteration $\rho$, process $i$ obtains a snapshot $s_i^\rho$ with $s_i^\rho[j] = 1$. 
  \end{enumerate}
By item 2, process $j$ writes $1$ to the registers $M_\rho[j]$. Therefore, its simulated view $w_j^r$ of round $r$ is the $j$th component of the configuration $c_\rho$ of \m{A}  (line~\ref{itsim:write}). 
 Recall that $D^r = (w_1^r,\ldots,w_n^r)$ where $w_i^r$ is the value of $W_i^r$ at the end of the simulation of round $r$. By the induction hypothesis, $D^r = (V_1^r,\ldots,V_n^r)$ where $V_k^r$ is the view of process $k$ at the end of round $r$ in $\tilde{e}^r$. Hence, $V = V_j^r$, and therefore every view $V \in w_i^{r+1}$  is a view of some process at the end of the simulated execution~$\tilde{e}^r$.

 Since $D^r$ is a valid configuration at the end of round $r$ in  some execution of \m{A}, there exist $\tilde{\rho}$ such that $D^r = c_{\tilde{\rho}}$. Moreover, as $c_{\tilde{\rho}} \in \m{C}^r$ and the indexing of configurations is round-preserving, $$\sum_{0 \leq \ell < r} |\m{C}|^\ell+1 \leq \tilde{\rho} \leq \sum_{0 \leq \ell \leq r} |\m{C}|^\ell.$$ By item 1 above,  $\tilde{\rho}$ is thus  in the interval of iterations used for the simulation of round $r$.
 By the code (line~\ref{itsim:write}), the input of each $\mathsf{write}$ in this iteration is $1$. Hence, each process $i$ adds  a snapshot of the rounds $r$ views $(V_1^r,\ldots,V_n^r)$ to its set of views $W_i^{r+1}$. More precisely, in iteration $\tilde{\rho}$, let $s_1 \subset \ldots \subset s_k$ be the snapshot of $M_{\tilde{\rho}}$ obtained by the processes. Each $s_\ell$ is a $n$-component binary vector, and if process $i$ $\mathsf{snaphot}$ returns  $s_\ell$, then $s_\ell[i] = 1$.  Moreover, process $i$ then adds the set of views $$S_\ell = \{V_j^r : 1 \leq j \leq n \land s_\ell[j] =1\}$$ to $W_i^{r+1}$ and thus also to $w_i^{r+1}$. To summarize, interpreting $$w_1^{r+1},\ldots,w_n^{r+1},S_1,\ldots,S_\ell$$ as vectors, we have for each $i, 1 \leq i \leq n$ there exists $S \in \{S_1,\ldots,S_\ell\}$ such that $S \subseteq w_i^{r+1}$ and $S[i] = V_i^{r} \neq \bot$. By Lemma~\ref{lem:validcollect} (stated and proved below), there is a one round execution of the IC model 
  in which  for each process $i$ the input of  its $\mathsf{write}$ is $V_i^r$  and the output of its $\mathsf{collect}$ is~$w_i^{r+1}$.
  Therefore there exists a $r+1$-round execution of $\m{A}$ that extends execution $\tilde{e}^{r}$ and the correctness of the simulation follows. 
\end{proof}

  Next technical lemma is the missing part of the proof of Lemma~\ref{lem:itsim:config}. We essentially show that if a collection of $n$ valid snapshots can be extracted from $n$-component vectors $W_1,\ldots,W_n$, then $W_1,\ldots,W_n$ may have been returned as outputs of the $\mathsf{collect}$ operations in a round of the IC model. 

\begin{lemma} \label{lem:validcollect}
 Let $x_1, \ldots, x_n$ be $n$ values different from~$\bot$, and let $X_1,\ldots,X_n$ be $n$-dimensional vectors such that, for every $i,j\in\{1,\dots,n\}$, $X_i[j] \in \{x_j,\bot\}$. Let $S_1 \subset \ldots \subset S_m$ be $n$-dimensional vectors such that, for every $i\in\{1,\dots,n\}$, there exists $S \in \{S_1,\ldots,S_m\}$ such that 
 \[
 S[i] \neq \bot \;\mbox{and}\; S \subseteq X_i.
\]
Then there exists an execution of one round of iterated collect in which, for every process~$i\in[n]$, the input of  its $\mathsf{write}$ is $x_i$, and the output of its $\mathsf{collect}$ operation is $X_i$. 
\end{lemma}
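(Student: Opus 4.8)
The plan is to construct an explicit execution of one round of the iterated collect model — that is, a linear order (interleaving) of the $n$ \textsf{write} operations and the reads inside the $n$ \textsf{collect} operations — that realizes the given vectors $X_1,\dots,X_n$ as collect outputs. The key structural input is the chain $S_1 \subset \dots \subset S_m$: since $\subset$ is the strict containment order, the nonempty "support levels" $\mathrm{supp}(S_1) \subsetneq \dots \subsetneq \mathrm{supp}(S_m)$ partition (a subset of) the processes into at most $m$ ordered blocks $B_1,\dots,B_m$, where $B_\ell = \mathrm{supp}(S_\ell)\setminus\mathrm{supp}(S_{\ell-1})$ (with $S_0$ the all-$\bot$ vector). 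Since each $S_\ell$ is self-contained on its support and the chain is increasing, this gives a natural "schedule order": first all processes in $B_1$ write, then all in $B_2$, and so on. Processes outside $\bigcup_\ell B_\ell = \mathrm{supp}(S_m)$ will be scheduled last (after every collect has finished), so that they are invisible to everyone — this is consistent because the hypothesis only constrains, for each $i$, \emph{one} witness $S$ with $S[i]\neq\bot$ and $S\subseteq X_i$, which forces $i\in\mathrm{supp}(S_m)$ anyway (take $S$ = that witness; $S[i]\neq\bot$ means $i\in\mathrm{supp}(S)\subseteq\mathrm{supp}(S_m)$). So in fact every process lies in some block.

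Next I would place each \textsf{collect} operation. For process $i$, let $\ell(i)$ be the index of the block containing $i$, and let $S=S_{\ell(i)}$ — wait, more carefully: the hypothesis gives a witness level $\lambda(i)$ with $S_{\lambda(i)}[i]\neq\bot$ and $S_{\lambda(i)}\subseteq X_i$. Because $S_{\lambda(i)}[i]\neq\bot$ we have $i\in\mathrm{supp}(S_{\lambda(i)})$, so $i$ is in block $B_\ell$ for some $\ell\le\lambda(i)$. The idea is: schedule process $i$'s collect to begin after all blocks $B_1,\dots,B_{\lambda(i)}$ have written, but to interleave its individual reads so that it reads $X_i$. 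Concretely, $i$'s collect must return $X_i$, and we know $S_{\lambda(i)}\subseteq X_i$, i.e. $X_i$ agrees with $S_{\lambda(i)}$ wherever $S_{\lambda(i)}$ is non-$\bot$ and $X_i$ may additionally see some processes in later blocks or read early-writers as $\bot$. Since a \textsf{collect} is just $n$ reads in arbitrary order, I can order $i$'s reads as follows: read all entries $j$ with $X_i[j]\neq\bot$ \emph{now} (right after block $B_{\lambda(i)}$'s writes, which is fine since validity plus the witness guarantee each such $j$ has already written — for $j\in\mathrm{supp}(S_{\lambda(i)})$ this is immediate; for $j$ with $X_i[j]=x_j\neq\bot$ but $j\notin\mathrm{supp}(S_{\lambda(i)})$ we slot $j$'s write just before this point, which is consistent with the block order as long as we break ties appropriately), and read all entries $j$ with $X_i[j]=\bot$ \emph{before} $j$ ever writes. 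The delicate bookkeeping is to show all these "read before / read after" constraints across all $n$ processes can be simultaneously satisfied by a single global linear order — this is where the block structure does the work, because the constraints are "monotone" in the block index.

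I would organize the global interleaving in $2m$ super-phases: for $\ell = 1,\dots,m$, first a \emph{write-phase} $\ell$ in which all processes of block $B_\ell$ perform their \textsf{write}, then a \emph{read-phase} $\ell$ in which every process $i$ with $\lambda(i)=\ell$ performs all the reads of its \textsf{collect} — reading $x_j$ for each $j$ that has already written (equivalently each $j\in B_1\cup\dots\cup B_\ell$ that is "visible", which I then need to match to exactly the set $\{j : X_i[j]\neq\bot\}$) and reading $\bot$ for each $j$ in a later block. The remaining verification is: (i) the set of already-written processes at read-phase $\lambda(i)$ is exactly $\{j : X_i[j]\neq\bot\}$ — here I may need to refine the phase structure, possibly splitting a block's writes so that a $j$ with $X_i[j]\neq\bot$ but $j\notin\mathrm{supp}(S_{\lambda(i)})$ gets scheduled early enough while still reading $\bot$ from the perspective of processes whose $X$ has $\bot$ there; this is exactly the point where I expect the main obstacle, since different processes $i$ may disagree about whether a given $j$ is visible, and I must check these disagreements are consistent with \emph{some} total order of writes (they are, because "$X_i[j]\neq\bot$" together with validity and the chain hypothesis induces a partial order on writes with no cycles — essentially a topological-sort argument). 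Once the order is fixed, (ii) checking each collect returns exactly $X_i$ is immediate from the construction, and (iii) checking it is a legal execution (each process writes before it collects, each operation atomic) follows since writes of $i$ are in block $B_{\ell(i)}$ with $\ell(i)\le\lambda(i)$, hence precede $i$'s reads. The main obstacle, to restate, is purely combinatorial: proving that the union over all $i$ of the "read-before-write" and "read-after-write" constraints extracted from the $X_i$'s and the chain $S_1\subset\dots\subset S_m$ is acyclic, so a consistent global linear order exists; the chain being \emph{totally} ordered under $\subset$ is exactly what rules out cycles.
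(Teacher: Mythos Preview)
Your overall strategy---order the writes in blocks $B_1,\dots,B_m$ induced by the chain $S_1\subset\dots\subset S_m$, then place each process~$i$'s reads relative to its witness level $\lambda(i)$---is exactly what the paper does. You also correctly identify the only delicate case: when $X_i[j]\neq\bot$ but $j\notin\mathrm{supp}(S_{\lambda(i)})$, i.e., $j$ lies in a later block.

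Where you diverge is in how you resolve that case. You propose to \emph{move $j$'s write earlier}, and then argue (without proof) that the resulting collection of ``read-before-write'' and ``read-after-write'' constraints is acyclic via a topological-sort argument. That claim is true, and the chain hypothesis is indeed what blocks cycles (e.g., $X_1[2]=\bot$ and $X_2[1]=\bot$ would force a cycle, but such $X$'s admit no common chain). However, you do not actually carry out the acyclicity verification, so as written there is a gap.

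The paper sidesteps the issue entirely with a simpler move: it keeps the write order \emph{fixed} in block order and instead \emph{delays the individual read}. Since a $\mathsf{collect}$ is just $n$ independent reads that may be interleaved arbitrarily with other steps, process~$i$'s read of $M[j]$ can be scheduled immediately after $j$'s write (in $j$'s own, later block), while all of $i$'s other reads sit right after block $\lambda(i)$. Reads never constrain other operations, so no cyclicity question arises; the only checks left are (a) each read returns the claimed value, which is immediate from its placement, and (b) $i$'s write precedes all of $i$'s reads, which follows because the witness $S$ with $S[i]\neq\bot$ forces $i$'s block index $\le\lambda(i)$. If you adopt this ``delay the read, don't move the write'' viewpoint, your proof closes with no further work.
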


\begin{proof}
  Let $M$ be an array of $n$ registers initialized to $[\bot,\ldots,\bot]$.  To prove the lemma, we show that there is an execution in which each process $i$ performs $\mathsf{write}(x_i)$ to $M[i]$ followed by a $\mathsf{read}$ on each component of $M[j]$ which returns $X_i[j]$. As  the specification of $\mathsf{collect}$ does not require any particular order for reading the registers, this proves that  $X_1,\ldots,X_n$ may have been returned by $\mathsf{collect}$ operations performed  by $n$ processes.
  
  We schedule the $\mathsf{write}$ operation in blocks. Let  $B_1$ consists in the set of processes $$\{j : 1 \leq j \leq n \land  S_1[j] \neq \bot \}.$$ For every $\ell\in\{2, \dots,k\}$, let   $$B_\ell=\{ j : 1 \leq j \leq n, j \notin B_{\ell-1} \land  s_\ell[j] \neq \bot\}.$$ Since $S_\ell = [x_1,\ldots,x_n]$, we have $B_1 \cup \ldots \cup B_\ell = \{1,\ldots,n\}$. We schedule first the $\mathsf{write}$ of the processes in $B_1$ in any order, followed by the  $\mathsf{write}$ of the processes in $B_2$, etc. 
  For process $i$ with output $X_i$, let $j$ such that $S_j \subseteq X_i$ and if $j\neq k$, $S_{j+1} \not\subseteq X_i$. For each $\ell$, $1 \leq \ell \leq n$, process $i$   $\mathsf{read}$ of register $M[\ell]$ occurs: 
  \begin{enumerate}
  \item  immediately after  the $j$th block write whenever $S_j[\ell] \neq \bot$ or $X_i[\ell] = \bot$, or 
  \item  immediately after the $\mathsf{write}$ by process $\ell$ to~$M[\ell]$ otherwise (i.e., whenever $S_j[\ell] = \bot$ and $X_i[\ell] \neq \bot$). Note that this $\mathsf{write}$ then occurs after the $j$th block write. 
  \end{enumerate}
After the $j$th block write, the value of register  $M[\ell]$ is $x_\ell$ if $S_j[\ell] \neq \bot$ and $\bot$ otherwise. Hence, if the $\mathsf{read}$ by process $i$ of $M[\ell]$ is scheduled  according to item 1 above, it returns $X_i[\ell]$. Clearly, if it is scheduled according to item 2, it returns also $X_i[\ell] = x_\ell$. It remains to show that every $\mathsf{read}$ by process $i$ is performed after its $\mathsf{write}$ to $M[i]$. According to item 1 and 2, each $\mathsf{read}$ by process $i$ occurs after the $j$th block write. For some $S \in \{S_1,\ldots,S_k\}$, we have 
$S[i] \neq \bot$ and $S \subseteq X_i$. Since $S_j$ is the largest vector in $\{S_1,\ldots,S_k\}$ contained in $X_i$, $S \subseteq S_j$. Hence, $S_j[i] \neq \bot$ and therefore the $\mathsf{write}$ to $M[i]$ occurs in the $j$th block  or in a previous block. Since all $\mathsf{read}$ by process $i$ are scheduled after the $j$th block write, they  occur after process $i$ has written to~$M[i]$. 
\end{proof}

This completes the proof of Proposition~\ref{prop:IC2IIS-1-bit}.

\subsection{From IS with Unbounded Registers to IC with Unbounded Registers}
\label{sec:IC_to_IS}

To complete the proof of Theorem~\ref{theo:iterated-immediate-snapshot}, we now show that the iterated collect and the iterated snapshot models have the same computational power. Specifically, we show that a round of the IS model can be simulated using $O(n)$ iterations of the IC model. We reuse an existing shared memory snapshot algorithm and slightly adapt it to the iterated case.

\begin{proposition}
  \label{prop:simIS_in_IC}
  For every task $\Pi$, if $\Pi$ is solvable in the IS model (with unbounded registers) then  $\Pi$ is solvable in the IC model (with unbounded registers). 
\end{proposition}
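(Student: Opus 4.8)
The plan is to reduce the statement to simulating a single round of the IS model by a bounded block of IC iterations, and then to simulate an IS protocol round by round. Fix a task $\Pi$ solvable in the IS model, and let $\m{A}$ be a full-information IS protocol solving it, written as Algorithm~\ref{alg:genericIS} with $\mathsf{collect}()$ replaced by $\mathsf{snapshot}()$ and with $k$ rounds. It suffices to build, for any assignment of ``round-input'' values $x_1,\dots,x_n$ to the processes, an IC protocol using a fixed number $q=q(n)$ of consecutive iterated memories that returns to each process $p_i$ a view $\mathbf{v}_i$ such that $\mathbf{v}_1,\dots,\mathbf{v}_n$ satisfy the Validity, Self-containment and Inclusion properties listed in the Preliminaries, i.e.\ form a legal immediate-snapshot output. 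Given such a block simulation, the IC protocol $\m{A}'$ for $\Pi$ runs $k$ blocks back to back over $k\cdot q$ iterated memories: in block $r$, process $p_i$ feeds the view it computed in block $r-1$ (its input for $r=1$) as $x_i$, and after block $k$ it outputs exactly what $\m{A}$ would output from the resulting view. Since a full-information IS protocol behaves identically on any sequence of IS-legal views, $\m{A}'$ solves $\Pi$.

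For the block simulation I would reuse the level-based immediate-snapshot construction of Borowsky and Gafni~\cite{BG92} (whose implementability over read/write memory is recalled in Lemma~\ref{lem:possibility-of-immediate-snapshot}), adapted to the iterated setting. Use a block of $q=n$ fresh memories $M^{(1)},\dots,M^{(n)}$. Each process $p_i$ maintains a level $\ell_i$, initialized to $n$, together with its value $x_i$. In internal iteration $s=1,\dots,n$, process $p_i$ writes the pair $(x_i,\ell_i)$ into the $i$-th register of $M^{(s)}$, performs $M^{(s)}.\mathsf{collect}()$, and lets $S_i$ be the set of processes whose level, as read in this collect, is at most $\ell_i$ (its own entry always being present). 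If $|S_i|\ge \ell_i$ then $p_i$ \emph{freezes}: it fixes its output view to $\mathbf{v}_i$ with $\mathbf{v}_i[j]=x_j$ for $j\in S_i$ and $\mathbf{v}_i[j]=\bot$ otherwise, but it keeps writing its frozen pair $(x_i,\ell_i)$ into $M^{(s+1)},\dots,M^{(n)}$ so that slower processes still observe it; otherwise it sets $\ell_i\gets|S_i|$ (a strictly smaller value) and continues. The only differences from the original algorithm are the use of one fresh memory per internal step --- forced by the iterated model --- and the requirement that a frozen process keep re-advertising its state; note that the iterated index tells each process which block and internal step it is in, for free.

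Correctness follows the standard Borowsky--Gafni analysis. Levels are non-increasing along internal iterations, so once $p_j$ has reached level $\le\ell$ it stays at level $\le\ell$, and hence the predicate ``$p_j$ has level $\le\ell$ when $p_i$ collects'' is monotone in time. From this one gets that within $n$ internal iterations every process freezes, and that the frozen views satisfy Validity (only true values $x_j$ are ever written), Self-containment ($i\in S_i$ by construction), and Inclusion (if $\ell_i\le\ell_j$ at freezing time then $\mathbf{v}_i\subseteq\mathbf{v}_j$). The point requiring care is that a $\mathsf{collect}()$ is weaker than a $\mathsf{snapshot}()$: it is only write-order consistent, not inclusion consistent. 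But this already suffices for the BG argument, because whenever $p_i$ fails to see $p_j$ at level $\le\ell_i$ in $M^{(s)}$, write-order consistency forces $p_j$'s write in $M^{(s)}$ to follow $p_i$'s collect, hence to follow $p_i$'s own write into $M^{(s)}$; combined with monotonicity of levels this reproduces exactly the ordering inequalities BG uses to compare the sets $S_i$ and to establish Inclusion.

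The main obstacle is precisely this last verification: re-running the immediate-snapshot correctness proof of~\cite{BG92} under the two modifications (one fresh memory per internal step, and frozen-but-still-writing processes) while downgrading $\mathsf{snapshot}()$ to $\mathsf{collect}()$, and checking that write-order consistency is all the argument ever needs. A secondary, purely bookkeeping concern is to make the freezing and re-advertising mesh cleanly with the transition from one block of $n$ memories to the next, so that the view produced by block $r$ is exactly the value $x_i$ that $p_i$ plugs into block $r+1$; since all processes terminate each block together after exactly $n$ iterated memories, this is routine.
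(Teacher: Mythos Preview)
Your overall plan is exactly the paper's: simulate one IS round by a block of $n$ IC iterations using the Borowsky--Gafni construction, then run $k$ such blocks. The paper's Algorithm~\ref{alg:unboundedICsimIS} does precisely this, writing $(x_i,b_i)$ with a Boolean $b_i$ and freezing in iteration $\rho$ when the number of visible ``still looking'' processes equals $n+1-\rho$.

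However, the specific BG variant you describe is incorrect in the iterated setting. Your update rule $\ell_i\gets|S_i|$ allows levels to \emph{jump}, and combined with fresh memories per iteration this breaks Inclusion. Concretely, take $n=3$. In iteration~1 schedule $p_1$ first (it sees only itself, jumps to level~$1$), then $p_2$ (sees $\{1,2\}$, jumps to level~$2$), then $p_3$ (sees all, freezes with $\{1,2,3\}$). In iteration~2 schedule $p_2$ first: it writes level~$2$, collects $(\bot,(x_2,2),\bot)$, gets $S_2=\{2\}$, jumps to level~$1$. Then $p_1$ writes level~$1$, collects $((x_1,1),(x_2,2),\bot)$; since only $p_1$ has level $\le 1$, $S_1=\{1\}$ and $|S_1|\ge 1$, so $p_1$ freezes with view $\{1\}$. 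In iteration~3 schedule $p_2$ first again: it writes level~$1$, collects $(\bot,(x_2,1),\bot)$, freezes with view $\{2\}$. The outputs $\{1\}$ and $\{2\}$ are incomparable, so Inclusion fails.

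The issue is that once $p_1$ jumps past $p_2$'s level, the fresh memory in iteration~2 loses the record that $p_2$ ever saw $p_1$; in the single-memory BG algorithm this cannot happen because old writes persist. The paper avoids this by tying the level rigidly to the iteration number (its test $|\{j:b_j=\text{false}\}|=n+1-\rho$ is equivalent to ``all unfrozen processes are at level exactly $n+1-\rho$''), so unfrozen processes are always synchronized and the pigeonhole argument goes through. Replacing your rule by $\ell_i\gets\ell_i-1$ would give an equivalent algorithm and a correct proof; your jumping rule does not.
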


\begin{proof}
  We simulate (Algorithm~\ref{alg:unboundedICsimIS}) one round of the IS model with $n$ $\mathsf{write}$/$\mathsf{collect}$ iterations using the  (immediate) snapshot algorithm presented by Borowsky and Gafni in \cite{BG92}. 
Let  $x_i$ denote  the input of process $i$ to its  $\mathsf{write}$ operation in the simulated round of the IS model. In each iteration   $\rho$, process $i$ writes its input $x_i$ and a Boolean indicating whether or not it has already managed to compute  a valid snapshot (line~\ref{issim:write_collect}). To obtain a snapshot at iteration $\rho$, process $i$ must have collected in this iteration  $n+1 -\rho$ inputs $I = \{x_{i1},\ldots,x_{i(n+1 - \rho)}\}$ of processes that have not yet obtained a snapshot (line~\ref{issim:if}). If this occurs, the vector corresponding to the set $I$ is the output for process $i$ of the simulated $\mathsf{snapshot}()$  operation (line~\ref{issim:snap}). 

\begin{algorithm}[tb]
  \caption{Borowsky-Gafni's snapshot algorithm adapted to the IC model for simulating one round of the IS model. Code for process $i$. $x_i$ is the input of $i$ for the simulated round.}
  \label{alg:unboundedICsimIS}

  \begin{algorithmic}[1]
    \Statex \textbf{shared variables} $M_1,\ldots,M_n$ arrays of $n$ SWMR registers, initially $[\bot,\ldots,\bot]$ 
    \Statex \textbf{local variables} 
    $S_i$, output of simulated $\mathsf{snapshot}()$ initially $[\bot,\ldots,\bot]$
\Statex    $~~~~~~~~~~~~~~~~~~~~~$ $b_i$ Boolean initially false  
    \Begin\label{issim:begin}
    \For{$\rho=1,\ldots,n$}\label{issim:for}
    \State\label{issim:write_collect}$M_\rho[i].\mathsf{write}(x_i,b_i)$;  
    \State $X_i \gets M_\rho.\mathsf{collect}()$
    \If{$b_i=false$ \textbf{and} $|\{ j\in [n] : X_i[j] = (x_j,\emph{false})\}| = n+1 - \rho$}\label{issim:if}
    	\For{$j=1,\dots,n$}\label{issim:snap}
     		\If{$X_i[j] = (x_j,false)$}
			\State $S_i[j] \gets x_j$
		\Else
			\State $S_i[j] \gets \bot$
		\EndIf
     	\EndFor
      	\State $b_i \gets \emph{true}$
    \EndIf
    \EndFor
    \State\label{issim:return} \textsl{return} $S_i$  \Comment{output of the simulated $\mathsf{snapshot}$ operation}
    \End
  \end{algorithmic}
\end{algorithm}

We briefly sketch the correctness proof of the simulation. 
Inductively, at the beginning  of each iteration $\rho$ the number of processes without snapshot is  at most $n+1-\rho$. Indeed, in  the first iteration  the collect $X_\ell$ of  process $\ell$ that last $\mathsf{write}s$ to $M_1$ contains the input of every process. Therefore at least one process obtains a snapshot in the first iteration. Similarly, if at the beginning of iteration $\rho \geq 2$, a set $U^\rho$ of  $n+1 - \rho$ processes is without  snapshot, the  process in $U^\rho$  that last $\mathsf{write}$ to $M_\rho$ sees   the inputs of every process in $U^\rho$ and therefore obtains a valid snapshot. It thus follows that every process $i$ manages to obtain a snapshot by the end of iteration $n$.

Suppose that process $i$ obtains a snapshot $S_i$ (line~\ref{issim:snap}) in iteration $\rho$.  $S_i$ has exactly $n+1-\rho$ non-$\bot$ entry, and each of theses entries  corresponds to a process in the set $U^\rho$ of processes that have not yet obtained a snapshot at the beginning of iteration $\rho$. Clearly, $S_i[i] \neq \bot$.  As shown above $|U^\rho| \leq n+1 - \rho$. Therefore, the snapshots obtained by the processes  in iteration $\rho$ are the same.  Finally, if process $i'$ obtains its snapshot in iteration $\rho' > \rho$, $S_{i'} \subset S_i$ since the non-$\bot$ entries of $S_{i'}$ are those corresponding to the set of processes without snapshot at the beginning of $\rho'$, which is a subset of $U^\rho$. 
\end{proof}



\section{Faster 2-processes Wait-Free $\epsilon$-Agreement}
\label{sec:faster-eps-agr}

When there is no restriction on the size of the registers, 2-processes wait-free binary $\epsilon$-agreement is known to be solvable in  $O(\log \frac{1}{\epsilon})$ steps per process, and this bound is best possible (see, e.g.~\cite{HoestS06}).  Algorithm~\ref{alg:epsilon} uses 1-bit registers with a novel technique, but  is exponentially slower, with  $O(\frac{1}{\epsilon})$ steps per process. 
In this section, we show that this slowdown is not inherent to the fact that each  register has constant size. We present a more involved solution performing in $O(\log \frac{1}{\epsilon})$ steps, for two processes using registers of constant size.

The final states of a protocol $\m{P}$ solving a task can be arranged in a graph $G(\m{P})$. Each vertex  represents a final local state of some process and an edge connects two vertexes whenever the corresponding local states coexist at the end of some execution of the protocol.   Vertexes may be \emph{colored}  process indexes:  if vertex $v$ represents the final state of process $p$, its color $id(v)$ is the index of $p$. An important fact about wait-free models is that the protocol graph is connected. Specifically, for two processes and fixed inputs (for example, in the case of  $\epsilon$-agreement, the input of process $1$ is $0$ and the input of process $2$ is $1$), it is  known \cite[chapter 2]{bookHerlihyKR2013} that the protocol graph is a \emph{chromatic} path. Each extremity corresponds to a \emph{solo} execution, in which one process completes the protocol without the other process taking step. An internal vertex $v$ with $id(v)= i$ is connected to two vertexes $u,u'$ with $id'(u) = id(u') = j \neq i$. See Figure~\ref{fig:ISexe} for an example in the IS model. The ability of a protocol to solve $\epsilon$-agreement  depends on the length of the path: the longer the path is, the smaller $\epsilon$ can be chosen. Indeed, in a solo execution process $i$ with input $x \in \{0,1\}$ has to decide $x$ and therefore $\epsilon$ cannot be smaller than $\frac{1}{L}$, where $L$ is the length of the path.

How fast the length of the path grows as processes take more and more steps? In the IS model for two processes, with no restriction on the size of the register, the length is $3^{r}$ after $r$ rounds as we explain next. 
Recall that in the iterated snapshot model, executions proceed in asynchronous rounds.
In each round $r$,  using a fresh array of shared registers $M_r$, each process writes a value to its register and takes a snapshot of the array. For two processes, taking a snapshot is the same as reading both registers. In each round, they are three possibilities: one of the two processes is \emph{solo}, that is, sees only its own value (the other process thus sees both values.), or the two processes see each other's value. Hence, the number of possible executions grows exponentially with the round numbers: there are $3^r$ executions after $r$ rounds.  Binary $\epsilon$-agreement can be solved in $r$ rounds, for any $\epsilon \geq \frac{1}{3^r}$ by assigning the values $0,\frac{1}{3^r},\frac{2}{3^r}, \ldots, 1$ to the vertexes representing the local state after $r$ rounds, starting from the vertex representing the state of the process with input $0$ running solo.

\begin{figure}[tb]
  \centering
\newcommand{\level}{1}
\newcommand{\scale}{1.8}
 \begin{tikzpicture}[
   Astyle/.style = {color=black, fill= black, shape=circle, minimum size=5},
   Bstyle/.style = {color=red, fill= red, shape=circle,minimum size=5},
   Aedge/.style = {dashed, black, thick},
   Bedge/.style = {dashed, red, thick}]

   \foreach \r in {0,1,2,3}{
     \pgfmathsetmacro\y{-\r}
     \tkzDefPoints{-5.4/\y/D\r, 5.4/\y/F\r}
     \tkzDrawLine[color=gray,thin,dashed](D\r,F\r)
     \tkzLabelLine[gray,above right,pos=1.1](D\r,F\r){\small{round $\r$}}
     \pgfmathsetmacro\xA{-(1*\scale^\r)}
     \pgfmathsetmacro\xB{1*\scale^\r}
     \tkzDefPoints{\xA/\y/A\r, \xB/\y/B\r}
     \draw (A\r) -- (B\r);

     \ifthenelse{\r=0}{ 
       \tkzDrawPoint[Astyle](A\r)
       \tkzDrawPoint[Bstyle](B\r)
     }
     {
       \tkzCalcLength(A\r,B\r)\tkzGetLength{ABl} 
       \renewcommand{\level}{\r}
       \pgfmathsetmacro\nx{int(3^\level)}
       \foreach \x in {0,2,...,\nx}{
         \pgfmathsetmacro\mypos{\x/(\nx)}
         \tkzDefPointOnLine[pos=\mypos](A\r,B\r) \tkzGetPoint{A\r\x}
         \tkzDrawPoint[Astyle](A\r\x)
       }
       \foreach \x in {1,3,...,\nx}{
         \pgfmathsetmacro\mypos{\x/\nx}
         \tkzDefPointOnLine[pos=\mypos](A\r,B\r) \tkzGetPoint{B\r\x}
         \tkzDrawPoint[Bstyle](B\r\x)
       }
     }
   }
   \draw[Aedge] (A0) -- (A10);    \draw[Aedge] (A0) -- (A12);
   \draw[Bedge] (B0) -- (B11);    \draw[Bedge] (B0) -- (B13);
   \foreach \r/\R in {1/2, 2/3}{
     \renewcommand{\level}{\r}
     \pgfmathsetmacro\nx{int(3^\level)}
     \foreach \x in {0,2,...,\nx}{
       \pgfmathsetmacro\childsolo{int(3*\x)}
       \pgfmathsetmacro\childleft{int(3*\x-2)}
       \pgfmathsetmacro\childright{int(3*\x+2)}
       \ifthenelse{\x>0}{\draw[Aedge] (A\r\x) -- (A\R\childleft);}{}
       \draw[Aedge] (A\r\x) -- (A\R\childsolo);
       \draw[Aedge] (A\r\x) -- (A\R\childright);
     }
     \foreach \x in {1,3,...,\nx}{
       \pgfmathsetmacro\childsolo{int(3*\x)}
       \pgfmathsetmacro\childleft{int(3*\x-2)}
       \pgfmathsetmacro\childright{int(3*\x+2)}
       \draw[Bedge] (B\r\x) -- (B\R\childleft);
       \draw[Bedge] (B\r\x) -- (B\R\childsolo);
       \ifthenelse{\x<\nx}{\draw[Bedge] (B\r\x) -- (B\R\childright);}{}
     }
   }
 \end{tikzpicture}
 \label{fig:ISexe}
  \caption{\sl 3-rounds executions of the iterated snapshot model for two processes. Each black (resp. red) dot represents a local state of process $0$ (resp. process $1$). After one round, the left edge represents the execution in which process $0$ is solo and process $1$ reads the registers after process $1$ has written. In the execution represented by the middle edge, both processes read each other value. Process $1$ cannot known the view of process $0$ and has thus  the same state in both executions. The edge on the right represent the symmetric execution: process $1$ is solo and process $0$ reads the registers after process $1$ has written. The same pattern repeats every round: each edge is subdivided in three edges.}
\end{figure}
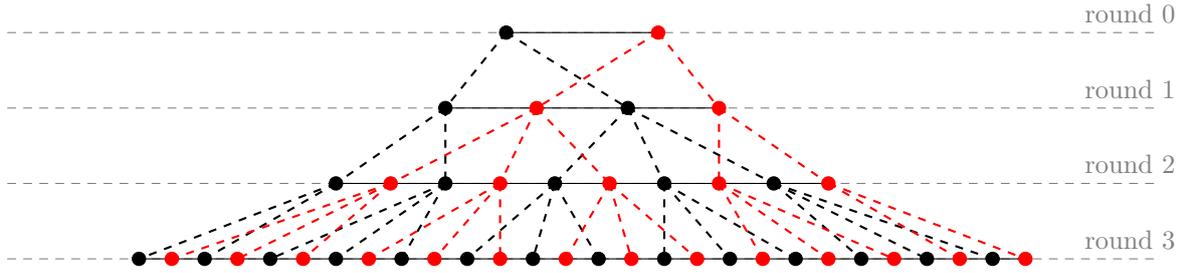

\subsection{Labelling protocols}

Even if the size of the registers is unbounded, and the protocol is full information, some executions remain  indistinguishable to the processes. For example, when a process in the last round $r$ sees both values, it has no way to tell whether the other process is solo or also sees both values. What happens if we restrict the size of the registers? Are there more executions that become indistinguishable to the processes? That is, if processes are allowed to communicate only a constant number of bit per round, as in Algorithm~\ref{alg:epsilon}, how many are there distinct final states after $r$ rounds of computation?

At one  extreme, if each process always writes the same value (say $1$) in each round, process $i$ will end up in the same state in every  execution in which it sees the other process' value in each round. This is not the case in full information protocols where process $j \neq i$ can communicate whether it was solo or not in the previous rounds.  Perhaps surprisingly, it is shown in~\cite{Delporte-Gallet20} that each process writing only one bit of information in each round is sufficient for the set of local  states after $r$ rounds to be of size $3^r+1$.

More precisely, a \emph{labelling} protocol is an IS protocol that has no input. It outputs at each process $i$ and at the end of each round $r$ a \emph{label} of the form $(i,r,\lambda)$ where $\lambda$ is taken from  some domain~$\m{L}$. 

\begin{lemma}[\cite{Delporte-Gallet20}]
  \label{lem:1bit_disambiguation}
  There exists a two processes deterministic labelling protocol for the IS model in which, in each round $r$, each process $i$ writes one bit.  
  The size of the set of labels output at  round $r$ in all possible executions is $3^r+1$. 
\end{lemma}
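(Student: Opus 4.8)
The plan is to exhibit the labelling protocol explicitly and to prove its correctness by induction on the round number~$r$, carrying the whole protocol graph through the induction. Recall from the discussion around Figure~\ref{fig:ISexe} that, for two processes with fixed initial states, the full-information IS protocol graph after $r$ rounds is a chromatic path $P_r$ on $3^r+1$ vertices, obtained from $P_{r-1}$ by subdividing every edge into three. I would first fix, once and for all, the canonical identification of the vertex set of $P_r$ with $\{0,1,\dots,3^r\}$ so that vertex~$p$ has colour~$1$ (belongs to process~$1$) if and only if $p$ is even, the endpoint~$0$ is the state of process~$1$ running solo throughout all rounds, and the embedding $P_{r-1}\hookrightarrow P_r$ sends vertex~$p$ to vertex~$3p$ (so the two internal vertices created on the edge $\{p,p+1\}$ of $P_{r-1}$ receive the labels $3p+1$ and $3p+2$). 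The protocol is then: at round~$r$, a process currently carrying label~$p$ writes the single bit $\beta(p):=\lfloor p/2\rfloor \bmod 2$ into its register in $M_r$, performs its immediate snapshot, updates its label by the rule described below, and outputs that new label.

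The update rule exploits the fact that, in the two-process IS model, an immediate snapshot conveys to a process exactly one of three things: either it does not see the other process (``solo''), or it sees the other process's bit, which is $0$ or~$1$. Suppose process~$i$ starts round~$r$ with label~$p$. If it reads ``solo'', it moves to label~$3p$. If it reads a bit $c\in\{0,1\}$, then in $P_{r-1}$ the partner it is paired with occupies label $p-1$ or $p+1$; those two candidate partners have the \emph{same} colour and their labels differ by~$2$, hence $\beta(p-1)\neq\beta(p+1)$, so exactly one of them equals~$c$ and process~$i$ can decide which candidate edge of $P_{r-1}$ it lies on. It then moves to label $3p-2$ if $c=\beta(p-1)$ (left partner) and to label $3p+2$ if $c=\beta(p+1)$ (right partner). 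At the two endpoints $p\in\{0,3^{r-1}\}$ there is a single candidate partner and the rule degenerates accordingly (e.g.\ from $p=0$, a non-solo observation forces the move to label~$2$; symmetrically from $p=3^{r-1}$ it forces the move to label~$3^r-2$). One checks that $\beta(p-1)\neq\beta(p+1)$ always holds because $p-1$ and $p+1$ contribute consecutive values of $\lfloor\cdot/2\rfloor$; this is precisely why $\beta$ is chosen to be constant on pairs and alternating along the path.

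I would then prove by induction on~$r$ the invariant: in every execution the pair of current labels of the two processes is an edge of the path $0\!-\!1\!-\!\cdots\!-\!3^r$, every such edge is realised in some execution, and the label is a function of the process's local state (so outputting it is a legitimate protocol action). The base case $r=0$ is the single edge $\{0,1\}$ (process~$1$ at~$0$, process~$2$ at~$1$). For the inductive step I would enumerate, on top of a given edge $\{p,p+1\}$ of $P_{r-1}$, the three IS schedules of round~$r$ (one process writes-and-snapshots before, concurrently with, or after the other) and verify, using the colour/parity bookkeeping above, that the three resulting state-pairs are exactly the edges $\{3p,3p+1\}$, $\{3p+1,3p+2\}$, $\{3p+2,3p+3\}$, and that the shared endpoint $[p\text{ solo}]$ with label~$3p$ produced from the two edges of $P_{r-1}$ incident to vertex~$p$ coincides. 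The main obstacle I expect in writing this cleanly is the well-definedness of the update rule in the ``sees a bit'' case together with its global consistency: one must make sure that a single bit always suffices to break the left/right tie and, simultaneously, that this purely local disambiguation never forces two distinct vertices of $P_r$ to share a label, nor merges what should be two different states — these are the two sides of the inductive claim and must be argued together. Once the invariant is established, the set of labels output at round~$r$ is exactly $V(P_r)$, of size $3^r+1$, and by construction each process writes one bit per round, which proves the lemma.
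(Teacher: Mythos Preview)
The paper does not prove this lemma; it is quoted from \cite{Delporte-Gallet20} and used as a black box (the only hint at the construction is Figure~\ref{fig:labels}). So there is no ``paper proof'' to compare against, and your task was really to reconstruct the cited result.

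Your construction is correct. The key technical point is the choice of bit function $\beta(p)=\lfloor p/2\rfloor\bmod 2$, and you verify exactly the property that makes the whole scheme work: for every internal vertex $p$ of $P_{r-1}$, its two neighbours $p-1$ and $p+1$ (which have the same colour) satisfy $\beta(p-1)\neq\beta(p+1)$, so a single observed bit disambiguates left from right. The inductive invariant you state (the pair of current labels is an edge of the path $0\!-\!1\!-\!\cdots\!-\!3^r$, every such edge is realised, and the label is a function of the local state) is the right one, and your case analysis of the three IS schedules on an edge $\{p,p+1\}$ producing the edges $\{3p,3p+1\}$, $\{3p+1,3p+2\}$, $\{3p+2,3p+3\}$ is correct. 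The endpoint degeneracies and the ``solo goes to $3p$'' consistency check are also handled. This is, up to a bijective relabelling, the protocol depicted in Figure~\ref{fig:labels} (there the label carried by a process is $\lfloor p/2\rfloor$ for one process and $\lfloor(3^r-p)/2\rfloor$ for the other, together with the process id, rather than the position $p$ itself; the count is the same).

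One cosmetic point: when you write the final version, make explicit that the new labels stay in range, i.e., that $3p-2$ is only produced when $p\geq 1$ and $3p+2$ only when $p\leq 3^{r-1}-1$; you allude to this via the endpoint degeneracies but it is worth stating as part of the well-definedness argument.
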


\begin{figure}[tb]
  \newcommand{\level}{1}
  \newcommand{\scale}{1.8}
  \centering
   \begin{tikzpicture}[
   Astyle/.style = {color=black, fill= black, shape=circle, minimum size=5},
   Bstyle/.style = {color=red, fill= red, shape=circle,minimum size=5},
   Aedge/.style = {dashed, black, thick},
   Bedge/.style = {dashed, red, thick}
   ]

   \foreach \r in {0,1,2,3}{
     \pgfmathsetmacro\y{-\r}
     \tkzDefPoints{-5.4/\y/D\r, 5.4/\y/F\r}
     \tkzDrawLine[color=gray,thin,dashed](D\r,F\r)
     \tkzLabelLine[gray,above right,pos=1.1](D\r,F\r){\small{round $\r$}}
     \pgfmathsetmacro\xA{-(1*\scale^\r)}
     \pgfmathsetmacro\xB{1*\scale^\r}
     \tkzDefPoints{\xA/\y/A\r, \xB/\y/B\r}
     \draw (A\r) -- (B\r);

     \ifthenelse{\r=0}{ 
       \tkzDrawPoint[Astyle](A\r)
       \tkzLabelPoint(A\r){0}
       \tkzDrawPoint[Bstyle](B\r)
       \tkzLabelPoint(B\r){0}
     }
     { 
       \tkzCalcLength(A\r,B\r)\tkzGetLength{ABl} 
       \renewcommand{\level}{\r}
       \pgfmathsetmacro\nx{3^\level}
       \foreach \x in {0,2,...,\nx}{
         \pgfmathsetmacro\mypos{\x/(\nx)}
         \tkzDefPointOnLine[pos=\mypos](A\r,B\r) \tkzGetPoint{A\r\x}
         \pgfmathparse{div(\x,2)}
         \tkzDrawPoint[Astyle](A\r\x)
         \tkzLabelPoint(A\r\x){\small{\pgfmathresult}}
       }
       \foreach \x in {1,3,...,\nx}{
         \pgfmathsetmacro\mypos{\x/\nx}
         \tkzDefPointOnLine[pos=\mypos](A\r,B\r) \tkzGetPoint{B\r\x}
         \pgfmathparse{div(\nx-\x,2)}
         \tkzDrawPoint[Bstyle](B\r\x)
         \tkzLabelPoint(B\r\x){\small{\pgfmathresult}}
       }
     }
     }
     \tkzLabelPoint[color=gray,below=.3cm](A30){\small{$0$}}
     \foreach \x in {2,4,...,27}{
       \tkzLabelPoint[color=gray,below=.3cm](A3\x){\small{$\frac{\x}{27}$}}
     }
     \tkzLabelPoint[color=gray,below=.3cm](B327){\small{$1$}}
     \foreach \x in {1,3,...,26}{
       \tkzLabelPoint[color=gray,below=.3cm](B3\x){\small{$\frac{\x}{27}$}}
     }
    \end{tikzpicture}
    \caption{\sl The labels (black) produced by the protocol in~\cite{Delporte-Gallet20}. Round numbers and process indexes are omitted. Labels are associated with values (grey) in $[0,1]$ for solving $\epsilon$-agreement.}
    \label{fig:labels}
  \end{figure}
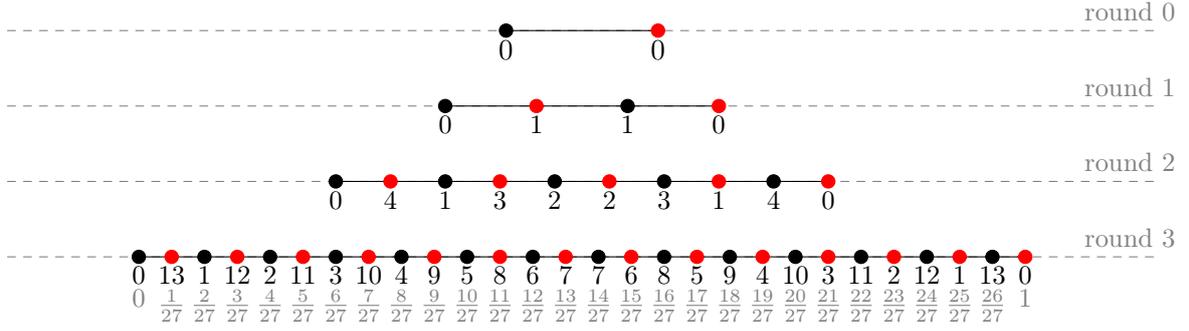

  \paragraph{From labelling to $\epsilon$-agreement.}
  
It follows from Lemma~\ref{lem:1bit_disambiguation} that $\epsilon$-agreement can be solved in $O(\log\frac{1}{\epsilon})$ rounds by 2 processes in the IS model with $2$-bit registers: run the labelling protocol for $r$ rounds, where $\frac{1}{3^r} \leq \epsilon$ and associates with each round $r$ label $\lambda$ a value $f(\lambda) \in \{0,\frac{1}{3^r}, \frac{2}{3^r}, \ldots, 1\}$ such that:
\begin{itemize}
\item $f(\lambda_{s0}) = 0$, where $\lambda_{s0}$ is the label obtained by process $0$ in the execution in which it is solo in every round. Similarly, for the label $\lambda_{s1}$ obtained by process $1$ when it is solo in every round, $f(\lambda_{s1}) = 1$. 
\item If there is an $r$-rounds execution  in which both labels $\lambda$ and $\lambda'$ are obtained by processes $0$ and $1$, $|f(\lambda) - f(\lambda')| = \frac{1}{3^{r}}$
\end{itemize}
See Figure~\ref{fig:labels} for an example for the case  $r=3$ and $\epsilon = \frac{1}{27}$.

For communicating its binary input $x_i$, each process in each round writes $x_i$ and the bit of the labelling protocol.  
After obtaining a label $\lambda_i$, if it has discovered that both inputs are the same, or it has not seen the input of the other process, it decides its own input $x_i$. Otherwise, process $i$ decides $y_i$: $$ \mbox{if $\lambda_i < \frac{1}{2}$ then}~ y_i = \left\{
  \begin{array}{l}
    f(\lambda_i) ~\mbox{if $x_0 = 0$} \\
    1 - f(\lambda_i) ~\mbox{otherwise}
  \end{array}\right.
~\mbox{else}~ y_i = \left\{
  \begin{array}{l}
    f(\lambda_i) ~\mbox{if $x_1 = 1$} \\
    1 - f(\lambda_i) ~\mbox{otherwise}
  \end{array}\right.$$
We briefly check that agreement and validity are satisfied. Suppose that a process, say process~$0$ does not see the input of the other process. Then it was solo in each round of the IS execution and consequently $f(\lambda_0) = 0$ and thus $f(\lambda_1) \leq \frac{1}{\epsilon}$.
Process $0$ decides $x_0$. If $x_1 = x_0$, process $1$ also decides $x_0$. Otherwise, since process $1$ sees $x_0$ it decides $f(\lambda_1)$ if $x_0 = 0$ or $1-f(\lambda_1)$ if $x_0 = 1$. Both decisions are thus at most $\frac{1}{\epsilon}$ apart, and in the range $[\min(x_0,x_1),\max(x_0,x_1)]$ as required. The case in which process $1$ does not see the input of process $0$ is similar.

Suppose that both processes see each other input. If $x_0 = x_1$ they decide the same value $x_0 = x_1$. If $x_0 \neq x_1$, the values decided are at most $\frac{1}{\epsilon}$ apart if both $f(\lambda_0),f(\lambda_1) < \frac{1}{2}$, or both $f(\lambda_0),f(\lambda_1) \geq \frac{1}{2}$. Let us assume that $f(\lambda_0) < \frac{1}{2}$ and  $f(\lambda_1) \geq \frac{1}{2}$ (the other case is similar). Then 
\begin{align*}
|y_0 - y_1| & = | x_0 + (-1)^{x_0}f(\lambda_0) - (1-x_1) - (-1)^{1-x_1}f(\lambda_1)| \\
& = |(x_0+x_1) -1 \pm{} (f(\lambda_0) - f(\lambda_1))| \\
& = |f(\lambda_0) - f(\lambda_1)|\\
& \leq \frac{1}{\epsilon},
\end{align*}
as required. 

If input registers are available, processes do not need to communicate their input in each round. They write their input to the input registers, execute the IS labelling protocol, read the input register and then decide as explained above. We thus have:

\begin{lemma}
  \label{lem:lcis:fromlabeltoeps}
  Let $f: \mathbb{N} \to \mathbb{N}$ be a growing function and let $\m{P}$ be a labelling protocol for two processes that generates in total $f(r)$ labels in all executions in which processes take at most $r$ steps. If input registers are available, $\epsilon$-agreement is solvable with step complexity $O(r)$ for any $\epsilon \geq \frac{1}{f(r)}$. 
\end{lemma}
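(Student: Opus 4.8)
The plan is to transcribe, essentially verbatim, the ``from labelling to $\epsilon$-agreement'' construction sketched just before the statement, replacing the count $3^r+1$ of the protocol of Lemma~\ref{lem:1bit_disambiguation} by the generic count $f(r)$ and the spacing $3^{-r}$ by $\frac{1}{f(r)-1}$. Fix $\epsilon>0$; since $f$ is growing, fix $r$ with $\frac{1}{f(r)-1}\le\epsilon$ (this is the threshold $\epsilon\ge\frac{1}{f(r)}$ of the statement, the discrepancy by one being immaterial, and it does not affect the claimed $O(r)$ step complexity). The algorithm is then: process~$i$ first writes its binary input $x_i$ to its input register; next it runs the labelling protocol $\m{P}$ for $r$ steps, obtaining a label $\lambda_i$; then it reads the input register of the other process; and finally it decides by the rule displayed before the statement, namely it decides $x_i$ if it has seen that the two inputs coincide or if it has not seen the other input, and otherwise it decides $g(\lambda_i)$ or $1-g(\lambda_i)$ according to whether $g(\lambda_i)<\tfrac12$ and to which of the two inputs it holds, where $g$ is the value map defined next. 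Since $\m{P}$ runs for $r$ steps and the remaining operations are $O(1)$, the step complexity is $O(r)$, and the only registers used are those of $\m{P}$ together with the input registers.

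The one ingredient to pin down is the value map $g$. For two fixed inputs, the final local states of a two-process IS protocol form a chromatic path whose two extremities are the solo executions (see \cite[chapter~2]{bookHerlihyKR2013}); since $\m{P}$ is deterministic and input-free, the round-$r$ label of a process is a function of its local state, and two executions indistinguishable to a process yield it the same label. Hence the $f(r)$ labels available after $r$ steps, equipped with the relation ``$\lambda$ and $\lambda'$ occur as the labels of the two processes in a common execution'', form a chromatic path $\lambda_{s0}=\lambda^{(0)},\lambda^{(1)},\dots,\lambda^{(f(r)-1)}=\lambda_{s1}$, with the two solo labels at its ends (this is exactly the structure of Figure~\ref{fig:labels} in the case of Lemma~\ref{lem:1bit_disambiguation}). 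Set $g(\lambda^{(j)})=\frac{j}{f(r)-1}$, so that $g$ takes values in $\{0,\frac{1}{f(r)-1},\dots,1\}$, $g(\lambda_{s0})=0$, $g(\lambda_{s1})=1$, and any two co-occurring labels are equal or consecutive on the path, hence have $g$-values at most $\frac{1}{f(r)-1}\le\epsilon$ apart.

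With $g$ in hand, validity and agreement follow by the very case analysis already carried out in the paragraph preceding the statement, now with $3^r$ replaced by $f(r)-1$. If all inputs are equal, every deciding process outputs that common value. If the inputs differ: a process not seeing the other input was solo throughout, hence holds a solo label and outputs its own input in $\{0,1\}$; the other process either decides the same value, or decides $g(\lambda)$ (resp.\ $1-g(\lambda)$) for a label $\lambda$ co-occurring with that solo label, and the two outputs then differ by at most $\frac{1}{f(r)-1}\le\epsilon$ and lie in $[\min(x_1,x_2),\max(x_1,x_2)]$. If both processes see both inputs, the displayed identity $|y_1-y_2|=|g(\lambda_1)-g(\lambda_2)|$ together with the bound on co-occurring labels gives $|y_1-y_2|\le\epsilon$. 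This proves the lemma.

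I expect the \emph{only} non-routine point to be the structural claim of the second paragraph: one must argue carefully that the labels available after $r$ steps of a deterministic input-free two-process IS protocol, equipped with the co-occurrence relation, form a chromatic path on $f(r)$ vertices with the two solo labels at the extremities — equivalently, that the labelling respects the chromatic-subdivision structure of the protocol complex, so that the evenly-spaced assignment $g$ is well defined. For the concrete protocol of Lemma~\ref{lem:1bit_disambiguation} this is what is established in \cite{Delporte-Gallet20} and depicted in Figure~\ref{fig:labels}; in the generic formulation it is precisely what the hypothesis ``$\m{P}$ generates in total $f(r)$ labels'' is meant to encode. Everything downstream of this claim is a direct transcription of the argument already given in the text.
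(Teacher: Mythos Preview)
Your proposal is correct and follows essentially the same approach as the paper: the paper's ``proof'' is precisely the paragraph preceding the lemma (concluding with ``We thus have:''), which you have faithfully transcribed with $3^r$ replaced by the generic $f(r)-1$. You are also right to flag the structural claim that the labels, under the co-occurrence relation, form a chromatic path with the solo labels at the ends --- the paper leaves this implicit (it holds for the protocol of Lemma~\ref{lem:1bit_disambiguation} by \cite{Delporte-Gallet20}, and is preserved by the simulation of Section~\ref{sec:faster-eps-agr} as suggested by Figure~\ref{fig:sim_exe}), and treats it as part of what ``labelling protocol'' is meant to convey rather than proving it in the abstract.
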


Nevertheless, applying the lemma to the labelling protocol $\m{P}$ of~\cite{Delporte-Gallet20} requires
registers of size $O(\log \frac{1}{\epsilon})$ as in the iterated snapshot model a new array of shared registers is used in each round. We show next how  to simulate a subset of the executions of $\m{P}$ using two constant size registers. 

\subsection{Simulating IS execution with constant size registers}

It is well known that write-snapshot objects can be implemented with SWMR registers~\cite{AfekADGMS93},  and hence, given two unbounded registers two processes can wait-free simulate  IS executions. 

Unbounded space is required, as one process $p$ may be simulating some round $R$ while the other process $q$ may still be simulating some lower round $r$. The difference $R-r$ cannot be bounded, and  enough information has to be kept in shared memory for $q$ to be able to simulate rounds $r+1, \ldots,R$. This is because $q$ cannot ask and wait for $p$ to tell him the values it has written in  rounds $r+1, \ldots, R$, or at least  how many consecutive  solo rounds it has performed. We get around this issue by simulating only a \emph{subset} of  all possible IS executions in a \emph{non-uniform} way. Essentially,  after a process has simulated $\Delta$ consecutive solo rounds, it quits the simulation, where $\Delta$ is a parameter of the simulation.  More precisely, let $R$ be the maximal number of  rounds to be simulated. There are some IS executions of length $R$ that are never simulated (the ones in which a processes is solo for more than $\Delta$ rounds), and the processes may  perform less that $R$ rounds. Nevertheless,  we show that the total number of simulated IS executions is $\Omega(2^R)$ for $\Delta \geq 2$. See Figure~\ref{fig:sim_exe} for the case $R=5$ and $\Delta =2$.

Therefore, applying the simulation to a labelling protocol, such as the one in~\cite{Delporte-Gallet20}, $\Omega(2^R)$ distinct labels are globally (over all possible runs of the simulation) generated. It thus  follows that processes can solved $\epsilon$-agreement with $\epsilon \geq \frac{1}{2^R}$. As simulating one IS round takes a constant number of writes and reads, the step complexity is $O(R) = O(\log\frac{1}{\epsilon})$, while the size of the shared registers remains constant.

\begin{figure}[tb]
  \centering
\newcommand{\level}{1}
\newcommand{\scale}{1.6}
\newcommand{\drawpoint}[2]{
    \ifthenelse{\isodd{#1}}{ 
      \tkzDrawPoint[Bstyle](B#2#1)
    }{ 
      \tkzDrawPoint[Astyle](A#2#1)
    }
  }
\newcommand{\drawsoloedge}[2]{
  \pgfmathsetmacro\childsolo{int(3*#1)}
  \pgfmathsetmacro\nextlevel{int(#2+1)}
  \ifthenelse{\isodd{#1}}{ 
    \draw[SBedge] (B#2#1) -- (B\nextlevel\childsolo);
  }
  { 
    \draw[SAedge] (A#2#1) -- (A\nextlevel\childsolo);
  }
}

\newcommand{\drawsoloedgesandpoints}[3]{
  \drawsoloedges{#1}{#2}{#3}
  \pgfmathsetmacro\nextlevel{int(#3+1)}
  \pgfmathsetmacro\firstindex{int(#1*3)}
  \pgfmathsetmacro\lastindex{int(#2*3)}
  \drawpoints{\firstindex}{\lastindex}{\nextlevel}
  }
\newcommand{\drawsoloedges}[3]{
  \pgfmathsetmacro\secondindex{int(#1+1)}
  \ifthenelse{#1=\secondindex}{
    \drawsoloedge{#1}{#3}
    \drawsoloedge{\secondindex}{#3}
  }
  {
    \foreach \i in {#1,\secondindex,...,#2} {
      \drawsoloedge{\i}{#3}
    }
  }
  }
\newcommand{\drawpoints}[3]{
    \pgfmathsetmacro\secondindex{int(#1+1)}
    \ifthenelse{#1=\secondindex}{
      \drawpoint{#1}{#3}
      \drawpoint{\secondindex}{#3}
    }
    {
      \foreach \i in {#1,\secondindex,...,#2} {
        \drawpoint{\i}{#3}
      }
    }
}
\newcommand{\suppress}[3]{
  \foreach \d in {-2,-1,0,1,2} {
    \pgfmathsetmacro\i{int(#1+\d)}
    \ifthenelse{\i < 0 \OR \i > #3}{
     }{
    \ifthenelse{\isodd{\i}}{ 
      \tkzDrawPoint[Gstyle](B#2\i)
    }
    { 
      \tkzDrawPoint[Gstyle](A#2\i)
    }
    }
  }
}
\hspace*{-2.7cm}\begin{tikzpicture}[
   Astyle/.style = {color=black, fill= black, shape=circle, minimum size=5},
   Bstyle/.style = {color=red,  fill= red, shape=circle,minimum size=5},
   Gstyle/.style = {color=gray, fill= gray, shape=circle,minimum size=5},
   Aedge/.style = {dashed, black, thick},
   Bedge/.style = {dashed, red, thick},
   SAedge/.style = {Aedge},
   SBedge/.style = {Bedge}]

   \foreach \r in {0,1,2,3,4,5}{
     \pgfmathsetmacro\y{-\r}
     \tkzDefPoints{-5.4/\y/D\r, 5.4/\y/F\r}
     \tkzDrawLine[color=gray,thin,dashed](D\r,F\r)
     \tkzLabelLine[gray,above right,pos=1.1](D\r,F\r){\small{round $\r$}}
     \pgfmathsetmacro\xA{-(1*\scale^\r)}
     \pgfmathsetmacro\xB{1*\scale^\r}
     \tkzDefPoints{\xA/\y/A\r, \xB/\y/B\r}

     \ifthenelse{\r=0}{ 
       \tkzDrawPoint[Astyle](A\r)
       \tkzDrawPoint[Bstyle](B\r)
     }
     {
       \tkzCalcLength(A\r,B\r)\tkzGetLength{ABl} 
       \renewcommand{\level}{\r}
       \pgfmathsetmacro\nx{3^\level}
       \foreach \x in {0,2,...,\nx}{
         \pgfmathsetmacro\mypos{\x/(\nx)}
         \tkzDefPointOnLine[pos=\mypos](A\r,B\r) \tkzGetPoint{A\r\x}
       }
       \foreach \x in {1,3,...,\nx}{
         \pgfmathsetmacro\mypos{\x/\nx}
         \tkzDefPointOnLine[pos=\mypos](A\r,B\r) \tkzGetPoint{B\r\x}
       }
     }
   }
    \draw (A0) -- (B0);
    \draw (A1) -- (B1);
    \draw (A2) -- (B2);
    \draw (B33) -- (A324);
    \draw (B49) -- (A424);     \draw (A430) -- (B451);     \draw (B457) -- (A472);   
    \draw[SAedge] (A0) -- (A10);    
    \draw[SBedge] (B0) -- (B13);
    \drawsoloedges{0}{3}{1}
    \drawsoloedges{1}{8}{2}
    \drawsoloedges{3}{8}{3} \drawsoloedges{10}{17}{3} \drawsoloedges{19}{24}{3}
    \drawsoloedgesandpoints{10}{17}{4} 
    \drawsoloedgesandpoints{19}{24}{4} 
    \drawsoloedgesandpoints{30}{35}{4}
    \drawsoloedgesandpoints{37}{44}{4}
    \drawsoloedgesandpoints{46}{51}{4}
    \drawsoloedgesandpoints{57}{62}{4}
    \drawsoloedgesandpoints{64}{71}{4}
    \drawpoints{0}{3}{1}
    \drawpoints{0}{9}{2}
    \drawpoints{3}{24}{3}
    \drawpoints{9}{24}{4}\drawpoints{30}{51}{4}\drawpoints{57}{72}{4}

    \end{tikzpicture}
    \caption{\sl Simulated subset of IS. Each process exits the simulation after simulating $\Delta=2$ consecutive solo-round (dashed-lines.) or after simulating $R=5$ rounds.  The number of simulated IS executions still grows exponentially with  $R$.}
    \label{fig:sim_exe}
\end{figure}

\paragraph{A simulation}
Let $\m{P}$ be a labelling protocol, for example the one described in~\cite{Delporte-Gallet20}. $\m{P}$ is a round based protocol designed for the IS model. For each process $i$, it consists in two functions:
\begin{itemize}
\item \textsc{write}($r,i, view_1,\ldots,view_{r-1}$) that provides the value process $i$ writes in round $r$. It depends on the current round number $r$, the index $i$ of the process and the successive snapshots (or \emph{views}) obtained by the process in the previous rounds.  
\item \textsc{label}($r,i,view_1,\ldots,view_{r}$) that gives the label at the end of round $r$ for process $i$. Here also, the label depends on the local state of the process, that is its index together with the sequence of views obtained in rounds $1,\ldots,r$. 
\end{itemize}
In Algorithm~\ref{alg:constantISsim}, we simulate executions of $\m{P}$ using two registers $R_1, R_2$ of constant size. The constant depends on how many bits $\m{P}$ writes in each round, and on a parameter $\Delta$. In each execution of Algorithm~\ref{alg:constantISsim}, an IS execution $E$ of at most $R$ rounds is simulated, and when the simulation terminates at process $i$, the label process $i$ would have received at the end of $E$ is returned (line~\ref{lcis:returnlabel}). The simulated execution is \emph{non-uniform}, in the sense that processes $1$ and $2$ may have performed different numbers of rounds in $E$. $R$, the maximum number of rounds performed in $E$, may be arbitrarily large. This has no impact on the size of the registers $R_1,R_2$.

An execution in the IS model generates at each process $i$ a sequence of \emph{views} $(view^1_i,view^2_i,\ldots)$, one per round.  $view_i^r$ is a $2$-components vector, result of the  snapshot performed by process $i$ on the $r$th memory $M_r$. $view_i^r[i] = v$, where $v$ is the value the simulated protocol writes at round~$r$ (1 bit in case of the labelling protocol of~\cite{Delporte-Gallet20}) (line~\ref{lcis:seeme}). For the  index $j$ corresponding to the other process, $view^r[j]$ is $\bot$ if $i$ is solo in the simulated round $r$ (line~\ref{lcis:solo}), or the value written by $j$ according to the  simulated protocol in round $r$ (line~\ref{lcis:notsolo}).

In the simulated IS execution, each process is solo for at most $\Delta$ consecutive rounds.
The local variable $c$ of process $i$ counts the number of consecutive simulated solo rounds for process $i$ (line~\ref{lcis:solo}). If this counter reaches $\Delta$ (line~\ref{lcis:simend}), process $i$ quits the simulation. The algorithm hence maintains the invariant that  processes are at the same time simulating  IS rounds whose numbers are at most $\Delta$ apart. Therefore, it is sufficient for process $i$ to keep an history $H$ of the values the simulated protocol writes in the last $\Delta+1$ rounds. When process $i$ starts the simulation of a new round $r$, it discards the value it wrote in the simulated round $r-\Delta$, replacing it with the new value $v$ to be written in the current round (line~\ref{lcis:prepareH}). The history $H$ is then written in the register $R_i$ of process $i$, which allows process $j\neq i$ to complete its view if it is simulating a round $r' =r - \delta$ with $0 \leq \delta \leq \Delta$. The value written by process $i$ in round $r'$ is stored at position $\delta$ in $H$.

Note that the processes cannot communicate the round number they  are currently simulating, since this would require a non constant number of bits. Instead, processes communicate their position in a ring of  size $2\Delta+1$. The ring is oriented and its nodes are numbered from $0$ to $2\Delta$. Initially, both processes are in node $0$. Each time a process simulates a new round, it moves to the next node (line~\ref{lcis:preparex}) and writes its new position $x$ in its register (line~\ref{lcis:write}). The last round  simulated by the other process $j$ can  be deduced  from its position on the ring. Let  $x_j$ be the current position of $j$ on the ring (as seen by process $i$ when it reads $j$'s register) and $xprec_j$ be its last  position known by $i$, and $d = \ell(xprec_j,x_j)$ the smallest number of moves required to go from node $xprec_j$ to node $x_j$ (the length of the directed path from $xprec_q$ to $x_j$). Process $i$ assumes that $j$ has simulated $d$ rounds since the last time it read $j$'s register. This might be incorrect, as process $i$ observe the same position for $j$ if $j$ has performed one or more complete lap, e.g., $j$ has simulated $d + k(2\Delta+1)$ rounds, for some $k>0$. This is not possible, since, as we show, it would require  process $j$ to simulate  more than $\Delta$ consecutive solo rounds. Hence, process $i$ is able to accurately computes the last round $r_j$ simulated by process $j$. The history $H_j$ it reads from process $j$ register (line~\ref{lcis:read}) thus contains the values written in the simulated execution by process $j$ at round $r_j,r_j-1,\ldots,r_j-\Delta$. 

When the simulation ends at process $i$ (either because $\Delta$ consecutive rounds in which $i$ is solo have been simulated, or process $i$ has simulated $R$ rounds), process $i$ returns the label generated by the  labelling  protocol $\m{P}$ at the end of the simulated execution (line~\ref{lcis:returnlabel}). The register $R_i$ should be large enough to contains the history $H$ and a position on the ring. In the labelling protocol of~\cite{Delporte-Gallet20},  one bit is written in each round. Hence, registers of size $O(\Delta + \log\Delta)$ suffice. 

\begin{algorithm}[tb]
  \caption{Simulation of an IS labelling  protocol $\m{P}$. Code for process $me \in \{1,2\}$.}
  \label{alg:constantISsim}
  \begin{algorithmic}[1]
    \Statex\textbf{local variables}
    \Statex\hspace{\algorithmicindent} $other \gets 2-me$ \Comment{index of the other process}
    \Statex\hspace{\algorithmicindent} $estr \gets 0$ \Comment{estimation of the  current round of  proc. $other$}
    \Statex\hspace{\algorithmicindent} $xprec_{other} \gets 0$ \Comment{last known position of the proc. $other$ on the ring}
    \Statex\hspace{\algorithmicindent} $c \gets 0$ \Comment{number of consecutive simulated solo rounds by $me$}
    \Statex\hspace{\algorithmicindent} $H \gets [\bot,\ldots,\bot]$ 
    \Comment{history of the last $\Delta+1$ values written by $me$}
    \Statex\textbf{shared variables}
    \Statex\hspace{\algorithmicindent} $R_k \gets (0,[\bot,\ldots,\bot])$ for $k \in \{1,2\}$ \Comment{shared registers.}
\Statex    \Comment{$0$ is the initial position on the ring, $[\bot,\ldots,\bot]$ the initial history for both process}
    \Begin\label{lcis:begin}
    \For{$r = 1,\ldots,R$}\label{lcis:for}
    	\State\label{lcis:preparex}$x \gets r \mod (2\Delta+1)$; \Comment{advance one step in the ring}
   	 \State\label{lcis:Pwrite}$v \gets \textsc{write}(r,me,view_1,\ldots,view_{r-1})$ 
    		\Comment{value written in round $r$ by proc. $me$ in $\m{P}$}
    	\State \textbf{for} every $j \in \{1,\ldots,\Delta\}$ \textbf{do} $H[j] \gets H[j-1]$ \label{lcis:prepareH}
    	\State $H[0] \gets v$ \Comment{discard oldest value from history $H$ and store value of round $r$}
    	\State\label{lcis:seeme}$view^{r}[me] \gets v$ \Comment{proc. $me$ always sees its own value in each round}  
    	\State\label{lcis:write}$R_{me}.\mathsf{write}(x,H)$  
   	 \State\label{lcis:read}$(x_{other}, H_{other}) \gets R_{other}.\mathsf{read}()$;
    	\State\label{lcis:estimateround}$estr \gets estr + \ell(xprec_{other},x_{other})$ \Comment{$\ell(xprec_{other},x_{other})$ is the length of the...}
    	\State $xprec_{other} \gets x_{other}$ \Comment{...directed path from $xprec_{other}$ to $x_{other}$}
    \If{$r \leq estr$}\label{lcis:notsolo} 
    	\State $view^{r}[other]\gets  H_{other}[estr - r]$
	\State $c \gets 0$ \Comment{proc. $me$ is behind or in the same round as the other proc.}
    \Else{}\label{lcis:solo} 
    	\State $view^{r}[other] = \bot$
	\State $c \gets c + 1$ \Comment{proc. $me$ is ahead.} 
    \EndIf
    \If{$c = \Delta$}\label{lcis:simend} break;   \Comment{quit the simulation after $\Delta$ consecutive solo rounds}\EndIf
    \EndFor
    \State\label{lcis:returnlabel}\textsl{return} \textsc{label}($r,i,view^1,\ldots,view^r)$
    \End
  \end{algorithmic}
\end{algorithm}

\subsection{Correctness of the simulation}

We show that the simulation presented in Algorithm~\ref{alg:constantISsim} is correct. The sequence of views obtained by the processes may have been obtained in an execution of protocol \m{P} in the IS model. We also show that Algorithm~\ref{alg:constantISsim} is able to generate exponentially many IS executions of length $R$. We conclude by proving Theorem~\ref{thm:lcis}, establishing that 2 processes, wait-free fast $\epsilon$-agreement is achievable with constant size registers.

In the following $i$, denote the index of one process and $j$ the index of the other process. 
We consider an execution $e$ of Algorithm~\ref{alg:constantISsim}, which is seen as a sequence of steps $e=s_1,s_2,\ldots$, where each step $s_k$ is a read or write operation performed on one of the register. For $i \in \{1,2\}$, we denote by $read_i^r$ (resp. $write_i^r$) the $r$th read operation (resp. the $r$th write operation) performed by process $i$. Note that the steps $read_i^r$ and $write_i^r$  are performed by process $i$ during its $r$th iteration of the for loop. The successive steps of process $i$ are thus $write_i^1,read_i^1,write_i^2,read_i^2,\ldots$ At process $i$, the $r$th round of the IS execution is simulated when process $i$ performs the $r$th iteration of the for loop. We use the terms iteration and round interchangeably. 

As a process that simulates $\Delta$ consecutive solo rounds exits the simulation, and simulating a round involves a write step and a read steps, a process cannot be too far ``ahead'' of the other process. In other words, in any prefix of $e$, the difference between the number of steps of both processes is bounded. 
\begin{lemma}
  \label{lem:lcis:nottoofarahead}
  Let $r \geq \Delta$ and suppose that process $i$ performs at least $r$ read operations in $e$. $read_i^r$ is preceded by at least $r-\Delta$ write operations by process $j$. 
\end{lemma}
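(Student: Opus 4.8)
The plan is to argue by contradiction. Suppose process $i$ performs at least $r$ read operations in $e$, yet $read_i^r$ is preceded by strictly fewer than $r-\Delta$ write operations of process $j$. The case $r=\Delta$ is trivial (the claim only asserts ``at least $0$ writes''), so assume $r\geq \Delta+1$, which makes the rounds $r-\Delta,\ldots,r-1$ of process $i$ well-defined. The goal is to show that, under this assumption, each of these $\Delta$ consecutive rounds is a \emph{solo} round for process $i$ (i.e.\ the test ``$r\leq estr$'' at line~\ref{lcis:notsolo} fails), which forces the counter $c$ to reach $\Delta$ and process $i$ to execute \textbf{break} at line~\ref{lcis:simend} before ever reaching round $r$ --- contradicting that it performs $read_i^r$.

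First I would control the variable $estr$ of process $i$ in terms of the true progress of process $j$. For each $k\geq 1$, let $m_k$ be the number of write operations completed by process $j$ before the step $read_i^k$ in $e$; since process $i$ executes $read_i^1, read_i^2,\dots$ in order, the sequence $(m_k)$ is non-decreasing. By atomicity of the registers, and since process $j$ writes the ring position $r\bmod(2\Delta+1)$ in its round $r$ (lines~\ref{lcis:preparex} and~\ref{lcis:write}), the value $x_{other}$ obtained by process $i$ at $read_i^k$ equals $m_k \bmod (2\Delta+1)$ (the initial register value $0$ also fits the case $m_k=0$). Using $\ell(a,b)=(b-a)\bmod(2\Delta+1)$ together with $m_k-m_{k-1}\geq 0$, the increment applied at line~\ref{lcis:estimateround} in round $k$ satisfies $\ell(xprec_{other},x_{other}) = (m_k-m_{k-1})\bmod(2\Delta+1)\leq m_k-m_{k-1}$. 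Summing these increments over rounds $1,\dots,\rho$ telescopes and yields that the value of $estr$ after round $\rho$ is at most $m_\rho$.

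Then I would plug in the contradiction hypothesis. Since $\rho\leq r$ implies $m_\rho\leq m_r$, and the hypothesis gives $m_r\leq r-\Delta-1$, the value of $estr$ at every round $\rho\in\{r-\Delta,\dots,r-1\}$ is at most $r-\Delta-1 < \rho$. Hence in each such round the condition at line~\ref{lcis:notsolo} is false, process $i$ takes the \textbf{else} branch at line~\ref{lcis:solo} and increments $c$, without ever resetting it in between. As $c$ is non-negative throughout and is incremented by one in each of these $\Delta$ consecutive rounds, it attains the value $\Delta$ at some round $\rho'\leq r-1$, so process $i$ executes \textbf{break} at line~\ref{lcis:simend} in round $\rho'$. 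Consequently process $i$ never performs $read_i^r$, the desired contradiction; therefore $read_i^r$ is preceded by at least $r-\Delta$ write operations of process $j$.

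The main obstacle I anticipate is the modular bookkeeping of the ring: making rigorous that a stale or wrapped-around reading of $j$'s ring position can only cause process $i$ to \emph{under}estimate how many rounds $j$ has simulated, never to overestimate it. This is exactly the inequality $(m_k-m_{k-1})\bmod(2\Delta+1)\leq m_k-m_{k-1}$ and the fact that it telescopes to $estr\leq m_\rho$; once that is established, the solo-round criterion and the counter dynamics are read off directly from the pseudocode.
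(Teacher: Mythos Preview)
Your proof is correct and follows essentially the same approach as the paper's: assume for contradiction that fewer than $r-\Delta$ writes of process $j$ precede $read_i^r$, bound the running estimate $estr$ by the actual number of writes of $j$ seen so far, and conclude that a block of consecutive rounds of process $i$ are solo, forcing the \textbf{break} before round $r$. Your treatment is in fact slightly more explicit than the paper's on the modular bookkeeping (the telescoping inequality $estr \leq m_\rho$) and uses the tighter window $\{r-\Delta,\dots,r-1\}$ of $\Delta$ rounds rather than the paper's $\{r-\Delta,\dots,r\}$ of $\Delta+1$ rounds, but these are cosmetic differences.
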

\begin{proof}
  Assume for contradiction that there exists $r \geq \Delta$ such that the $r$th operation by process $i$ is preceded by strictly less than $r-\Delta$ write operations by process $j$. At process $i$, let us consider iteration $r-\delta$ of the for loop, for $\delta \in \{0,\ldots,\Delta\}$. The read $read_i^{r-\delta}$ of $R_j$ performed in this iteration is preceded by at most $r-\Delta-1$ write operations on the same register by process~$j$.

  Theses writes are performed by process $j$ in rounds $1,\ldots, r-\Delta-1$, one write per round (at process $j$,  simulation of round $r$ corresponds to execution of the $r$th  iteration of the for loop.).
  Each  write has as input the current position of process $j$ on the ring, which is incremented at the beginning of every round (line~\ref{lcis:preparex}). Moreover, after reading a position $x$ from $R_j$, process $i$ adds to its estimate $estr$ of the round number of process $j$ the minimal number of rounds required  for process $j$ to move from position $xprec$ to position $x$ on the ring, where $xprec$ is the last known position of process $j$ (line~\ref{lcis:estimateround}). Note that if $d$ is the length of the directed path from node $xprec$ to node $x$,  process $j$ has to perform at least $d$ rounds to move from $xprec$ to $x$ but may have in fact performed $d + k(2\Delta+1)$ rounds, from some $k\geq 0$. That is, process $j$ may have done several laps  without being noticed by process $i$. Therefore, the estimation $estr_i^{r-\delta}$ computed  by process $i$ in round $r-\delta$ is at most $r-\Delta-1$, the number of writes that precedes $read_i ^{r-\delta}$.

Hence, as $\Delta \geq \delta$, we have $r-\delta > r-\Delta-1$, and it follows from the code (line~\ref{lcis:solo}) that   simulated  round $r-\delta$ is solo for process $i$.
  Consequently, process $i$ simulates  $\Delta+1$ consecutive solo rounds $r-\Delta, \ldots,r$, which is not possible, as after $\Delta$ consecutive solo rounds, process $i$ exits the simulation (line~\ref{lcis:simend}). 
\end{proof}

As a corollary of the proof of the previous Lemma, we obtain:
\begin{corollary}
  \label{coro:lcis:quit}
  Let $r \geq \Delta$ and suppose that process $i$ performs at least $r$ reads operations in~$e$. If $read_i^r$ is preceded by exactly $r-\Delta$ writes operations by process $j$, $read_i^r$ is the last step of process $i$ in $e$. 
\end{corollary}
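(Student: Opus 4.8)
The plan is to reuse the counting argument developed inside the proof of Lemma~\ref{lem:lcis:nottoofarahead}, now run ``forward'': from the assumption that exactly $r-\Delta$ writes of process~$j$ precede $read_i^r$, I will deduce that process~$i$ simulates $\Delta$ consecutive solo rounds ending at round~$r$, which forces it to leave the simulation via the \textbf{break} of line~\ref{lcis:simend} right after performing $read_i^r$.

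First I would fix $\delta\in\{0,1,\ldots,\Delta-1\}$ and examine iteration $r-\delta$ at process~$i$; this iteration exists because process~$i$ performs at least $r$ reads and $r-\delta\geq r-\Delta+1\geq 1$. The read $read_i^{r-\delta}$ occurs no later than $read_i^{r}$, hence it is preceded by at most $r-\Delta$ writes of process~$j$ on register $R_j$. Exactly as in the proof of Lemma~\ref{lem:lcis:nottoofarahead} --- process~$j$ advances its ring position once per simulated round, and process~$i$'s estimate $estr$ never exceeds the number of completed writes of~$j$ that it has observed --- this yields $estr\leq r-\Delta$ at the moment process~$i$ evaluates the test of line~\ref{lcis:notsolo} in iteration $r-\delta$. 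Since $\delta<\Delta$ we have $r-\delta>r-\Delta\geq estr$, so the test of line~\ref{lcis:notsolo}, which in iteration $r-\delta$ reads ``$r-\delta\le estr$'', fails; by line~\ref{lcis:solo}, round $r-\delta$ is simulated solo by process~$i$, and the counter~$c$ is incremented in that iteration.

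Consequently the $\Delta$ consecutive rounds $r-\Delta+1,\ldots,r$ are all simulated solo by process~$i$, so $c$ is incremented $\Delta$ times in a row and reaches the value~$\Delta$ by the end of iteration~$r$ at the latest. Since $c$ increases by one at a time and is tested immediately after each increment (line~\ref{lcis:simend}), the first iteration in which $c=\Delta$ terminates the loop through \textbf{break}. That iteration cannot be earlier than~$r$, for otherwise process~$i$ would stop before performing $read_i^r$, contradicting the hypothesis that it performs at least $r$ reads; hence it is exactly iteration~$r$. As the only steps in iteration~$r$ are $write_i^r$ followed by $read_i^r$, and the \textbf{break} occurs after $read_i^r$, the step $read_i^r$ is indeed the last step of process~$i$ in~$e$.

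I do not expect a genuine obstacle: the statement is a direct by-product of the estimate bound ($estr$ no larger than the number of observed writes of the other process) already established in the proof of Lemma~\ref{lem:lcis:nottoofarahead}. The only points requiring a little care are that the loop variable in the test of line~\ref{lcis:notsolo} is the simulated round number $r-\delta$, so it is the inequality $r-\delta>estr$ that makes the round solo, and that the break of line~\ref{lcis:simend} tests \emph{equality} with $\Delta$, which is met because $c$ moves in unit steps and is checked after every increment.
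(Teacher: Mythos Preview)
Your proof is correct and follows the same approach as the paper: you reuse the estimate bound from the proof of Lemma~\ref{lem:lcis:nottoofarahead} to conclude that rounds $r-\Delta+1,\ldots,r$ are all simulated solo by process~$i$, forcing the \textbf{break} at line~\ref{lcis:simend}. The paper's own proof is a single sentence that points back to that lemma's argument; your version simply makes the counter reasoning (why $c$ hits $\Delta$ exactly at iteration~$r$ and not earlier, via the hypothesis that $read_i^r$ actually occurs) explicit.
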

\begin{proof}
  As seen in the proof of Lemma~\ref{lem:lcis:nottoofarahead}, process $i$ simulate rounds $r-\Delta+1,\ldots,r$ as solo rounds. It thus quits the simulation at the end of round $r$, and therefore $read_i^r$ is its last step in $e$. 
\end{proof}

The next Lemma establishes that between two successive read operations by the same process~$i$, the other process $j$ takes a bounded number of steps. Intuitively, this follows from Lemma~\ref{lem:lcis:nottoofarahead} which implies that process $j$ that cannot be too far behind process $i$ in term of number of steps, and implies also that process $j$ cannot perform  too many steps in isolation. Otherwise, process $j$ will simulate $\Delta$ solo steps and will quit the simulation.

\begin{lemma}
  \label{lem:lcis:between}
  Let $k>0$ and suppose that process $i$ performs at least $k+1$ read operations in $e$. The number of write steps by process $j$ between $read_i^k$ and $read_i^{k+1}$ is at most $2\Delta$.
\end{lemma}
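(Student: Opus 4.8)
The plan is to pin down, in terms of $k$ and $\Delta$, both the index of the first write by $j$ that occurs after $read_i^k$ and the index of the last write by $j$ that occurs before $read_i^{k+1}$, and to observe that the quantity to be bounded is the difference of these two indices (plus one). Writing $e=s_1,s_2,\ldots$, and letting $write_\ell^m,read_\ell^m$ denote the $m$-th write/read of process $\ell$, I first note that, since $i$ performs $read_i^{k+1}$, it does not break at the end of round $k$, so between $read_i^k$ and $read_i^{k+1}$ process $i$ executes \emph{exactly one} shared step, namely $write_i^{k+1}$; hence exactly $k+1$ writes of $i$ precede $read_i^{k+1}$. The writes of $j$ strictly between $read_i^k$ and $read_i^{k+1}$ carry consecutive indices, say $write_j^p,\ldots,write_j^q$ (if there are none or one such write the claim is immediate for $\Delta\ge 1$), and with $W_r$ denoting the number of writes of $j$ preceding $read_i^r$ (non-decreasing in $r$) one has $p-1=W_k$ and $q=W_{k+1}$, so the count is $W_{k+1}-W_k=q-p+1$.

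For the upper bound on $q$, I would look at the step $read_j^{q-1}$ (valid since $q>p\ge 1$). It follows $write_j^{q-1}$, which is one of the between-writes, and it precedes $write_j^q$, which precedes $read_i^{k+1}$; therefore $read_j^{q-1}$ lies strictly between $read_i^k$ and $read_i^{k+1}$, so at most $k+1$ writes of $i$ precede it. Since $write_j^q$ comes after $read_j^{q-1}$ in $j$'s execution, $read_j^{q-1}$ is not the last step of $j$; so by Corollary~\ref{coro:lcis:quit} (applied with the roles of $i$ and $j$ exchanged) it is \emph{not} preceded by exactly $(q-1)-\Delta$ writes of $i$, and combining with Lemma~\ref{lem:lcis:nottoofarahead} it is preceded by at least $q-\Delta$ writes of $i$ (this for $q-1\ge\Delta$; the case $q\le\Delta$ gives $q\le k+\Delta+1$ trivially since $k\ge 1$). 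Hence $q-\Delta\le k+1$, i.e. $q\le k+\Delta+1$.

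For the lower bound on $p$, I would use the estimate already established inside the proof of Lemma~\ref{lem:lcis:nottoofarahead}, namely that the ring-distance accumulated by $i$ never exceeds the true number of moves (hence writes) of $j$: $estr_i^r\le W_r$ for all $r$. Consequently any round $r$ that is not solo for $i$ satisfies $r\le estr_i^r\le W_r$. Now because $i$ performs $read_i^{k+1}$ it did not break during rounds $1,\ldots,k$, so when $k\ge\Delta$ the rounds $k-\Delta+1,\ldots,k$ cannot all be solo; choosing a non-solo one $r^\star$ in this range and using monotonicity gives $W_k\ge W_{r^\star}\ge r^\star\ge k-\Delta+1$ (when $k<\Delta$ this inequality holds trivially as $k-\Delta+1\le 0\le W_k$). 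Thus $p=W_k+1\ge k-\Delta+2$, and therefore the number of writes of $j$ between the two reads is $q-p+1\le (k+\Delta+1)-(k-\Delta+2)+1=2\Delta$, as claimed.

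I expect the main obstacle to be squeezing out the last unit: a naive argument that only invokes Lemma~\ref{lem:lcis:nottoofarahead} yields the weaker bound $2\Delta+1$, so one really needs Corollary~\ref{coro:lcis:quit} to exclude the boundary case $w_i(read_j^{q-1})=(q-1)-\Delta$, together with the precise observation that precisely one write of $i$ (that is, $write_i^{k+1}$) separates $read_i^k$ from $read_i^{k+1}$. The remaining work — checking which $j$-steps fall in the window $(read_i^k,read_i^{k+1})$, that the intervening $j$-writes have consecutive indices, monotonicity of $W_r$, and the inequality $estr_i^r\le W_r$ — is routine bookkeeping rather than anything deep.
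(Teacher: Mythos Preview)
Your proof is correct and follows essentially the same approach as the paper: bound the index $p$ of the first and $q$ of the last write of $j$ in the window, then compute $q-p+1\le 2\Delta$; your upper bound on $q$ via $read_j^{q-1}$, Lemma~\ref{lem:lcis:nottoofarahead}, and Corollary~\ref{coro:lcis:quit} is exactly the paper's argument. The only minor difference is the lower bound on $p$: the paper obtains $W_k\ge k-\Delta+1$ by applying Lemma~\ref{lem:lcis:nottoofarahead} and Corollary~\ref{coro:lcis:quit} directly to $read_i^k$ (which is not $i$'s last step), whereas you take a slightly longer route through the inequality $estr_i^r\le W_r$ and the existence of a non-solo round in $\{k-\Delta+1,\ldots,k\}$; both yield the same bound.
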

\begin{proof}
  The Lemma is trivially true is process $j$ does not take any step between $read_i^k$ and $read_i^{k+1}$. Let us assume that process $j$ writes at least once between  $read_i^k$ and  $read_i^{k+1}$. 

  \begin{itemize}
  \item 
  Let $write_j^\ell$ be the first write by process $j$ that follows $read_i^k$. Since $read_i^k$ is not the last step of process $i$, it follows from Lemma~\ref{lem:lcis:nottoofarahead} and Corollary~\ref{coro:lcis:quit} that $read_i^k$ is preceded by at least $k-\Delta+1$ write steps by process $j$. Therefore, we have $k-\Delta+2 \leq \ell$.
\item
  Let us now consider  $write_j^{\ell'}$  the last write step by process $j$ that precedes $read_i^{k+1}$. If $\ell = \ell'$, the Lemma is true. In the remaining, we suppose that $\ell < \ell'$. By the code, the step of process $j$ preceding $write_j^{\ell'}$ is the read step $read_j^{\ell'-1}$. This step occurs between $read_i^{k}$ and $read_i^{k+1}$. It is thus preceded by at most $k+1$ write steps by process $i$ ($k$ writes before $read_i^k$, and at most one write that occurs between $read_i^k$ and $read_i^{k+1}$).   By Lemma~\ref{lem:lcis:nottoofarahead}, we have $\ell'-1 \leq k+1 + \Delta$. As $read_j^{\ell'-1}$ is not the last step of process $j$, the inequality is strict by Corollary~\ref{coro:lcis:quit}, that is $\ell' \leq k+1 + \Delta$.
\end{itemize}
The number of writes step by process $j$ between $read_i^{k}$ and $read_i^{k+1}$ is $\ell'-\ell+1$
From the two items above, we have
$$\ell' - \ell +1 \leq k+1 + \Delta -(k-\Delta+2) +1 =  2\Delta,$$ which concludes the proof of the Lemma. 
\end{proof}

An immediate consequence of Lemma~\ref{lem:lcis:between} is that process $j$ cannot perform a complete lap of the ring without being noticed by process $i$. 
Indeed, each time process $j$ moves on the ring, it writes its new position to its register $R_j$. Lemma~\ref{lem:lcis:between} establishes that between two reads of $R_j$ by process $i$, at most $2\Delta$ writes are performed by process $j$. Hence, as the size of the ring is $2\Delta+1$, process $j$ cannot do a complete lap between two reads by process $i$. This observation is a key ingredient in the proof of the next Lemma.

\begin{lemma}
  \label{lem:lcis:accurate}
  Let $estr_i^r$ denote the estimation of the round of process $j$ computed by process $i$ in its $r$th iteration of the for loop. If there is no write step by process $j$ preceding $read_i^r$, then $estr_i^r = 0$. Otherwise,  for $write_j^{r'}$ defined as the  last write by process $j$ preceding $read_i^r$, we have $est_i^r = r'$. 
\end{lemma}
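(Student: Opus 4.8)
The plan is to prove the two claims by induction on $r$, the iteration number of process $i$, using Lemma~\ref{lem:lcis:between} (which forbids process $j$ from completing a full lap of the $(2\Delta+1)$-ring between two consecutive reads of $R_j$ by process $i$) as the central tool. First I would set up notation: let $\rho_0 = 0 < \rho_1 < \rho_2 < \cdots$ be the indices such that $write_j^{\rho_k}$ is the last write by process $j$ preceding $read_i^k$ (with the convention that if no write of $j$ precedes $read_i^k$ we set $\rho_k = 0$, meaning ``no writes yet''). The invariant I want to maintain is: after process $i$ executes line~\ref{lcis:estimateround} in its $r$th iteration, $estr_i^r = \rho_r$, and simultaneously $xprec_{other}$ has been updated to $\rho_r \bmod (2\Delta+1)$, i.e.\ the position on the ring that process $j$ had after writing in its round $\rho_r$ (recall $j$ writes position $r' \bmod (2\Delta+1)$ in its $r'$th iteration, by line~\ref{lcis:preparex}–\ref{lcis:write}).

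For the base case $r=1$: process $i$'s first read $read_i^1$ either is preceded by no write of $j$, in which case it reads the initial value $(0,[\bot,\dots,\bot])$ of $R_j$, so $x_{other}=0=xprec_{other}$, hence $\ell(xprec_{other},x_{other})=0$ and $estr_i^1 = 0 = \rho_1$; or it is preceded by writes of $j$, the last being $write_j^{\rho_1}$, so it reads the position $\rho_1 \bmod (2\Delta+1)$. By Lemma~\ref{lem:lcis:between} applied "from the start" (no prior read of $i$), at most $2\Delta$ writes of $j$ precede $read_i^1$ — actually one needs the slightly stronger statement that $\rho_1 \le 2\Delta$, which follows from Lemma~\ref{lem:lcis:nottoofarahead}/Corollary~\ref{coro:lcis:quit} style reasoning or directly from Lemma~\ref{lem:lcis:between} with $k=0$ (treating $read_i^0$ as the beginning of the execution) — so $\rho_1 \le 2\Delta < 2\Delta+1$, hence $\ell(0, \rho_1 \bmod (2\Delta+1)) = \rho_1$, giving $estr_i^1 = \rho_1$ and $xprec_{other} \gets \rho_1 \bmod(2\Delta+1)$ as required.

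For the inductive step, assume the invariant holds after iteration $r$, so before $read_i^{r+1}$ we have $estr = \rho_r$ and $xprec_{other} = \rho_r \bmod (2\Delta+1)$. If no write of $j$ occurs between $read_i^r$ and $read_i^{r+1}$, then $\rho_{r+1}=\rho_r$, process $i$ reads the same position, $\ell(xprec_{other},x_{other}) = \ell(p,p) = 0$, and the invariant persists. Otherwise $\rho_{r+1} > \rho_r$; by Lemma~\ref{lem:lcis:between} the number of writes of $j$ strictly between $read_i^r$ and $read_i^{r+1}$ is at most $2\Delta$, so $\rho_{r+1} - \rho_r \le 2\Delta$. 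The value read is $x_{other} = \rho_{r+1} \bmod (2\Delta+1)$, and since $0 < \rho_{r+1}-\rho_r \le 2\Delta < 2\Delta+1$, the directed-path length on the ring satisfies $\ell(\rho_r \bmod(2\Delta+1),\, \rho_{r+1}\bmod(2\Delta+1)) = \rho_{r+1}-\rho_r$ — this is exactly where "no full lap" is used: were $\rho_{r+1}-\rho_r \ge 2\Delta+1$ the ring arithmetic would undercount. Hence line~\ref{lcis:estimateround} yields $estr_i^{r+1} = \rho_r + (\rho_{r+1}-\rho_r) = \rho_{r+1}$, and $xprec_{other}$ is updated to $\rho_{r+1}\bmod(2\Delta+1)$, closing the induction. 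Translating the invariant back: $estr_i^r = \rho_r$, which is $0$ when no write of $j$ precedes $read_i^r$ and equals $r'$ when $write_j^{r'}$ is that last preceding write — exactly the statement of the lemma.

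The main obstacle I anticipate is the careful bookkeeping around the ring arithmetic — specifically making rigorous the claim $\ell(\rho_r\bmod(2\Delta+1),\rho_{r+1}\bmod(2\Delta+1)) = \rho_{r+1}-\rho_r$ whenever $0 \le \rho_{r+1}-\rho_r \le 2\Delta$, and correctly handling the boundary conventions when $\rho_r = 0$ (no writes yet) versus $\rho_r \ge 1$. A secondary subtlety is invoking Lemma~\ref{lem:lcis:between} with $k=0$ for the base case, since that lemma is stated for $k>0$ with $read_i^k$ a genuine read; I would either prove the $k=0$ instance separately (directly bounding $\rho_1$ via the solo-round argument of Lemma~\ref{lem:lcis:nottoofarahead}) or note it follows by treating the start of $e$ as a virtual $0$th read. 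Everything else is routine unwinding of the pseudocode in Algorithm~\ref{alg:constantISsim}.
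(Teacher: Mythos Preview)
Your proposal is correct and follows essentially the same approach as the paper: induction on $r$, with Lemma~\ref{lem:lcis:between} supplying the key bound $\rho_{r+1}-\rho_r \leq 2\Delta < 2\Delta+1$ in the inductive step, and a separate argument for the base case. The paper handles the base case exactly as you anticipate in your ``secondary subtlety''---it invokes Lemma~\ref{lem:lcis:nottoofarahead} (together with Corollary~\ref{coro:lcis:quit}) directly to bound the number of writes by $j$ preceding $read_i^1$ by $\Delta+1$, rather than trying to extend Lemma~\ref{lem:lcis:between} to $k=0$.
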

\begin{proof}
  We prove the Lemma by induction on the iteration numbers $r$ of the for loop at process~$i$.
  \begin{itemize}
  \item $r=1$. If there is no write by process $j$ before $R_j$ is read by process $i$ in round $1$, process $i$ reads $(0,[\bot,\ldots,\bot])$ and hence $estr_i^1 = 0$ (line~\ref{lcis:estimateround}), as desired. Otherwise, let $r'>0$ be the number of times process $j$ has written to $R_j$ before step $read_i^1$.   As process $i$ performs a single write steps before $read_i^1$, by Lemma~\ref{lem:lcis:nottoofarahead}, $r'$ cannot exceed $\Delta +1 < 2\Delta+1$. Therefore, process $i$ reads $r'$ from $R_j$ in iteration $1$ and we have $estr_i^{1} = r'$. 
  \item Let $r> 1$ and suppose that the Lemma is true for iteration $r-1$. By the induction hypothesis, $estr_i^{r-1} = r'$ where $r'$ is the number of times process $j$ has written before step $read_i^{r-1}$. Let $d$ be the number of times process $j$ writes  between the steps $read_i^{r-1}$ and $read_i^r$. It follows from  Lemma~\ref{lem:lcis:between} that $d < 2\Delta+1$. That is, the new position of process $j$ on the ring as read by process $i$ in step $read_i^r$ is at distance $d$ from its previous known position. From the code (line~\ref{lcis:estimateround}), we have $estr_i^r = est_i^{r-1} + d = r' + d$ and the Lemma follows for iteration $r$ as $d+r'$ is the number of writes by process $j$ preceding $read_i^r$. \qedhere
  \end{itemize}
\end{proof}

\begin{corollary}
  \label{coro:lcis:diff}
If $estr_i^r \geq r$ in round $r \geq 1$ then $est_i^r - r \leq \Delta$. 
\end{corollary}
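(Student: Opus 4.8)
The plan is to obtain this as the ``dual'' of Lemma~\ref{lem:lcis:nottoofarahead}: that lemma bounds how far process $j$ can lag behind process $i$, whereas here I need to bound how far process $j$ can be \emph{ahead} of process $i$. This is the same statement with the two process indices exchanged, which is legitimate because Algorithm~\ref{alg:constantISsim} and the notations $read_\bullet^\bullet$, $write_\bullet^\bullet$ are symmetric in the process names. The off-by-one that will appear is exactly what Corollary~\ref{coro:lcis:quit} is there to absorb.

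First I would unpack the hypothesis using Lemma~\ref{lem:lcis:accurate}. Since $estr_i^r \geq r \geq 1 > 0$, it cannot be that no write of process $j$ precedes $read_i^r$; hence let $write_j^{r'}$ be the last write of process $j$ preceding $read_i^r$, so that $estr_i^r = r'$ and $r' \geq r$. It remains to prove $r' - r \leq \Delta$. If $r' \leq \Delta$ this is immediate, as $estr_i^r - r = r'-r \le r'-1 < \Delta$ because $r \geq 1$; so from now on assume $r' \geq \Delta+1$.

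Now I would look at the step $read_j^{r'-1}$, the read performed by process $j$ in its $(r'-1)$-st iteration. In the step sequence of process $j$ it is immediately followed by $write_j^{r'}$, which in turn precedes $read_i^r$; in particular $read_j^{r'-1}$ is not the last step of process $j$ in $e$, and process $j$ performs at least $r'-1 \geq \Delta$ read operations in $e$. Applying Lemma~\ref{lem:lcis:nottoofarahead} with the two processes interchanged, $read_j^{r'-1}$ is preceded by at least $(r'-1)-\Delta$ write steps of process $i$; and by Corollary~\ref{coro:lcis:quit} (same interchange) it cannot be preceded by \emph{exactly} $(r'-1)-\Delta$ such writes, since then $read_j^{r'-1}$ would be the last step of process $j$. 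Hence $read_j^{r'-1}$ is preceded by at least $(r'-1)-\Delta+1 = r'-\Delta$ write steps of process $i$.

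Finally, each of those $r'-\Delta$ writes of process $i$ precedes $read_j^{r'-1}$, which precedes $write_j^{r'}$, which precedes $read_i^r$; but the writes of process $i$ that precede $read_i^r$ are precisely $write_i^1,\dots,write_i^r$, i.e.\ exactly $r$ of them. Therefore $r'-\Delta \leq r$, that is $estr_i^r - r = r'-r \leq \Delta$, as claimed. I do not expect a genuine obstacle here; the only point needing care is the $+1$ improvement of the bound, which is supplied by Corollary~\ref{coro:lcis:quit}, together with the routine verification that the symmetric ($i \leftrightarrow j$) invocations of Lemma~\ref{lem:lcis:nottoofarahead} and Corollary~\ref{coro:lcis:quit} apply (process $j$ has indeed taken at least $r'-1$ reads, and $read_j^{r'-1}$ is not its last step).
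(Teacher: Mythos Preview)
Your proof is correct and follows essentially the same approach as the paper's: invoke Lemma~\ref{lem:lcis:accurate} to identify $estr_i^r$ with the number $r'$ of writes by process~$j$ preceding $read_i^r$, look at $read_j^{r'-1}$, and apply Lemma~\ref{lem:lcis:nottoofarahead} together with Corollary~\ref{coro:lcis:quit} (with the roles of $i$ and $j$ swapped) to bound from below the number of writes of process~$i$ preceding it, hence preceding $read_i^r$. Your version is slightly more careful than the paper's in that you separate out the case $r'\leq\Delta$ so that the hypothesis $r'-1\geq\Delta$ of Lemma~\ref{lem:lcis:nottoofarahead} is explicitly met, whereas the paper only singles out $r'=1$ and implicitly relies on the bound being vacuous when $r'-\Delta\leq 0$.
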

\begin{proof}
  $estr_i^r$ is the number of writes by process $j$ that precedes the read $read_i^r$ of $R_j$ performed by process $i$ in round $r$ (Lemma~\ref{lem:lcis:accurate}). Let $r' = est_i^r$.
If $r' = 1$, $r=1$ and the Lemma is true. Otherwise, $r'\geq 2$ and the $r'$th write by process $j$ is preceded by  the read step $read_j^{r'-1}$. Since $read_j^{r'-1}$ is not the last step of process $j$, we have that this step is preceded by at least $(r'-1) -\Delta +1 = r'-\Delta$ write steps by process $i$ (Lemma~\ref{lem:lcis:nottoofarahead} and corollary~\ref{coro:lcis:quit}). Each of these writes precede the step $read_i^r$. Therefore, $r \geq r' - \Delta$, from which we conclude that $\Delta \geq r'-r$, that is $\Delta \geq estr_i^r - r$. 
\end{proof}

Let $view_i^r$ denote the $view$ computed by process $i$ in round $r$. In the simulated execution, $view_i^r$ is the snapshot of the $r$th memory $M_r$ obtained by process $i$. Let also $v_i^r$ denote the value written to $M_r[i]$ in the simulated execution (line~\ref{lcis:Pwrite}.). We show that $view$s are valid  and are indeed snapshots:

\begin{lemma}
  \label{lem:lcis:views}
  Suppose that process $i$ completes round $r$ and let $view_i^r$ be the view it gets. We have $view_i^r[i]  = v_i^r$, and if $view_i^r[j] \neq \bot$ then  $view_i^r[j] = v_j^r$. Moreover, if process $j$ also completes round $r$, then there exists $k \in \{1,2\}$ such that $view_k^r = [v_1^r,v_2^r]$. 
\end{lemma}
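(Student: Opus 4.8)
The plan is to analyze the three read-and-write interleavings that Algorithm~\ref{alg:constantISsim} can produce in a single simulated round~$r$, and to verify that each yields a view consistent with a genuine IIS round. First, the claim $view_i^r[i] = v_i^r$ is immediate from line~\ref{lcis:seeme}: process~$i$ always records its own freshly computed value $v = \textsc{write}(r,me,\dots)$ into $view^r[me]$, and this is exactly $v_i^r$. For the entry corresponding to the other process~$j$, there are two cases governed by the test $r \le estr_i^r$ at line~\ref{lcis:notsolo}. If $estr_i^r < r$ (the solo branch, line~\ref{lcis:solo}), then $view_i^r[j] = \bot$ and there is nothing to prove. If $estr_i^r \ge r$, then by Lemma~\ref{lem:lcis:accurate} the value $estr_i^r$ equals the number $r'$ of writes process~$j$ has performed before $read_i^r$, so $r' \ge r$; and by Corollary~\ref{coro:lcis:diff} we have $0 \le r' - r \le \Delta$. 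Process~$i$ then sets $view^r[j] \gets H_{other}[estr_i^r - r] = H_j[r'-r]$. I need to argue that position $r'-r$ of the history $H_j$ that process~$i$ just read holds precisely the value $v_j^r$ that process~$j$ wrote to $M_r[j]$ in the simulated round~$r$. This follows from the history-maintenance discipline of lines~\ref{lcis:prepareH}--\ref{lcis:write}: at the end of its round $r'$, process~$j$'s array $H_j$ satisfies $H_j[\delta] = v_j^{r'-\delta}$ for $0 \le \delta \le \Delta$ (a straightforward induction on $r'$, using that one slot is shifted out and slot $0$ is overwritten with the current value each round, and that $r' \le \Delta+1$ suffices for the claim when $r' \le \Delta$, with $\bot$ padding otherwise). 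Since $0 \le r'-r \le \Delta$, the entry $H_j[r'-r] = v_j^{r'-(r'-r)} = v_j^r$, as required.

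Second, for the inclusion-style claim: suppose both processes complete round~$r$. I want to produce $k \in \{1,2\}$ with $view_k^r = [v_1^r, v_2^r]$, i.e.\ some process sees both values in round~$r$. Consider the two read operations $read_1^r$ and $read_2^r$; without loss of generality say $read_1^r$ precedes $read_2^r$ in the execution~$e$. By the code, in its round~$r$ process~$1$ first executes $write_1^r$ (writing position $x_1 = r \bmod (2\Delta+1)$ and history $H_1$ with $H_1[0] = v_1^r$) and only then executes $read_1^r$; hence $write_1^r$ precedes $read_1^r$, which precedes $read_2^r$. So when process~$2$ performs $read_2^r$, process~$1$ has already performed at least $r$ writes. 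By Lemma~\ref{lem:lcis:accurate}, $estr_2^r$ equals the number of writes by process~$1$ preceding $read_2^r$, which is $\ge r$; thus the test at line~\ref{lcis:notsolo} succeeds for process~$2$ in round~$r$, giving $view_2^r[1] \neq \bot$, and by the first part $view_2^r[1] = v_1^r$ while $view_2^r[2] = v_2^r$. Hence $k = 2$ works.

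I expect the main obstacle to be the careful bookkeeping in the first part: pinning down exactly what $H_j$ contains at the moment process~$i$ reads it, and reconciling the offset $estr_i^r - r$ with the round index. The subtlety is that $estr_i^r$ is only an estimate of process~$j$'s round and could in principle lag the true write count; this is precisely why Lemma~\ref{lem:lcis:accurate} (the estimate equals the true number of preceding writes of $j$) and Corollary~\ref{coro:lcis:diff} (the gap is at most $\Delta$, so the needed slot still exists in the bounded history) are invoked. Once those are in hand, and once the history invariant $H_j[\delta] = v_j^{r'-\delta}$ is established by induction, the rest is direct substitution. The second part is comparatively easy, relying only on the fixed per-round order \textsf{write} then \textsf{read} within the for-loop body together with Lemma~\ref{lem:lcis:accurate}.
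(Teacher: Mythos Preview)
Your proposal is correct and follows essentially the same route as the paper: the first two claims are handled identically (via line~\ref{lcis:seeme}, then Lemma~\ref{lem:lcis:accurate} plus Corollary~\ref{coro:lcis:diff} to locate $v_j^r$ inside the bounded history $H_j$), and your history invariant $H_j[\delta]=v_j^{r'-\delta}$ is exactly what the paper asserts with ``due to how histories are updated.'' For the third claim the paper argues contrapositively (if $view_i^r[j]=\bot$ then $estr_i^r<r$, so $read_j^r$ comes after $write_i^r$, whence $view_j^r[i]\neq\bot$), whereas you order $read_1^r$ and $read_2^r$ directly and show the later reader sees both values; these are the same observation phrased dually.
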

\begin{proof}
The fact that $view_i^r[i]  = v_i^r$ directly follows from the code (line~\ref{lcis:Pwrite} and line~\ref{lcis:seeme}).

Suppose that $view_i^r[j] \neq \bot$. By the code, $r \leq estr_i^r$ (line~\ref{lcis:notsolo}). Let $r' = estr_i^r$. It follows from lemma~\ref{lem:lcis:accurate} that process $j$ writes to $R_j$ exactly $r'$ times  before $R_j$ is read by process $i$ in round $r$. Hence,  $H_j$, the history reads from $R_j$ in round $r$ by process $j$ is the history written by process $j$ in round $r'$. Therefore, due to how histories are updated (line~\ref{lcis:prepareH}), $H_j = [v_j^{r'}, \ldots, v_j^{r'-k}, \ldots, v_j^{r'-\Delta}]$. By corollary~\ref{coro:lcis:diff}, $r'-r = estr_i^r - r \leq \Delta$ and hence $H_{j}[estr_i^r - r]$ is well defined since $H_j$ has $\Delta+1$ entries indexed $0,\ldots,\Delta$. Finally,  $view_i^r[j] = H_j[estr_i^r - r] = H_j[r' - r] = v_j^{r'-(r'-r)}=v_j^r$, as required. 

We conclude the proof by showing that the simulated round $r$ is solo for at most one process. Suppose that $view_i[j] = \bot$. By the code, $r > estr_i^r = r'$ (line~\ref{lcis:solo}). By Lemma~\ref{lem:lcis:accurate}, process $j$ performs exactly $r'<r$ writes before process $i$ reads $R_j$ in round $r$. That is, when process $i$ reads $R_j$ in round $r$, process $j$ has not started its simulation of round $r$. Hence, the $r$th read of $R_i$ by process $j$ occurs after process $i$ has written $r$ times. Therefore, $r \leq estr_j^r $, from we conclude by the code (line~\ref{lcis:notsolo}) that $view_j^r[i] \neq \bot$. \qedhere
\end{proof}

For a given $R$, in some executions of Algorithm~\ref{alg:constantISsim},  processes may not simulate $R$ rounds of the IS model. This occurs when a process simulates $\Delta$ consecutive solo rounds. Nevertheless, we show that the simulation is able to generate exponentially many IS executions of length $R$. That is, there exists a set of IS executions  $\m{E}$ of size at least $2^R$ such that, for each $E \in \m{E}$:
\begin{itemize}
\item The length of $E$ is $R$:  both processes perform $R$ rounds in $E$ and,
\item There exists an execution $e$ of Algorithm~\ref{alg:constantISsim} in which the simulated IS execution is $E$. 
\end{itemize}

\begin{lemma}
  \label{lem:lcis:exp}
  Let $\Delta \geq 2$ and $R > 0$. There exists $2^R$ IS executions of length $R$ that are simulated by Algorithm~\ref{alg:constantISsim}. 
\end{lemma}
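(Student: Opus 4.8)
The plan is to count, by induction on $R$, the number of IS executions of length exactly $R$ that arise as the simulated execution in some run of Algorithm~\ref{alg:constantISsim}. Recall that in each round there are three possible ``interleaving patterns'' in the IS model for two processes: process $1$ solo, process $2$ solo, or both see each other. An IS execution of length $R$ is thus a word $w\in\{S_1,S_2,B\}^R$. The simulation realizes exactly those words in which neither process is solo for $\Delta$ or more \emph{consecutive} rounds, i.e.\ words avoiding the factors $S_1^\Delta$ and $S_2^\Delta$; conversely, I will argue every such word is realizable. The forward direction (realizability) is the substantive part: given a word $w$ avoiding $S_1^\Delta$ and $S_2^\Delta$, I must exhibit a schedule of read/write steps of Algorithm~\ref{alg:constantISsim} whose induced $view$s match $w$. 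This is done round by round: to realize a ``both'' round, let the slower process take its write step, then both reads happen after both writes; to realize an ``$i$ solo'' round, let process $i$ perform its write and read while process $j$ is suspended one or more rounds behind, which is consistent precisely because $w$ never forces $\Delta$ consecutive solo rounds, so by Lemmas~\ref{lem:lcis:nottoofarahead}--\ref{lem:lcis:views} the views computed are exactly the claimed snapshots and no process quits early. The key point is that Lemma~\ref{lem:lcis:between} guarantees the ring of size $2\Delta+1$ is large enough that the ``behind'' process is always located correctly, so the solo/non-solo decision in the code matches the intended pattern.

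Then it remains to count the words in $\{S_1,S_2,B\}^R$ avoiding $S_1^\Delta$ and $S_2^\Delta$, and to show this count is at least $2^R$ when $\Delta\ge 2$. The cleanest way is to exhibit an explicit injection from $\{0,1\}^R$ into the set of admissible words. For $\Delta\ge 2$ it suffices to avoid $S_1S_1$ and $S_2S_2$ (avoiding two consecutive solo rounds of the same process already implies avoiding $S_i^\Delta$ for all $\Delta\ge 2$), so I only need $2^R$ admissible words with no two equal consecutive solo rounds. Map a bit string $b_1b_2\cdots b_R$ to the word whose $r$th letter is $B$ if $b_r=0$ and is $S_1$ or $S_2$ (alternating among the solo letters in order of occurrence) if $b_r=1$: between any two solo rounds there is flexibility, and even if two solo rounds are consecutive we use different process indices, so the factors $S_1S_1,S_2S_2$ never occur. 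Distinct bit strings give distinct words since the positions of the $B$'s are determined, and the solo-letter pattern is determined by the rule; hence the map is injective and the number of admissible words of length $R$ is at least $2^R$. Combined with realizability, each such word is the simulated IS execution of some run of Algorithm~\ref{alg:constantISsim}, which proves the lemma.

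The main obstacle I expect is the realizability direction: carefully constructing the interleaving schedule for an arbitrary admissible word and verifying, using Lemmas~\ref{lem:lcis:nottoofarahead}, \ref{lem:lcis:accurate}, \ref{coro:lcis:diff}, and \ref{lem:lcis:views}, that (i) no process triggers the ``$c=\Delta$'' early exit, (ii) the ring-position estimates $estr$ are exactly right, and (iii) the computed $view_i^r$ equals the intended snapshot in every round. Once realizability is in hand, the counting is a short combinatorial argument via the explicit injection above. I would present realizability by describing a canonical schedule (process $1$ always one ``logical round'' ahead except where $w$ dictates otherwise) and then invoking the already-proved structural lemmas rather than re-deriving them.
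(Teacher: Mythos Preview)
Your approach is correct but takes a more general route than the paper. The paper does not characterize the full set of realizable IS executions; instead it argues directly by induction on the round number: any simulated prefix in which the current run of consecutive solo rounds has length at most~$1$ can be extended in two ways---either schedule both writes before both reads (giving a non-solo round, counter reset to~$0$), or let the process that was \emph{not} just solo write and read first (giving a solo round for that process, counter equal to~$1$). This immediately yields a binary tree of $2^R$ length-$R$ executions, and the only verification needed is for these two explicit one-step schedule extensions. Your plan---characterize all realizable words as those avoiding $S_i^{\Delta}$ and then inject $\{0,1\}^R$ via $0\mapsto B$, $1\mapsto$ alternating solo letter---lands on exactly the same $2^R$ executions, but you have set yourself the harder task of realizing \emph{every} admissible word. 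A simple lockstep schedule (both processes complete round~$r$ before either begins round~$r+1$, with intra-round order dictated by~$w_r$) does realize any admissible word, so the extra generality costs little; but note that you do not need Lemmas~\ref{lem:lcis:nottoofarahead}--\ref{lem:lcis:views} here, since those concern \emph{arbitrary} schedules of the simulation, whereas for realizability you control the schedule and can read off the $estr$ values and solo decisions by direct inspection.
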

\begin{proof}
  Let $E^r$ denote the prefix of a simulated execution in which both processes have simulated exactly $r$ rounds. We assign to $E^r$ an integer $c(E^r) \in \{0,\ldots,\Delta\}$ which counts the number of consecutive solo rounds by the same process $E^r$ ends with. More precisely, if in the last round both processes have the same view, $c(E^r) = 0$. If, for some $i \in \{1,2\}$, in rounds $r'+1, \ldots, r$, process $i$ is solo but not in round $r'$, $c(E^r) = r-r'$.

  Any prefix $E^r$ with $c(E^r) \in \{0,1\}$ can be extended in two prefixes  $E_0'$ and $E'_1$ in which processes perform one more round, and such that $c(E_0') = 0$ and $c(E'_1) = 1$. As both processes perform $r$ rounds in $E^r$, and at the end of $E^r$ no process has performed $\Delta \geq 2$ consecutive solo rounds, no process quits the simulation at the end of its $r$th iteration of the for loop. Starting from $E^r$, a new round can therefore by simulated. By appropriately scheduling the read and write steps of processes $1$ and $2$, the new round is solo for process $1$, or for process $2$  or for neither of them. For example, suppose that $c(E^r) =1$ and process $i$ is solo in round $r$. $E_0'$ is obtained by first having both processes write to their registers (steps $write_1^{r+1}$ and $write_2^{r+1}$) and then letting them read (steps $read_1^{r+1}$ and $read_2^{r+1}$). The resulting simulated round is solo for neither $p_1$ nor $p_2$. For $E_1'$, we let process $j$ writes and reads before process $i$: the underlying execution of the simulation algorithm is extended with the sequence of steps $$write_j^{r+1},read_j^{r+1},write_i^{r+1},read_i^{r+1}.$$ The resulting simulated round is solo for process $j$, and hence $c(E_1') =1$.
Therefore, starting from an empty execution, one can inductively construct $2^R$ executions of Algorithm~\ref{alg:constantISsim}, each simulating a distinct IS execution in which every process performs $R$ rounds. 
\end{proof}

The next proposition states the correctness of the simulation:
\begin{proposition}
  \label{prop:lcis:sim}
  Let $\Delta \geq 2$ and  $R > 0$, and  $\m{P}$ be a labelling protocol for the IS model.  Algorithm~\ref{alg:constantISsim} simulates executions of $\m{P}$ of length at most $R$ using two registers of size ${O(\log \Delta + b\Delta)}$, where $b$ is the number of bits written in each round by $\m{P}$. The step complexity is $O(R)$ and all executions of the algorithms generate in total $\Omega(2^R)$ distinct IS executions. 
\end{proposition}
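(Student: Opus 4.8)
The plan is to assemble Proposition~\ref{prop:lcis:sim} from the sequence of lemmas already established for Algorithm~\ref{alg:constantISsim}, and then to note that, together with Lemma~\ref{lem:lcis:fromlabeltoeps}, it yields Theorem~\ref{thm:lcis}. The proof is essentially a bookkeeping argument: there is no new combinatorial content, only a careful statement of which register size, step count, and execution count follow from the lemmas.

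First I would address \emph{correctness}, i.e.\ that every execution of Algorithm~\ref{alg:constantISsim} simulates a genuine IS execution of $\m{P}$. This is exactly Lemma~\ref{lem:lcis:views}: for every process $i$ that completes round $r$, the computed view $view_i^r$ satisfies self-containment ($view_i^r[i]=v_i^r$), validity ($view_i^r[j]\in\{v_j^r,\bot\}$), and, when both processes complete round $r$, one of the two views equals the full vector $[v_1^r,v_2^r]$ — this is precisely the immediate-snapshot property for two processes (one process is solo, or both see each other). Hence the sequence of views $(view_i^1,\dots,view_i^{r_i})$ produced at each process $i$ is a legal sequence of IS views, so the labels returned at line~\ref{lcis:returnlabel} via $\textsc{label}(r,i,view^1,\dots,view^r)$ are labels that $\m{P}$ would indeed output at the end of that IS execution. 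One point to spell out: the simulated execution may be \emph{non-uniform}, with $r_1\neq r_2$; but since a process quits only after $\Delta$ consecutive solo rounds (line~\ref{lcis:simend}) or after $R$ rounds, and by Lemma~\ref{lem:lcis:nottoofarahead} the round numbers differ by at most $\Delta$, this is a valid (partial) IS execution of length at most $R$.

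Next I would bound the \emph{resources}. For register size: the register $R_i$ holds a ring position $x\in\{0,\dots,2\Delta\}$, costing $O(\log\Delta)$ bits, and a history $H$ of the last $\Delta+1$ values written by $\m{P}$, costing $(\Delta+1)\,b = O(b\Delta)$ bits, where $b$ is the per-round bit count of $\m{P}$; hence $O(\log\Delta + b\Delta)$ bits total, independent of $R$. For step complexity: the for-loop runs at most $R$ times, and each iteration performs a constant number of shared-memory operations (one $\mathsf{write}$ at line~\ref{lcis:write}, one $\mathsf{read}$ at line~\ref{lcis:read}), so each process takes $O(R)$ steps. For the execution count: Lemma~\ref{lem:lcis:exp} gives, for $\Delta\geq 2$, a family of $2^R$ distinct IS executions of length exactly $R$ each realized by some execution of Algorithm~\ref{alg:constantISsim}; applying $\m{P}$'s labelling function to the two solo-extreme states of these executions shows at least $\Omega(2^R)$ distinct labels are generated across all runs. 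Assembling these three estimates completes the proof of Proposition~\ref{prop:lcis:sim}.

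There is no genuine obstacle here since the substantive work is in Lemmas~\ref{lem:lcis:nottoofarahead}--\ref{lem:lcis:exp}; the only thing requiring care is the bookkeeping around the non-uniform length of simulated executions and making sure ``length $R$'' in Lemma~\ref{lem:lcis:exp} is compatible with ``length at most $R$'' in the Proposition statement (the $\Omega(2^R)$ count comes from the uniform-length subfamily, while general runs may be shorter). Finally, to conclude Theorem~\ref{thm:lcis}: instantiate $\m{P}$ with the one-bit labelling protocol of Lemma~\ref{lem:1bit_disambiguation} (so $b=1$), fix $\Delta=2$, and choose $R=\lceil\log_2\frac1\epsilon\rceil$; then Proposition~\ref{prop:lcis:sim} gives constant-size registers ($O(1)$ bits), $O(\log\frac1\epsilon)$ step complexity, and $\Omega(2^R)=\Omega(\frac1\epsilon)$ distinct labels, and Lemma~\ref{lem:lcis:fromlabeltoeps} turns this into a wait-free $\epsilon$-agreement protocol for two processes with the claimed complexity (using the write-once input registers to communicate the binary inputs).
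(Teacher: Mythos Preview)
Your proposal is correct and follows essentially the same approach as the paper: invoke Lemma~\ref{lem:lcis:views} for validity of the simulated IS execution, Lemma~\ref{lem:lcis:exp} for the $\Omega(2^R)$ count, and read off the $O(R)$ step complexity and $O(\log\Delta+b\Delta)$ register size directly from the code. Your write-up is more detailed than the paper's (which is a terse paragraph), and you additionally sketch the derivation of Theorem~\ref{thm:lcis}, which matches the paper's own proof of that theorem as well.
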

\begin{proof}
  By the code, process $i$ writes and reads once to simulate one round of the IS model. The step complexity per process is thus $O(R)$. By Lemma~\ref{lem:lcis:exp}, at least $2^R$ distinct executions are simulated in total by the algorithm. It follows from Lemma~\ref{lem:lcis:views} that each of these executions is a valid IS execution. Each register stores a position on the ring, which requires $\lceil\log(2\Delta+1)\rceil$ bits and an history of  $\Delta+1$ values written by the labelling protocol. Hence the size of each register is   $O(\log \Delta + b\Delta)$.
\end{proof}

\subsection{From simulating IS executions to fast $\epsilon$-agreement}

We finally piece together the findings of this section.

\begin{theorem}
  \label{thm:lcis}
  There exists a wait-free $\epsilon$-agreement protocol for two processes with step complexity $O(\log\frac{1}{\epsilon})$ using two registers  of size~$6$. 
\end{theorem}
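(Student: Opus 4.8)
The plan is to combine the two main building blocks developed in this section: the reduction from a labelling protocol to $\epsilon$-agreement (Lemma~\ref{lem:lcis:fromlabeltoeps}), and the constant-register simulation of a subset of IS executions (Proposition~\ref{prop:lcis:sim}), instantiated with the $1$-bit-per-round labelling protocol $\m{P}$ of~\cite{Delporte-Gallet20} (Lemma~\ref{lem:1bit_disambiguation}). The key point is that $\m{P}$ writes only $b=1$ bit per round, so fixing the solo-run cutoff at the smallest admissible value $\Delta = 2$, Proposition~\ref{prop:lcis:sim} says each of the two shared registers needs size $O(\log\Delta + b\Delta) = O(1)$; chasing the constants, a register holds a position on a ring of size $2\Delta+1 = 5$, costing $\lceil\log 5\rceil = 3$ bits, together with a history of $\Delta+1 = 3$ one-bit values, i.e. $3$ more bits, for a total of $6$ bits. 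That accounts for the ``size~$6$'' in the statement.

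First I would set up the simulation: run Algorithm~\ref{alg:constantISsim} with parameter $\Delta = 2$ on the labelling protocol $\m{P}$ of~\cite{Delporte-Gallet20}, for $R$ rounds, where $R$ is chosen so that $\frac{1}{2^R} \leq \epsilon$, i.e. $R = \lceil \log_2 \frac{1}{\epsilon}\rceil$. By Proposition~\ref{prop:lcis:sim}, all executions of this algorithm generate in total $\Omega(2^R)$ distinct IS executions, each of which is a valid IS execution of $\m{P}$ (Lemma~\ref{lem:lcis:views} guarantees the views are genuine snapshots), and each process performs at most $R$ simulated rounds using $O(R)$ steps. Feeding this into Lemma~\ref{lem:lcis:fromlabeltoeps} with the growing function $f(r) = \Omega(2^r)$ — more precisely, the number of distinct labels globally generated after at most $r$ simulated rounds is $\Omega(2^r)$ — yields a wait-free $\epsilon$-agreement protocol with step complexity $O(R) = O(\log\frac1\epsilon)$ for every $\epsilon \geq \frac{1}{\Omega(2^R)}$, hence in particular for the target $\epsilon$. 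The validity and $\epsilon$-closeness of decisions are exactly the value-assignment argument given just before Lemma~\ref{lem:lcis:fromlabeltoeps}: assign to each round-$R$ label a value in $\{0,\frac1{2^R},\dots,1\}$ so that solo-extremes get $0$ and $1$ and adjacent labels in any execution get values $\frac{1}{2^R}$ apart, then decide according to the sign of the input and whether the other input was observed; I would simply cite that discussion rather than re-derive it.

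The one genuine wrinkle to be careful about is the bookkeeping of constants, since the statement commits to the exact number $6$: I need that $\m{P}$ writes exactly one bit per round (true by Lemma~\ref{lem:1bit_disambiguation}), that $\Delta = 2$ is allowed (the hypothesis $\Delta \geq 2$ of Lemmas~\ref{lem:lcis:exp} and Proposition~\ref{prop:lcis:sim} is tight here), that the ring has $2\Delta+1 = 5$ nodes encodable in $3$ bits, and that the history buffer stores $\Delta+1 = 3$ bits; input registers are assumed available so that the input is not re-communicated each round and therefore does not inflate the communication registers. A mild subtlety worth a sentence: Lemma~\ref{lem:lcis:fromlabeltoeps} is phrased for a labelling protocol $\m{P}$ directly in the IS model, whereas here we have a simulated-IS labelling procedure running on constant registers; the point is that the simulation is transparent — a process running Algorithm~\ref{alg:constantISsim} ends with exactly the label $\m{P}$ would output at the end of the (valid) simulated IS execution, so the value-assignment and the agreement/validity analysis carry over verbatim, the only change being that the $3^r+1$ bound of~\cite{Delporte-Gallet20} is replaced by the weaker-but-sufficient $\Omega(2^r)$ bound on the number of simulated executions. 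I expect this constant-chasing to be the only place where care is needed; everything else is assembled from results already proved in the section.
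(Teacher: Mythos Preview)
Your proposal is correct and follows essentially the same route as the paper: instantiate Proposition~\ref{prop:lcis:sim} with the $1$-bit labelling protocol of~\cite{Delporte-Gallet20} at $\Delta=2$, invoke Lemma~\ref{lem:lcis:fromlabeltoeps} with $R=O(\log\frac{1}{\epsilon})$, and chase constants to get $\lceil\log 5\rceil + 3 = 6$ bits. Your remark that Lemma~\ref{lem:lcis:fromlabeltoeps} is being applied to a simulated-IS labelling procedure rather than a genuine IS protocol is a point the paper leaves implicit, so your proof is if anything slightly more careful.
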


\begin{proof}
  Let $\epsilon \in (0,1]$ and let $R$ be an integer. 
  By Lemma~\ref{lem:1bit_disambiguation}, there exists a $2$ processes labelling protocol $\m{P}$ for  the IS model in which one bit is written by each process in each round. The protocol generates $3^{r+1}$ distinct labels over all $r$-rounds executions.
  Proposition~\ref{prop:lcis:sim} shows that one can simulate $\m{P}$ using constant size registers. Simulating $R$ rounds of $\m{P}$ requires  $O(R)$ steps per process, and the total number of distinct IS executions  generated by the simulation is  $\Omega(2^R)$. Since protocol $\m{P}$ assign a distinct label to each distinct local state, the total number of labels generated over all executions of the simulation is $\Omega(2^R)$. 
  Finally, we apply Lemma~\ref{lem:lcis:fromlabeltoeps} to obtain a wait-free $1/2^R$-agreement protocol with step complexity $O(R)$. By choosing $R = O(\log(\frac{1}{\epsilon}))$, the resulting protocol solves $\epsilon$-agreement, uses two register of constant size (in addition to input registers) and has $O(\log\frac{1}{\epsilon})$ complexity.

  In the simulation, the size of each  register is $\left\lceil\log(2\Delta+1)\right\rceil + b(\Delta+1)$, where $b$ is the number of bits communicated by the simulated labelling protocol in each round. 
Setting $\Delta = 2$ in the simulation, and as $\m{P}$ communicate one bit per rounds, two registers of $6$ bits suffice. 
\end{proof}



\section{Conclusion}
\label{sec:concl}

In this paper, we have studied the standard  read/write shared-memory model where $n$ asynchronous crash-prone processes communicate using  one
 single-writer/multi-reader register per process.  For the case where the maximum  number $t$ of crash failures satisfies $1 \leq t < \frac{n}{2}$, we have showed that registers of size $O(t)$ are sufficient. So, for $t=O(1)$, the $t$-resilient model with constant-size registers is universal. If fact, we have shown that even 1-bit registers are universal for the case of a single failure in 2-process systems, using a novel
  2-process approximate agreement algorithm.
 On the other hand, when a majority of the processes can crash, the size of the registers must depend not only on the number $n$ of processes, but also on the task itself. 
This line of work open several avenues for research. In particular, we underline the following open problem: 

Several interesting questions are left for future work, including resolving the case of $t=n/2$,
but perhaps the most significant is  finding
 a  characterization (topological or by other means) of the tasks that are wait-free solvable in the read/write shared-memory model with constant-size registers, or with registers of $f(n)$ bits for some given function~$f$.
%
%

\bibliographystyle{plainurl}
\bibliography{bib}



\end{document}